\title{Annotations for Sparse Data Streams}
\author{
Amit Chakrabarti%
\thanks{Department of Computer Science, Dartmouth College. Supported in part by NSF grant CCF-1217375.}
\and
Graham Cormode%
\thanks{AT\&T Labs---Research.}
\and
Navin Goyal%
\thanks{Microsoft Research India.}
\and
Justin Thaler%
\thanks{School of Engineering and Applied Sciences, Harvard University. Supported by a NSF Graduate Research Fellowship and NSF grants CNS-1011840 and CCF-0915922.}
}
\newenvironment{proofof}[1][Proof: ]{\noindent \textbf{#1}}{\qed\medskip}
\newcommand{\E}{\mathbb{E}}
\newcommand{\fn}{F}
\newcommand{\cF}{\mathcal{F}}
\newcommand{\s}{m}
\newcommand{\length}{N}
\newcommand{\footprint}{M}
\newcommand{\poly}{\mathrm{poly}}
\newcommand{\ceq}{\subseteq}
\newcommand{\pq}{{\sc PointQuery}\xspace}
\newcommand{\hh}{{\sc HeavyHitters}\xspace}
\newcommand{\select}{{\sc Selection}\xspace}
\newcommand{\inject}{{\sc Injection}\xspace}
\newcommand{\subinj}{{\sc SubInjection}\xspace}
\newcommand{\multiindex}{{\sc MultiIndex}\xspace}
\newcommand{\subftwo}{{\sc Sub$F_2$}\xspace}
\DeclareMathOperator{\err}{err}
\DeclareMathOperator{\hcost}{hcost}
\DeclareMathOperator{\vcost}{vcost}
\DeclareMathOperator{\MA}{MA}
\DeclareMathAlphabet{\mathpzc}{OT1}{pzc}{m}{it}
\DeclareMathAlphabet{\mathcal}{OMS}{cmsy}{m}{n}
\newcommand{\out}[2]{\operatorname{out}^{#1}(#2)}
\newcommand{\ascheme}{scheme\xspace}
\newcommand{\aschemes}{schemes\xspace}
\newcommand{\aScheme}{Scheme\xspace}
\newcommand{\aSchemes}{Schemes\xspace}
\newcommand{\scheme}[1]{$#1$-scheme\xspace}
\newcommand{\oscheme}[1]{online $#1$-scheme\xspace}
\newcommand{\pscheme}[1]{prescient $#1$-scheme\xspace}
\newcommand{\eat}[1]{}
\renewcommand{\b}{\{0,1\}}
\newcommand{\F}{\mathbb{F}}
\newcommand{\help}{\mathfrak{h}}
\newcommand{\rhelp}{{\help,r_P}}
\newcommand{\alg}{\mathcal{A}}
\newcommand{\cP}{\mathcal{P}}
\newcommand{\fail}{\bot}
\newcommand{\bx}{\mathbf{x}}
\newcommand{\by}{\mathbf{y}}
\newcommand{\disj}{\textsc{disj}\xspace}
\newcommand{\kdisj}{$\s$-\textsc{disj}\xspace}
\newcommand{\idx}{\textsc{index}\xspace}
\newcommand{\fk}{\ensuremath{F_k}}
\newcommand{\universe}{{\mathcal U}}
\newcommand{\etal}{{\textit{et al.}}\xspace}
\newcommand{\ignore}[1]{}
\newcommand{\provisionallyremove}[1]{}
\newcommand{\anncost}{c_a}
\newtheorem{theorem}{Theorem}[section]
\newtheorem{lemma}[theorem]{Lemma}
\newtheorem{corollary}[theorem]{Corollary}
\newtheorem{proposition}[theorem]{Proposition}
\theoremstyle{definition}
\newtheorem{definition}[theorem]{Definition}
\newtheorem{remark}{Remark}
\setlist[enumerate]{topsep=5pt,itemsep=1pt}
\newcommand\squeezepar{\@startsection{paragraph}{4}{\z@}{1.5ex \@plus1ex \@minus.2ex}{-1em}{\normalfont\normalsize\bfseries}}
\renewcommand{\paragraph}[1]{\squeezepar{{#1}}}
\begin{document}
 
\maketitle

\begin{abstract}

Motivated by the surging popularity of commercial cloud computing services, a
number of recent works have studied \emph{annotated data streams} and variants
thereof.
In this setting, a computationally weak \emph{verifier} (cloud user), lacking
the resources to store and manipulate his massive input locally, accesses a
powerful but untrusted \emph{prover} (cloud service).  The verifier must work
within the restrictive data streaming paradigm.  The prover, who can
\emph{annotate} the data stream as it is read, must not just supply the final
answer but also convince the verifier of its correctness. Ideally, both the amount
of annotation from the prover and the space used by the verifier should be
sublinear in the relevant input size parameters.

A rich theory of such algorithms---which we call \emph{schemes}---has started
to emerge. Prior work has shown how to leverage the prover's power to
efficiently solve problems that have no non-trivial standard data stream
algorithms.  However, even though optimal schemes are now known for several
basic problems, such optimality holds only for streams whose length is commensurate with the
size of the \emph{data universe}.  In contrast, many real-world data sets are
relatively \emph{sparse}, including graphs that contain only $o(n^2)$ edges, and IP
traffic streams that contain much fewer than the total number of possible IP
addresses, $2^{128}$ in IPv6.

Here we design the first annotation schemes that allow both the annotation and
the space usage to be sublinear in the total number of stream {\em updates}
rather than the size of the data universe.  We solve significant problems,
including variations of \idx, \textsc{set-disjointness}, and
\textsc{frequency-moments}, plus several natural problems on graphs. On the
other hand, we give a new lower bound that, for the first time, rules out
smooth tradeoffs between annotation and space usage for a specific problem.
Our technique brings out new nuances in Merlin--Arthur communication
complexity models, and provides a separation between online versions of the MA
and AMA models. 


\end{abstract}

\section{Introduction} \label{sec:intro}

The surging popularity of commercial cloud computing services has rendered the following scenario
increasingly plausible.
A business---call it AliceSystems---processes billions or trillions of transactions a day. 
The volume is sufficiently high that AliceSystems cannot or will not
store and process the transactions on its own. Instead, it 
offloads the processing to a commercial cloud computing service.

The offloading of any computation raises issues of trust. AliceSystems may be concerned
about relatively benign errors: perhaps the cloud dropped some of the transactions, executed a buggy algorithm,
or experienced an uncorrected hardware fault. Alternatively, AliceSystems
may be more cautious
and fear that the cloud operator is deliberately deceptive or has been externally compromised.
Either way, each time AliceSystems poses a query to the cloud, it may demand that the cloud provide not only the answer but also some proof that the returned answer is correct.

Motivated by this scenario, a number of recent works have studied annotated data streams and their
variants~\cite{icalp, esa, itcs, vldb, raz, klauck}. 
In this setting, a computationally weak \emph{verifier} (modeling AliceSystems in the above scenario),
who lacks the resources to store the entire input locally, is 
given access to a powerful but untrusted \emph{prover} (modeling the cloud computing service).
The verifier must execute within the confines of the restrictive \emph{data streaming} paradigm, 
i.e., it must process the input sequentially in whatever order it arrives, using space that is substantially sublinear 
in the total size of the input. The prover is allowed to annotate the data stream as it is read, with the goal
of convincing the verifier of the correct answer.
The streaming restriction for the verifier fits the cloud computing setting well, as the verifier's streaming pass
over the input can occur while uploading data to the cloud.

Prior work \cite{AW, icalp, esa, itcs, klaucksolo, klauck} has provided considerable
understanding of the power of annotated data streams, revealing a surprisingly rich theory.
A number of fundamental problems that possess no non-trivial algorithms
in the standard streaming model do have efficient \emph{schemes} 
when the data stream may be annotated by a prover: the term ``scheme'' 
refers to an algorithm involving verifier-prover interaction as above.
By exploiting powerful algebraic techniques originally developed
in the literature on interactive proofs~\cite{LundFKN92,Shamir92}, these works have achieved
essentially optimal tradeoffs between annotation size and the space usage of the verifier
for problems ranging from frequency moments to bipartite perfect matching. 

However, these schemes are only optimal for streams for which the 
total number of updates is large relative to the size of the \emph{data universe}. 
In contrast, many real-world data sets are \emph{sparse}: for example, many real-world graphs, though large, contain much fewer than the maximum possible number $\binom{n}{2}$ of edges, and IP traffic streams contain much fewer than the total number of possible IP addresses, $2^{128}$ in IPv6.

In this paper, we give the first schemes in the annotations model that
allow both the annotation size and space usage to be 
\emph{sublinear in 
the number of items with non-zero frequency in the data stream}, rather
than the size of the data universe $n$.
On the negative side,
we also give a new lower bound that for the first time rules out smooth tradeoffs between 
annotation size and space usage for a specific problem. The latter result is
derived from a new lower bound in the Merlin--Arthur (MA) communication model
that may be of independent interest.

\subsection{Related Work} \label{sec:related}

Aaronson and Wigderson \cite{AW} gave a beautiful MA communication protocol
for the \textsc{set-disjointness} problem (henceforth, $\disj$) using
algebraic techniques analogous to those in the famous ``sum-check protocol''
from the world of interactive proofs and probabilistically checkable proofs \cite{LundFKN92}.
Their protocol is nearly optimal, essentially matching a lower bound of Klauck
\cite{klaucksolo}.  The Aaronson--Wigderson protocol has served as the
starting point for many schemes for annotated data streams. We will refer to
such schemes as \emph{sum-check schemes}\/; a typical example is
Proposition~\ref{prop:dense} in this work.

Aaronson \cite{aaronson} studied the hardness of the \idx problem in a restricted version of the MA communication model, as well as in a quantum variant of this model. 
His classical model is similar to the online MA communication model that we consider.
Annotated data streams were introduced by Chakrabarti \etal\ \cite{icalp}, and studied further by Cormode \etal\ \cite{esa}.
These two papers gave essentially optimal annotation schemes for problems ranging from exact computation of Heavy Hitters and Frequency Moments
to graph problems like Bipartite Perfect Matching and Shortest $s$-$t$
Path. Cormode, Thaler and Yi~\cite{vldb} later extended the annotations
model to allow the prover and verifier to have a \emph{conversation}, and dubbed
this interactive model \emph{streaming interactive proofs}. They demonstrated that  streaming interactive proofs
can have exponentially smaller space and communication costs than annotated data streams, and showed that
a number of powerful protocols from the literature on interactive proofs can be made to work with streaming verifiers; 
in particular, this applies to a powerful general-purpose interactive proof protocol due to Goldwasser, Kalai, and Rothblum \cite{Goldwasser:Kalai:Rothblum:08}. 
Cormode, Mitzenmacher, and Thaler \cite{itcs} implemented a number of protocols in both the annotated data streams and
streaming interactive proof settings, demonstrating genuine scalability in many cases. In particular, they developed an
implementation of the Goldwasser \etal protocol \cite{Goldwasser:Kalai:Rothblum:08}
that approaches practicality. Most relevant to our work on annotated data streams, Cormode, Mitzenmacher, and Thaler also used sophisticated FFT 
algorithms to drastically reduce the prover's runtime in the sum-check schemes,
which we make frequent use of. 

Two recent works have considered variants of the annotated data stream model. Klauck and Prakash \cite{klauck} study a restricted version of the annotations model in which 
the annotation must essentially end by the final stream update. 
Gur and Raz \cite{raz} give protocols for a class of problems in a model that is similar to annotated data streams, but more powerful in that the verifier
has access to both public and private randomness. 
This corresponds to the AMA communication model. 
We consider protocols in this model in Section \ref{sec:amastream}.

Early work on interactive proof systems studied the power of space-bounded verifiers (the survey by Condon \cite{condon} 
provides a comprehensive overview), but many of the protocols
developed in this line of work require the verifier to store the input, and therefore do not work 
in the annotations model, where the verifier must be streaming. An exception is work by Lipton~\cite{lipton}, who
relied on using fingerprinting techniques to allow a log-space streaming verifier to ensure that the prover correctly plays back the transcript 
of an algorithm in an appropriate computational model. 
This approach does not lead to
protocols with sublinear annotation length.
More recently, Das Sarma \etal\ studied the ``best order streaming model,'' which can be thought of 
as the annotations model where the annotation is restricted to be a permutation of the input \cite{bestorder}.


\subsection{Overview of Results and Techniques}

We give an informal overview of our results and the techniques we use to obtain them.
Throughout, $n$ will denote the size of the data universe and $\s$ the number of items with non-zero frequency at the end
of a data stream (we refer to $\s$ as the ``sparsity'' of the stream). A scheme in which the streaming 
verifier uses at most $c_v$ bits of storage and requires at most $\anncost$ bits of annotation from the prover is called a $(\anncost, c_v)$-scheme. Section~\ref{sec:models} defines our models of computation carefully and sets up terminology.
\medskip

{\bf Section \ref{sec:pointquery}} contains our first set of results. We begin by precisely characterizing the complexity of the sparse \pq\ problem---a natural variant of the well-known \idx\ problem from 
communication complexity---giving an $(x \log n, y\log n)$-scheme whenever
$xy \ge \s$. We give similar upper bounds for the related problems \select\ and \hh.
We also prove a lower bound showing that \emph{any} $(c_a, c_v)$-scheme for these problems requires $c_a c_v = \Omega(\s \log(n/\s))$, improving by a $\log(n/\s)$
factor over lower bounds that follow from prior work on ``dense'' streams. By a dense stream we mean one where $n$ is not much larger than $\s$. This $\log(n/\s)$ factor may seem minor, but a
striking consequence is that the (very) sparse \idx\ problem---where Alice's $n$-bit string has
Hamming weight $O(\log n)$---has one-way randomized communication complexity that is 
within a logarithmic factor of its online MA communication complexity.
This implies that  no non-trivial tradeoffs between Merlin's and Alice's message sizes
are possible for this problem; to our knowledge this is the first
problem that provably exhibits this phenomenon. 

Our scheme for sparse \pq\ 
relies on universe reduction: the prover succinctly describes a mapping $h: [n] \rightarrow [r]$ that
maps the input stream, which is defined over the huge data universe $[n]$,
down to a derived stream defined over a smaller universe $[r]$. By design,
if the prover is honest and the mapping $h$ does not cause ``too many collisions,'' then the answer
on the original
stream can be determined from the answer on the derived stream. We then efficiently apply 
known \aschemes\ for dense streams to the derived stream.
 
For our lower bound in Section \ref{sec:pointquery}, 
we give a novel reduction from the standard (dense) \idx\ problem to sparse \idx\ 
that is tailored to the MA communication model. We then apply known lower bounds for dense \idx.
Our technique also gives what is to our
knowledge the first polynomial separation between the online MA and
AMA communication complexities of a specific (and natural) problem.
\medskip

For clarity, the remainder of this overview omits factors logarithmic in $n$ and $\s$ when stating the costs of schemes. Though these factors are
important for Section~\ref{sec:pointquery} (the consequences of our lower bound being most significant when $n = \s^{\omega(1)}$), we anticipate
that in practice $n$ and $\s$ will usually be polynomially related.  \medskip

{\bf Sections \ref{sec:disj} and \ref{sec:of2}} contain our most interesting and technically involved results, namely, efficient schemes for \textsc{size}-$\s$-\textsc{set-disjointness} (henceforth, \kdisj) and $k$th Frequency Moments (henceforth, \fk). 
The schemes here are substantially more complex than those in
Section \ref{sec:pointquery} and represent the main technical contributions of this paper.

Section \ref{sec:disj} gives $(\s^{2/3}, \s^{2/3})$-schemes for both
problems, but the schemes rely on 
``prescient'' annotation, i.e., annotation
provided at the start of the stream that depends on the stream itself. 
The even more complex schemes of Section \ref{sec:of2} eliminate the need for prescient annotation 
and also achieve much more general tradeoffs between annotation length and space usage. 
Specifically, Section \ref{sec:of2} gives $(\s c_v^{-1/2}, c_v)$-schemes for $m$-\disj and \fk\ for any 
$c_v < \s$. Notice that one recovers the costs achieved in Section \ref{sec:disj} by setting
$c_v=\s^{2/3}$.

These schemes are the first for these problems that allow both the annotation length and space usage to be sublinear in $\s$.
At a very high level, there are three interlocking ideas that allow us to 
achieve this.

\begin{enumerate}
\item The first idea is a careful application of universe
  reduction. 
We were able to use a simple version of this idea to derive the upper bound for the \pq\ problem in Section \ref{sec:pointquery},
but in the case of \disj\ and \fk\, the universe-reduction mapping $h:
[n] \rightarrow [r]$ specified by the prover is more complicated, 
and requires refinement in the form of the additional ideas described
below. 

\item The second idea is addressed to ensuring that the prover performed the universe-reduction step
in an honest manner, in the sense that the answer on the original stream can indeed be determined from the answer
on the derived stream. The difficulty of ensuring $P$ is honest varies depending on the structure of the problem at hand.
For $\fk$, the verifier has to make sure that the universe-reduction
mapping $h$ is injective on the items appearing in the data stream. This requires developing an efficient way for $V$
to detect collisions under $h$, even though $V$ does not have the space to store all of the values $h(x_i)$ for stream updates $x_i$.
For \kdisj, a notion weaker than injectiveness is sufficient. 

\item The third idea pertains to allowing $P$ to specify the universe-reduction mapping $h$ \emph{online}.
That is, for many problems it would be much simpler if $P$ could 
determine the mapping $h$ in advance i.e. if $P$ could be prescient,
and send $h$ to $V$ at the start of the stream so that $V$ can determine the derived ``mapped-down'' stream on her own (this is the approach taken in Section \ref{sec:disj}).
When $P$ must specify $h$ in an online fashion, additional insight is required. At a high level, our approach is to have $P$
specify a ``guess'' as to the right hash function at the beginning of the steam, and retroactively modify the hash function after 
the stream has been observed. The challenging aspect of this approach is to ensure that $P$'s retroactive modification of the hash function
is consistent with the observed data stream, even though $V$ cannot refer back to the stream to enforce this.

We exploit similar ideas to allow $V$ to avoid storing the universe-reduction mapping $h$ herself; this is the key to achieving general tradeoffs between 
annotation length and space usage in Section \ref{sec:of2}.
In some \aschemes, storing this mapping $h$ would be the bottleneck in $V$'s space usage. We show how 
$V$ can store only a \emph{partial} description of $h$, and ask $P$ to fill in the remainder of the description
when necessary. 

\end{enumerate}
\medskip

{\bf Section \ref{sec:graphs}} exploits all of these results, applying them to
several graph problems, including
counting triangles and demonstrating a perfect matching. Our schemes have costs
that depend on the number of edges in the graph, rather than the total number of possible edges,
and demonstrate that the ideas underlying our
\kdisj\ and \fk\ schemes are broadly applicable. We state clearly how our schemes improve over prior work throughout.
\medskip

{\bf Section \ref{sec:genturnstile}} considers a more general stream update 
model, which allows items to have negative frequencies.
These negative frequencies potentially break the ``collision detection'' sub-protocol used in the
previous sections, so we show how to exploit a source of public randomness to
allow these protocols to be carried out. 
Essentially, the public randomness specifies a remapping of the input,
so that the prover is highly unlikely to be able to use negative frequencies to ``hide''
collisions. Because the protocols of Section \ref{sec:genturnstile} require public randomness,
they work in the AMA communication and streaming models, as opposed to the MA models in which all of our other protocols operate.

\section{Models, Notation, and Terminology} \label{sec:models}

Many of the algorithms (schemes) in this paper use randomization in subtle
ways, making it important to properly formalize several models of computation.
We begin with Merlin--Arthur communication models, a topic first studied by
Babai, Frankl and Simon~\cite{BabaiFS86}, which we eventually use to derive
lower bounds. We then turn to annotated data stream models. At the end of the
section we set up some notation and terminology for the rest of the paper.
Some of our discussion in this section borrows from prior work \cite{icalp}. 

\subsection{Communication Models}\label{sec:communicationmodel}

Let $\fn : X \times Y \rightarrow \b$ be a function, where $X$ and $Y$ are
both finite sets.  This naturally gives a $2$-player number-in-hand
communication problem, where the first player, Alice, holds an input $x \in
X$, and the second player, Bob, holds an input $y \in Y$. The players wish to
compute $\fn(x, y)$ by executing a (possibly randomized) communication
protocol that correctly outputs $\fn(x,y)$ with ``high'' probability. In
Merlin--Arthur communication, there is additionally a ``super-player,'' called
Merlin, who knows the entire input $(x, y)$, and can help Alice and Bob by
interacting with them. The precise pattern of interaction matters greatly and
gives rise to distinct models.  Merlin's goal is to get Alice and Bob to
output ``$1$'' regardless of the actual value of $\fn(x,y)$, and so Merlin is
not to be blindly trusted.

One important departure we make from prior work is that {\em we allow Merlin
to use private random coins} during the protocol. Most prior work on MA (and AM)
communication~\cite{BabaiFS86,klaucksolo,klaucksolo2} defined Merlin to be
deterministic, which does not make a difference in the basic setting.  But in
this work we are concerned with ``online MA'' models, where the distinction
does matter, and these online MA models are in close correspondence with the
annotated data stream models that are our eventual topic of study.


\paragraph{MA Communication.} 
In a Merlin--Arthur protocol (henceforth, ``MA protocol'') for $\fn$, Merlin
begins by sends a help message $\help(x,y,r_M)$, using a private random string
$r_M$, that is seen by both Alice and Bob. Then Alice and Bob (the pair that
constitutes the entity ``Arthur'') run a randomized communication protocol
$\cP$, using a public random string $r_A$, eventually outputting a bit
$\out{\cP}{x, y, r_A, \help}$.  Importantly, $r_A$ is not known to Merlin at
the time he sends $\help$.  The protocol $\cP$ is $\delta_s$-sound and
$\delta_c$-complete if there exists a function $\help:X \times Y \times \b^*
\to \b^*$ such that the following conditions hold.
\begin{enumerate}
  \item If $\fn(x, y) = 1$ then 
  $\Pr_{r_M, r_A}[\out{\cP}{x, y, r_A, \help(x, y, r_M)} = 0] \le \delta_c$.
  \item If $\fn(x, y) = 0$ then $\forall\, \help'\in\b^*:\, 
  \Pr_{r_A}[\out{\cP}{x, y, r_A, \help'} = 1] \le \delta_s$.
\end{enumerate}

We define $\err(\cP)$ to be the minimum value of $\max\{\delta_s, \delta_c\}$
such that the above conditions hold. Following~\cite{icalp}, we define the
{\em help cost} $\hcost(\cP)$ to be $1 + \max_{x,y,r_M} |\help(x, y, r_M)|$
(forcing $\hcost \ge 1$, even for traditional Merlin-free protocols), and the
{\em verification cost} $\vcost(\cP)$ to be the maximum number of bits
communicated by Alice and Bob over all $x, y$ and $r_A$.  We define
$\MA_\delta(\fn) = \min\{\vcost(\cP) + \hcost(\cP) : \cP$ is an MA protocol
for $\fn$ with $\err(\cP) \le \delta\}$, and $\MA(\fn) = \MA_{1/3}(\fn)$.

\paragraph{Online MA Communication.}
An online MA protocol is defined to be an MA protocol, as above, but with the
communication pattern required to obey the following sequence. (1) Input $x$
is revealed to Alice and Merlin; (2) Merlin sends Alice a help message
$\help_1(x, r_M)$ using a private random string $r_M$; (3) Input $y$ is
revealed to Bob; (4) Merlin sends Bob a help message $\help_2(x, y, r_M)$; (5)
Alice sends a public-coin randomized message to Bob, who then gives a $1$-bit
output.  We see this model as the natural MA variant of one-way communication,
and the analogy with the gradual revelation of a streamed input should be
obvious.

For such a protocol $\cP$, we define $\hcost(\cP)$ to be $1 + \max_{x,y,r_M}
(|\help_1(x, r_M)| + |\help_2(x,y,r_M)|)$  We define soundness, completeness,
$\err(\cP)$, and $\vcost(\cP)$ as for $\MA$.  Define $\MA^\to_\delta(\fn) =
\min\{\hcost(\cP) + \vcost(\cP): \cP$ is an online MA protocol for $\fn$ with
$\err(\cP) \le \delta\}$ and write $\MA^\to(\fn) = \MA^\to_{1/3}(\fn)$.

\paragraph{Online AMA Communication.} 
An online AMA protocol is a souped-up version of an online MA protocol, where
public random coins can be tossed at the start, before any input is revealed.
The number of such coin tosses is added to the vcost of the protocol. This
models the cost of an initial round of communication between Arthur (i.e.,
Alice + Bob) and Merlin.  Note that the {\em second} public random string,
used when Alice talks to Bob, does not count towards the vcost.

\paragraph{On Merlin's Use of Randomness.}
In an MA protocol, Merlin can deterministically choose a help message that
maximizes Arthur's acceptance probability. However, Merlin cannot do so in the
online MA model, because he does not know the entire input when he talks to
Alice. This is why we allow Merlin to use randomness in these definitions. 

Two recent papers~\cite{icalp,klauck} use ``online MA'' to mean a more
restrictive model where a deterministic Merlin talks only to Bob and not to
Alice. With Merlin required to be deterministic, this communication
restriction is irrelevant, as Merlin cannot tell Alice anything she does not
already know. However, we permit Merlin to be probabilistic, and in this case
we do not know that Merlin can avoid talking to Alice.

As noted earlier, our goal in defining the communication models this way is to
closely correspond to annotated data stream models. In many of our online
schemes (see, e.g., Section~\ref{sec:of2}), the helper provides initial
annotation that specifies a random ``hash'' function, $h$, and the
completeness guarantee of the subsequent protocol depends crucially on $h$
having ``low collision'' properties. Since $h$ must be chosen without seeing
all of the input, such low collision properties cannot be guaranteed by
picking a fixed $h$ in advance.  However, if the helper chooses $h$ at random,
then we do have such guarantees for each fixed input, with high probability.

\subsection{Data Stream Models}\label{sec:streammodel}

We now define our annotated data stream models.  Recall that a (traditional)
data stream algorithm computes a function $\fn$ of an input sequence
$\bx \in\universe^\length$, where $\length$ is the number of stream updates, and $\universe$ is some data universe, such as 
$\b^b$
or
$[n] = \{0, \dots, n-1\}$: the algorithm uses a limited amount of working memory and has
access to a random string. The function $\fn$ may or may not be Boolean.

An annotated data stream algorithm, or a {\em \ascheme}, is a pair $\alg =
(\help,V)$, consisting of a help function $\help:\universe^\length \times \b^*
\to \b^*$ used by a {\em prover} (henceforth, $P$) and a data stream algorithm
run by a {\em verifier}, $V$. Prover $P$ provides $\help(\bx,r_P)$ as
annotation to be read by $V$.  We think of $\help$ as being decomposed into
$(\help_1,\ldots,\help_{\length})$, where the function
$\help_i:\universe^{\length}\to\b^*$ specifies the annotation supplied to $V$
after the arrival of the $i$th token $x_i$. That is, $\help$ acts on
$\bx$ (using $r_P$) to create an {\em annotated stream} $\bx^\rhelp$ defined as follows:
\[
  \bx^{\rhelp} :=
  (x_1,\, \help_1(\bx,r_P),\,
    x_2,\, \help_2(\bx,r_P),\, \ldots,\,
    x_{\length},\, \help_{\length}(\bx,r_P)) \, .
\]
Note that this is a stream over $\universe \cup \b$, of length $\length +
\sum_i |\help_i(\bx,r_P)|$.  The streaming verifier $V$, who uses $w$ bits of
working memory and has oracle access to a (private) random string $r_V$, then
processes this annotated stream, eventually giving an output
$\out{V}{\bx^{\rhelp},r_V}$.


\paragraph{Prescient \aSchemes.}

The \ascheme\ $\alg = (\help,V)$ is said to be $\delta_s$-sound and
$\delta_c$-complete for the function $\fn$ if the following conditions hold:
\begin{enumerate}
  \item For all $\bx\in\universe^\length$, we have $\Pr_{r_P,
  r_V}[\out{V}{\bx^{\rhelp},r_V} \neq \fn(\bx)] \le \delta_c$.
  \item For all $\bx\in\universe^\length$, $\help'=(\help_1', \help_2',
  \ldots, \help_\length') \in (\b^*)^\length $, we have
    $\Pr_{r_V}[\out{V}{\bx^{\help'},r_V} \not\in \{\fn(\bx)\} \cup\{\fail\}] \le \delta_s$.
\end{enumerate}
If $\delta_c=0$, the \ascheme\ satisfies \emph{perfect completeness};
otherwise it has \emph{imperfect completeness}. An output of ``$\fail$''
indicates that $V$ rejects $P$'s claims in trying to convince $V$ to output a
particular value for $\fn(\bx)$.  

We note two important things. First, the definition of a scheme allows the
annotation $\help_{i}(\bx,r_P)$ to depend on the entire stream $\bx$, thus
modeling {\em prescience}: the advice from the prover can depend on data which
the verifier has not seen yet.  Second, $P$ must convince $V$ of the value of
$\fn(\bx)$ {\em for all} $\bx$.  This is stricter than the traditional
definitions of interactive proofs and MA communication complexity (including
our own, above) for decision problems, which place different requirements on
the cases $\fn(\bx) = 0$ and $\fn(\bx) = 1$.  In Section \ref{sec:graphs}, we
briefly consider a relaxed definition of schemes that is in the spirit of the
traditional definition.

We define $\err(\alg)$ to be the minimum value of $\max\{\delta_s, \delta_c\}$
such that the above conditions are satisfied. We define the {\em annotation
length} $\hcost(\alg) = \max_{\bx, r_P} \sum_i |\help_{i}(\bx,r_P)|$, the
total size of $P$'s communications, and the {\em verification space cost}
$\vcost(\alg) = w$, the space used by the verifier $V$.  We say that $\alg$ is
a \pscheme{(\anncost, c_v)} if $\hcost(\alg) = O(\anncost)$, $\vcost(\alg) =
O(c_v)$ and $\err(\alg) \le \frac13$.

\paragraph{Online \aSchemes.}
We call $\alg = (\help,V)$ a $\delta$-error online scheme for $\fn$ if, in
addition to the conditions in the previous definition, each function $\help_i$
depends only on $(x_1, \ldots, x_i)$.  We define error, hcost, and vcost as
above and say that $\alg$ is an {\em \oscheme{(\anncost, c_v)}} if
$\hcost(\alg) = O(\anncost)$, $\vcost(\alg) = O(c_v)$, and $\err(\alg) \le
\frac13$.

Unlike prior work \cite{icalp}, we do not always assume that the universe size
$n$ and stream length $\length$ are polynomially related; it is possible that
$\log \length = o(\log n)$.  Therefore we must be much more careful about
logarithmic factors than in prior work.  We do assume that $\length < n$
always, because our focus is on sparse streams.

Notice that the help function can be made deterministic in a prescient scheme,
but not necessarily so in an online scheme. This is directly analogous to the
situation for MA and online MA communication models, as discussed at the end
of Section~\ref{sec:communicationmodel}.

\paragraph{AMA Schemes.} 
We also consider what we call AMA schemes, where there is a common source of
public randomness, in addition to the verifier's private random coins.  The
AMA \ascheme\ model is identical to the one considered by Gur and Raz
\cite{raz}, who referred to  it as the ``Arthur--Merlin streaming model.''

An online AMA scheme is identical to a (standard) online scheme,
except that the data stream algorithm and help function both have access to a source of public random bits.
The number of random bits used is also counted in both the $\hcost$ and the $\vcost$ of the scheme.

\paragraph{On Practicality and the Plausibility of Prescience.}
Although our definition of a scheme allows annotation to be sent after each
stream update, all the schemes we in fact design in this paper only require
annotation before the start or after the end of the stream.  As a practical
matter, this avoids the need for fine-grained coordination between the
annotation and the data stream.

Online annotation \aschemes\ have the appealing property that the prover need
not ``see into the future'' to execute them; at any time $t$, the prover's
message only depends on stream updates that arrived before time $t$.  While
the online restriction appears most natural, prescient schemes may still be
suitable in some settings, such as when $P$ has already seen the full input
prior to $V$ beginning to read it. 
Consider a volunteer computing scenario where the verifier farms out many
computations to volunteers, and only inspects a particular input if a
volunteer has already looked at that input and claims to have found something
interesting\footnote{See, for example, \url{http://boinc.berkeley.edu/}.}.  In
brief, in some settings the prover may naturally see the input before the
verifier, and in this case a prescient scheme will be feasible.

\subsection{Relationship Between MA Protocols and \aSchemes}

Any prescient (resp. online) $(\anncost,c_v)$-scheme $\alg = (\help, V)$ for a
function $\fn$ can be converted into an MA (resp.~online MA) protocol for
$\fn$ in the natural way: Merlin sends the output of the $i$th help function
$\help_{i}$ to Alice---who receives a prefix of the input stream---or Bob,
depending on which of the players possesses the $i$th piece of the input.
Alice runs the streaming algorithm $V$ on her input as well as any annotation
she received, and sends the state of the algorithm to Bob.  Bob uses this
state to continue running $V$ on his input and the annotation he received, and
then outputs the end result.  The $\hcost$ of this protocol is at most
$\anncost \log \length$, since Merlin has to specify which stream update $i$
each piece of annotation is associated with, and the $\vcost$ of this protocol
is at most $c_v$. Thus, lower bounds on usual (resp. online) MA communication
protocols imply related lower bounds on the costs of prescient (resp. online)
annotated data stream algorithms.

\subsection{Additional Notation and Terminology} \label{sec:notation}

A data stream specifies an input $\bx$ incrementally. 
Typically, $\bx$ can be thought of as a vector (although more generally it may represent a graph or a matrix). 
Each update in the stream is of the form $(i, \delta)$
where $i \in \universe$ identifies an element of the universe, 
and $\delta \in \mathbb{Z}$ describes the change to the frequency of $i$. 
The frequency of universe item $i$ is defined as $f_i(\bx) := \sum_{(j_k, \delta_k) \in \bx: j_k = i} \delta_k$. 
We refer to the vector $f(\bx)=(f_1(\bx), \dots, f_n(\bx))$ as the
\emph{frequency vector} of $\bx$, where $n$ denotes the size of the data universe.

We consider several different update models. In the most general update model, the \emph{non-strict turnstile model}, 
the $\delta$ values may be negative, and so 
$f_i$ may also be negative. In the \emph{strict turnstile model}, the $\delta$ values may be negative, but it is assumed 
that the frequencies $f_i$ always remain non-negative. 
In the \emph{insert-only model}, the $\delta$ values must be non-negative.
Orthogonal to these, 
in the \emph{unit-update} version of each model, the $\delta$ values
are assumed to have absolute value 1.
Each of our results applies to a subset of these models, and
we specify within the statement of each theorem which update models it applies to.

Throughout, $n$ will denote the size of the data universe, $\length$ will denote the total number of stream updates, $\s$ will denote the total number of items with non-zero frequency at the end of the stream,
and $\footprint$ will refer to the total number of distinct items that ever appear within some stream update.
We will refer to $\length$ as the \emph{length} of the stream, to $\s$ as \emph{sparsity} of the stream, and to $\footprint$ as the \emph{footprint} of the stream. 
Notice that it is always the case that $\s \leq \footprint \leq \length$. In the case of insert-only streams, $\s=\footprint$, but for streams in the (strict or general) turnstile models it is possible for $\s$ to be much smaller than $\footprint$. 
Note also that while we talk about ``sparse'' streams, this refers to the relative size of $n$ and $\s$, not the
absolute size. Indeed, we assume that $\s$ is typically large, too
large for $V$ to store the stream explicitly (else the problems can become trivial). 

We often make use of {\em fingerprint} functions of streams, which enable a
streaming verifier to test whether two large streams have the same frequency
vector.  The verifier chooses a fingerprint function $g(\mathbf{x})$ at random
from some family of functions satisfying the property that (over the random
selection of the function $g$), 
\[ 
  \Pr[ g(\bx) = g(\by) \mid f(\bx) \neq f(\by)] < 1/p
\]
for a parameter $p$. Typically, $g(\mathbf{x})$ is an element of a finite
field of size $\poly(p)$, and hence the number of bits required to store the
value $g(\mathbf{x})$ (as well as $g$ itself) is $O(\log p)$.  Further, there
are known constructions of fingerprint functions where $g(\mathbf{x})$ can be
computed in space $O(\log p)$ by a streaming algorithm in the non-strict
turnstile update model \cite{icalp}.

\section{Point Queries, Index, Selection, and Heavy Hitters} \label{sec:pointquery}

\subsection{Upper Bounds}
Our first result is an efficient online annotation scheme for the \pq\ problem, a generalization of the familiar \idx\ problem. 

\begin{definition} In the \pq\ problem,
the data stream $\bx$ consists of a sequence of updates of the form $(i, \delta)$, followed by an 
index $\iota$, and the goal is to determine the frequency $f_{\iota}(\bx) = \sum_{(j_k, \delta_k) \in \bx: j_k = \iota} \delta_k$.
\end{definition}

A prescient $(\log n, \log n)$-\ascheme\ for this problem is trivial as $P$ can just tell $V$ the index $\iota$ at the start of the stream, and $V$
can track the frequency of $\iota$ while observing the stream.
The $\vcost$ can be improved to $O(\log \s)$ if $V$ retains a hashed value
of $\iota$, and tracks the frequency of matching updates. 
The first \ascheme\ has perfect completeness, while the second has
completeness error polynomially small in $\s$. 

The costs of the \ascheme\ below are in terms of the stream sparsity $\s$, and not the stream length $\length$ or the stream footprint $\footprint$;
this is significant if $\s \ll \footprint$, which is the case, e.g., for the well-known straggler and set-reconciliation problems that have been studied in traditional streaming and communication models \cite{straggler, setreconciliation}.
Our lower bound in Theorem \ref{thm:lb} shows our \ascheme\ is
essentially optimal for moderate universe sizes, i.e. when the universe size $n$ is sub-exponential in the sparsity $\s$.

\begin{theorem} 
\label{thm:pointquery}
For any pair $(\anncost, c_v)$ such that $\anncost \cdot c_v \geq \s$,  there is an 
\oscheme{(\anncost  \log n,c_v \log n)}
in the non-strict turnstile update model for the \pq\ problem with imperfect completeness. 
Any online $(\anncost, c_v)$ scheme with $\anncost \geq \log n$ for this problem requires $\anncost \cdot c_v = \Omega(\s \log(n/\s))$.
\end{theorem}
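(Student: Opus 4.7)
I will design a hash-based \emph{universe-reduction} \ascheme. At the very start of the stream, $P$ sends (as part of its annotation) the description of a pairwise-independent hash $h : [n] \to [c_v]$, costing $O(\log n)$ bits. The verifier $V$ picks a secret $\alpha$ uniformly from a field $\mathbb{F}_p$ of prime order $p = \poly(n)$, and maintains, for each bucket $b \in [c_v]$, the polynomial fingerprint $\phi_b := \sum_{i : h(i) = b} f_i \, \alpha^i \bmod p$. This state takes $O(c_v \log n)$ bits and updates under a non-strict turnstile update $(i,\delta)$ via $\phi_{h(i)} \leftarrow \phi_{h(i)} + \delta \cdot \alpha^i$. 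After the query index $\iota$ has been read, $P$ transmits the list $L = \{(i, f_i) : h(i) = h(\iota),\ f_i \neq 0\}$ of all nonzero items colliding with $\iota$ under $h$; $V$ checks that $\sum_{(i,f) \in L} f \cdot \alpha^i \equiv \phi_{h(\iota)} \pmod p$ and, if so, returns the entry of $\iota$ in $L$ (or $0$ if absent). If $|L|$ would exceed $3 c_a$, $P$ instead emits a fixed abort symbol and $V$ outputs $\fail$. Soundness is Schwartz--Zippel against the hidden $\alpha$; completeness is the pairwise-independence estimate $\mathbb{E}_h[|L|] \leq 1 + (\s-1)/c_v \leq 1 + c_a$, after which Markov bounds the abort probability by $1/3$. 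The total annotation is $O(c_a \log n)$ and the space is $O(c_v \log n)$, as claimed.

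\textbf{Lower bound (plan).} I will reduce dense \idx\ on $N := \s \lfloor \log(n/\s) \rfloor$ bits to sparse \pq, then invoke the known online-MA bound $\hcost \cdot \vcost = \Omega(N)$ for dense \idx~\cite{aaronson}. Partition $[n]$ into $\s$ blocks of size $n/\s$ and split Alice's input $z \in \{0,1\}^N$ into $\s$ chunks $z^{(1)}, \ldots, z^{(\s)}$ of length $\log(n/\s)$ each. Alice's stream consists of one unit insertion per block, at the offset $\mathrm{bin}(z^{(b)})$ inside block $b$; the resulting frequency vector has exactly $\s$ nonzero entries. Given Bob's input $j \in [N]$, he decomposes $j$ into the block index $b$ and the bit position $k$ within the chunk. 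In the reduced protocol, Merlin augments the \pq\ annotation by also sending Bob a candidate $v \in \{0,1\}^{\log(n/\s)}$ claimed to be $z^{(b)}$, together with the \pq\ annotation appropriate to the query $\iota := (b, \mathrm{bin}(v))$; Bob runs the \pq\ verifier on $\iota$ and outputs $v_k$ if it accepts with $f_\iota = 1$, and $0$ otherwise. Honest Merlin sets $v = z^{(b)}$, making $f_\iota = 1$ and $v_k = z_j$; if $z_j = 0$, then any $v$ with $v_k = 1$ differs from $z^{(b)}$, so $f_\iota = 0$ and \pq-soundness forces the verifier to reject (up to its own small error). The reduction adds only $\log(n/\s)$ bits to $\hcost$ and preserves $\vcost$, so $(c_a + \log(n/\s)) \cdot c_v = \Omega(\s \log(n/\s))$. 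The hypothesis $c_a \geq \log n$ absorbs the additive $\log(n/\s)$ term and yields the stated bound.

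\textbf{Main subtleties.} For the upper bound, the delicate point is that $h$ must be chosen by $P$, not by $V$: without public randomness in the online-scheme model, $V$ has no channel through which to convey $h(\iota)$ back to $P$. One must therefore verify separately that (i) an adversarial $h$ cannot compromise soundness, which follows because the field element $\alpha$ remains private to $V$, and (ii) an honest, uniformly random $h$ produces a small bucket with high probability, which is exactly the Markov estimate above. For the lower bound, the only point requiring care is aligning the reduction with the online MA communication pattern: the extra block-value $v$ must reach Bob in Merlin's round-to-Bob so that Bob can assemble $\iota$ \emph{before} receiving Alice's one-way message. Since the \pq\ protocol already issues its $\help_2$ to Bob at exactly this point, the two messages can be bundled without re-ordering the template.
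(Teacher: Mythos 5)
Your proposal is correct and follows essentially the same route as the paper for both parts: the upper bound is the same prover-specified pairwise-independent universe reduction with per-bucket fingerprints, a replay of the colliding bucket, and a Markov bound for imperfect completeness; the lower bound is the same block-encoding reduction from dense \idx\ on $\s\log(n/\s)$ bits in which Merlin names the location of the $1$ inside Bob's block, combined with the known $\Omega(n')$ online MA bound for dense \idx. The only differences are cosmetic (explicit polynomial fingerprints, constants $3$ vs.\ $10$ in the abort threshold, and phrasing the target as \pq\ rather than sparse \idx).
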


\begin{proof} 
$V$ requires $P$ to specify at the start of the stream a hash function $h: [n] \rightarrow [c_v]$. $V$ requires $h$ to have description length $O(\anncost )$, rejecting if this is not the case. 
We define the derived streams $\bx^j \in \universe^\length$ based on $h$: we set 
$\bx^j_k = x_k$ iff $h(x_k)=j$, and 0 otherwise. 
Intuitively, the hash function $h$ partitions the stream updates in
$\bx$ into $c_v$ disjoint buckets, and the vector $\bx^j$ describes the
contents of the $j$th bucket. 
$V$ maintains fingerprints over a field of size $\poly(n)$ of each of the 
$c_v$ different $\bx^{j}$ vectors. 

At the end of the stream, given the desired index $\iota$, $P$
provides a description 
of
the (claimed) frequency vector in the $h(\iota)$th derived stream,
$f(\bx^{h(\iota)})$. $V$ computes a fingerprint of the claimed frequency vector, and compares it to the fingerprint she computed from the data stream, accepting if
and only if the fingerprints match.
Since each $\bx^j$ is sparse in expectation, the cost of this description can be low: provided $h$ does not map more than $O(\anncost )$ items with non-zero frequency to $h(\iota)$, $P$
can just specify the item id and frequency of the items with non-zero frequency in $f(\bx^{h(\iota)})$. In this case, the annotation size is just 
$O(\anncost  \log n)$. 
If $P$ exceeds this amount of annotation, $V$ will halt and reject
(output $\fail$). 


Soundness follows from the fingerprinting guarantee: if $P$ does not honestly provide $\bx^{h(\iota)}$, 
$V$'s fingerprint of $\bx^{h(\iota)}$ computed from the data stream will not match her fingerprint of the claimed vector of frequencies. 

To show (imperfect) completeness, we study the probability that the output of an honest prover is rejected. 
This happens only if $\s(\bx^{h(\iota)})$, the number of non-zero entries in $\bx^{h(\iota)}$, is much larger than its expectation. 
By the pairwise independence of $h$, 
$\E[\s(\bx^{h(\iota)})] = \s(\bx)/c_v=\anncost $. 
Thus, by Markov's inequality, $\Pr[ \s(\bx^{h(\iota)}) > 10\anncost ] < 1/10$. 
So by specifying a hash function chosen at random from a pairwise independent hash family, and then honestly playing back the items that map to the same region as $\iota$, $P$ can convince $V$ to accept with probability $9/10$. 

Notice that $V$ does not need to \emph{enforce} that $P$ picks the hash function $h$ at random from a pairwise-wise independent hash family, as $P$ has no incentive not to pick the hash functions in this way. 
That is, since $V$ will reject if too many items map to the same region as $\iota$, it is \emph{sufficient} for $P$ to pick $h$ at random from a pairwise independent hash family in order to convince $V$ to accept with constant probability. 
But it is  equally acceptable if $P$ wants to pick $h$ another way; if he does so, $P$ just risks that $V$ will reject with a higher probability.

The lower bound follows from Theorem \ref{thm:lb}, which we prove in Section \ref{sec:lb}.
\end{proof}

The \ascheme\ of Theorem \ref{thm:pointquery} yields nearly optimal \aschemes\ for the \hh\ and \select\ problems, described below.
Table \ref{tabpointquery} summarizes these results and compares to prior work.

\begin{table}[tb]
\small
\centering
\begin{tabular}{c|r@{}l|c|c|c}
Problem & \multicolumn{2}{|c|}{\aScheme\ Costs} &  Completeness & Prescience & Source \\ \hline
\pq & $(\log n, \log n)$&& Perfect & Prescient &  \cite{icalp}\\
\pq & $(\s\log n, \log n)$&& Perfect & Online &  \cite{icalp}\\
\pq & $(\anncost \log n, c_v\log n)$&: $\anncost c_v \geq n$ & Perfect & Online &
\cite{icalp}\\
\pq & $(\anncost \log n, c_v\log n)$&: $\anncost c_v \geq \s$& Imperfect & Online & Theorem \ref{thm:pointquery}\\
\hline
\select & $(\anncost \log n, c_v\log n)$&: $\anncost c_v \geq n$ & Perfect & Online & \cite{icalp}\\
\select & $(\s\log n, \log n)$&& Perfect & Online & \cite{icalp}\\
\select & $( \anncost \log^2 n, c_v \log n)$&: $\anncost c_v \geq \s \log n$ & Imperfect & Online & Corollary \ref{cor:selection}\\
\hline
$\phi$-\hh & $(\phi^{-1} \log n, \phi^{-1} \log n)$&: $\anncost c_v \geq n$ & Perfect & Prescient & \cite{icalp}\\
$\phi$-\hh & $(\phi^{-1} \anncost \log n, c_v \log n)$&: $\anncost c_v \geq n$ & Perfect & Online & \cite{icalp}\\
$\phi$-\hh & $(\s \log n, \log n)$ && Perfect & Online & \cite{icalp}\\
$\phi$-\hh & $(\phi^{-1} \anncost \log n, c_v \log n)$&: $\anncost c_v \geq \s \log n$ & Imperfect & Online & Corollary \ref{cor:hh}\\
$\phi$-\hh & $(\phi^{-1} \log n + \anncost  \log n, c_v \log n)$&: $\anncost c_v \geq \s \log n$ & Imperfect & Online & Corollary \ref{cor:hh2}\\
\end{tabular}
\caption{Comparison of our \aschemes\ to prior work. For all three problems,
ours are the first online \aschemes\ to achieve both annotation and space usage sublinear in the stream
sparsity $\s$ when $\s \ll \sqrt{n}$, and we strictly improve over the online MA communication cost of prior \aschemes\ whenever $\s=o(n)$. For brevity,
we omit factors of $\log_{c_v}(\s)$ from the statement of costs of the $\phi$-\hh\ \ascheme\ due to Corollary \ref{cor:hh2}}
\label{tabpointquery}
\end{table}

\subsubsection{Selection} Our definition of the \select\ problem assumes all frequencies $f_i := \sum_{(j_k, \delta_k): j_k = i} \delta_k$ are non-negative,
and so this definition is only valid for the strict turnstile update model.

\begin{definition} \label{def:select} The \select\ problem is defined
  in terms of the quantity $N = \sum_{i\in [n]} f_i$, the sum of all the frequencies. 
Given a desired rank $\rho \in [N]$, output an item $j$ from the stream 
$\bx=\langle (j_1, \delta_1), \dots , (j_m, \delta_m)\rangle$, such
that $\sum_{(j_k, \delta_k): j_k < j} \delta_k <\rho$ and $\sum_{(j_k,
  \delta_k): j_k > j} \delta_k \geq N - \rho$. 
\end{definition}




\begin{corollary}
\label{cor:selection}
For any pair $(\anncost , c_v)$ such that $\anncost c_v \geq \s \log n$,  there is an \oscheme{(\anncost  \log^2 n, c_v \log n)} for \select\ in the strict turnstile update model. 
\end{corollary}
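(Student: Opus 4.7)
The plan is to reduce \select to a small batch of point queries over a ``dyadic'' derived stream, and then apply Theorem~\ref{thm:pointquery}. Extend the universe $[n]$ by adjoining the $O(n)$ internal nodes of a balanced binary tree whose leaves are the elements of $[n]$; each node $v$ stands for a dyadic interval $I_v \subseteq [n]$. The verifier interprets each incoming update $(i,\delta)$ as generating updates $(v,\delta)$ at all $O(\log n)$ ancestors $v$ of leaf $i$, producing a derived stream over $O(n)$ items of sparsity at most $\s' = O(\s \log n)$ (the strict turnstile assumption ensures that non-negativity is preserved at the tree nodes). Since the frequency of $v$ in the derived stream equals $\sum_{i \in I_v} f_i$, every prefix sum $\sum_{i<j} f_i$ decomposes into at most $\log n$ tree-node frequencies.

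To certify that $j$ is the rank-$\rho$ item, the prover sends $j$ at the end of the stream along with claimed values of $\sum_{i<j} f_i$ and $\sum_{i\leq j} f_i$, which the verifier checks straddle $\rho$. Verifying these two prefix sums reduces to certifying $O(\log n)$ specific node frequencies in the derived stream. I would run a single instance of the online \pq scheme of Theorem~\ref{thm:pointquery} on the derived stream, with parameters $(\anncost, c_v)$ chosen so that $\anncost \cdot c_v \geq \s \log n = \s'$. As in that theorem, the prover commits upfront to a hash function $h:[O(n)] \to [c_v]$ and the verifier maintains $c_v$ fingerprints (one per bucket), consuming $O(c_v \log n)$ bits of space. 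To answer the $O(\log n)$ queries at the end of the stream, the prover reveals the full contents of each relevant bucket; since each bucket has expected load $\s'/c_v \leq \anncost$ and each item costs $O(\log n)$ bits to transmit, the total annotation is $O(\anncost \log^2 n)$ as claimed. Soundness is inherited from the fingerprint tests and survives the $\log n$-way union bound since each test errs with probability $1/\poly(n)$.

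The main obstacle is completeness: the verifier rejects if \emph{any} of the $O(\log n)$ revealed buckets overshoots the per-query annotation budget, so I need simultaneous bucket-load control rather than a single-bucket bound. A naive appeal to Markov's inequality plus a union bound loses an extra $\log n$ factor. Instead I would have the prover commit to a $\Theta(\log n)$-wise independent hash function, whose description fits comfortably in $O(\log^2 n)$ bits; standard Chernoff-type tail bounds for $k$-wise independent sums then show that each individual bucket load concentrates within a constant factor of its mean $\s'/c_v$ with probability $1 - 1/\poly(n)$, which easily survives the $\log n$-way union bound. The degenerate regime where $c_v = O(\log n)$ (equivalently $\anncost = \Omega(\s)$) is handled separately by simply having the prover list every non-zero frequency of $\bx$ directly at the end of the stream, which fits within the $O(\anncost \log n)$ annotation budget and lets $V$ resolve \select locally after a fingerprint check.
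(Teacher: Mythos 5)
Your reduction is correct and is essentially the paper's: both decompose \select into prefix-sum queries over the dyadic derived stream of sparsity $O(\s \log n)$, run $O(\log n)$ instances of the \pq\ scheme of Theorem~\ref{thm:pointquery} against a \emph{single} shared set of $c_v$ bucket fingerprints, and charge the $\log n$ blow-up only to the annotation. The one place you diverge is the completeness analysis, and there the extra machinery is unnecessary: you do not need every queried bucket to individually respect a per-query budget, only the \emph{total} annotation to be $O(\anncost \log^2 n)$. The combined load of the $O(\log n)$ queried buckets has expectation $O(\anncost \log n)$ under plain pairwise independence, so a single application of Markov's inequality bounds the total revealed contents by $O(\anncost \log n)$ items (hence $O(\anncost \log^2 n)$ bits) with constant probability --- no union bound over buckets, and no $\Theta(\log n)$-wise independence or Chernoff tail, is required. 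Your route is still sound (and gives $1-1/\poly(n)$ completeness rather than constant), but note that in the regime $\anncost = o(\log n)$ the per-bucket concentration only guarantees load $O(\anncost + \log n)$, so you would want to fall back on the aggregate-Markov argument there anyway; the paper's simpler accounting avoids this case split entirely.
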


The corollary follows from a standard observation to reduce 
\select\ to answering prefix sum queries, and hence to 
multiple instances of the \pq\ problem. 
$V$ treats each stream update $(i, \delta)$ in the stream $\bx$ as an update to $O(\log n)$ dyadic ranges, where a dyadic range is a range of the form $[j2^k, (j+1)2^k-1]$ for some $j$ and $k$. 
Thus, we can view the set of dyadic range updates implied by $\bx$
as a derived stream of sparsity $\s \log n$.  
Notice we are using the fact that this transformation from the original stream of sparsity $\s$ results in a derived stream of sparsity at most $\s \log n$; a different derived stream was used in \cite{icalp} to address the \select\ problem, but the sparsity of that derived stream could be substantially larger than the sparsity of the original stream.

For any $i$, the quantity $T_i := \sum_{(j, \delta): j \leq i} \delta$
can be written as the sum of the counts of $O(\log n)$ dyadic ranges. Thus, at the end of the stream $P$ can convince $V$ that item $i$ has the desired $T_i$ value by 
running $\log n$ \pq\ protocols as in Theorem \ref{thm:pointquery} in parallel on the derived stream of sparsity $\s \log n$.
The verifier's space usage is the same as for a single \pq\ instance on this stream:
 $V$ fingerprints each of the derived streams $\bx^j$ defined in the proof of Theorem \ref{thm:pointquery}, and uses these fingerprints in all $\log n$ instances
 of the \pq\ \ascheme.
 The annotation length
is $\log n$ times larger than that required for a single \pq\ instance 
because $P$ may have to describe the frequency vectors of up to $\log n$ derived streams.

Thus, we get an
\oscheme{(\anncost \log^2 n, c_v \log n)} as long as $\anncost c_v = \Omega(\s \log n)$.

\subsubsection{Frequent Items}
Our definition of the $\phi$-\hh\ problem also assumes all frequencies $f_i := \sum_{(j_k, \delta_k): j_k = i} \delta_k$ are non-negative,
and so this definition is only valid for the strict turnstile update model.

\begin{definition} The $\phi$-\hh\ problem (also known as frequent items) is to list those items 
$i$ such that $f_i \geq \phi N$, i.e.  
whose frequency of occurrence exceeds a $\phi$ fraction of the total count $N = \sum_{i \in [n]} f_i$.
\end{definition}

We give a preliminary result for the $\phi$-\hh\ problem in Corollary \ref{cor:hh} below. We give a substantially improved scheme in Section \ref{sec:of2} using the ideas 
underlying our online scheme for frequency moments.

\begin{corollary}
\label{cor:hh}
For all $\anncost,c_v$ such that $\anncost c_v \geq \s \log n$, there is an \oscheme{(\anncost \phi^{-1} \log n, c_v\log n)} for solving $\phi$-\hh\ in the strict turnstile update model. 
\end{corollary}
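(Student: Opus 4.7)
The plan is to extend the dyadic-range reduction used for \select\ in Corollary~\ref{cor:selection} so that a carefully chosen set of point-query invocations certifies both the identities and frequencies of the heavy hitters. First, $V$ treats each update $(i,\delta)$ as $\log n$ updates, one per dyadic range containing $i$, obtaining a derived stream of sparsity at most $\s \log n$. Because the model is strict turnstile, $V$ can also accumulate $N = \sum_i f_i$ online. At the start of the stream, $P$ sends a single hash function $h$ as in Theorem~\ref{thm:pointquery} for this derived stream, and $V$ maintains $c_v$ bucket fingerprints using $O(c_v \log n)$ space; this state will be reused across every subsequent point query, since only one pass over the stream is available.

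The key structural observation is that at each of the $\log n$ dyadic levels, at most $\phi^{-1}$ nodes have count $\ge \phi N$, and every ancestor of a $\phi$-heavy hitter is such a node. After the stream, $P$ sends, for each level, the list of ``heavy nodes'' at that level along with their claimed counts, and additionally, for each heavy node, the claimed counts of both of its children. Each of these $O(\phi^{-1} \log n)$ claimed counts is verified by one invocation of the point-query \ascheme\ of Theorem~\ref{thm:pointquery} on the dyadic-range stream, all sharing $V$'s fingerprints. $V$ checks that (i) each claimed count matches its fingerprint, (ii) for every heavy node, the two child counts sum to the parent's claimed count, and (iii) any child whose claimed count is $\ge \phi N$ appears in the heavy-node list of the next level. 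The heavy-node list at the leaf level is output as the answer.

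For soundness, a spuriously claimed heavy hitter is immediately caught by its point-query verification. A missed true heavy hitter $h^{\star}$ forces some level at which $h^{\star}$'s ancestor is listed as heavy but the child of that ancestor on the path to $h^{\star}$ is not; since that child's true count is $\ge \phi N$, the point-query verification of the child's claimed count combined with check (iii) forces rejection. Costs follow from Theorem~\ref{thm:pointquery} applied with $\anncost \cdot c_v \ge \s\log n$ (the sparsity of the dyadic stream): the $O(\phi^{-1}\log n)$ queries aggregate to the claimed annotation bound $O(\anncost \phi^{-1} \log n)$, the verifier's $O(c_v \log n)$ fingerprint state is shared across all queries, and the completeness error follows by a union bound over the $O(\phi^{-1}\log n)$ invocations.

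The step that requires the most care is the ``no missing heavy hitter'' argument: $V$ cannot inspect unlisted branches, so $P$ must be forced to self-certify enough intermediate child counts that a hidden heavy descendant is either unmasked as a verified child whose count exceeds $\phi N$ (triggering check (iii)) or violates the sum-to-parent consistency of check (ii). Once this child-count-plus-sum-check mechanism is in place, the rest is a direct lift of the \select\ construction.
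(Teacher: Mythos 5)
Your overall strategy is the one the paper uses: reduce $\phi$-\hh\ to a batch of \pq\ invocations on the dyadic derived stream of sparsity $O(\s\log n)$, and reuse the verifier's $c_v$ bucket fingerprints across all invocations so that the space stays at $O(c_v\log n)$. Your soundness argument via the first unlisted ancestor on the path to a missed heavy hitter is also sound. The difference is that the paper does not re-derive the hierarchical search; it invokes \cite[Theorem 6.1]{icalp}, which reduces $\phi$-\hh\ to demonstrating the frequencies of only $O(\phi^{-1})$ items of the derived stream.

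The genuine gap is in your cost aggregation. Your explicit reduction demonstrates $\Theta(\phi^{-1}\log n)$ counts (two children for each of up to $\phi^{-1}$ heavy nodes at each of $\log n$ levels). Each \pq\ invocation on the derived stream costs $O(\anncost\log n)$ bits of annotation, because the prover must describe the contents of a bucket holding $O(\anncost)$ items of $O(\log n)$ bits each; this is exactly the accounting used in Corollary~\ref{cor:selection}, where $\log n$ invocations cost $\anncost\log^2 n$. Hence your scheme has annotation $O(\anncost\phi^{-1}\log^2 n)$, not the claimed $O(\anncost\phi^{-1}\log n)$; the sentence asserting that the $O(\phi^{-1}\log n)$ queries ``aggregate to'' $O(\anncost\phi^{-1}\log n)$ is where the factor is silently dropped. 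To recover the stated bound you need the sharper reduction of \cite[Theorem 6.1]{icalp}, which gets by with $O(\phi^{-1})$ demonstrated frequencies in total; as written, your argument only establishes the corollary with an extra $\log n$ factor in the annotation length.
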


Corollary \ref{cor:hh} follows from the following analysis. \cite[Theorem 6.1]{icalp} describes how to reduce $\phi$-\hh\ to demonstrating
the frequencies of $O(\phi^{-1})$ items in a derived stream. Moreover, the derived stream has sparsity $O(\s \log n)$
if the original stream has sparsity $\s$. 
We use the \pq\ \ascheme\ of Theorem \ref{thm:pointquery}.
As in Corollary \ref{cor:selection},
the annotation length blows up by a factor $\phi^{-1}$ relative to a single \pq, but the space usage of $V$ can  remain the same as in a single \pq\ instance.
Hence, we obtain an \oscheme{(\anncost\phi^{-1} \log n, c_v\log n)} for any $\anncost c_v \geq \s \log n$.

\subsection{Lower Bound}
\label{sec:lb}
In this section, we prove a new lower bound on the online MA communication complexity of the $(m, n)$-Sparse \idx\ problem.

\begin{definition} In the $(m, n)$-Sparse \idx\ problem, Alice is given a vector $x \in \{0, 1\}^n$ of Hamming weight at most $m$, and Bob is given an index $\iota$.
Their goal is to output the value $x_\iota$.
\end{definition}

We prove our lower bound by reducing the (dense) \idx\ problem (i.e. the $(m, n)$-Sparse \idx\ problem with $m=\Theta(n)$) in the MA communication model to the $(m, n)$-Sparse \idx\ problem
for small $m$.
 The idea is to replace Alice's dense input with a sparser input over
 a bigger universe, and then take advantage of our sparse \pq\ protocol.
A lower bound on the online MA communication complexity of the dense \idx\ problem was proven in \cite[Theorem 3.1]{icalp};
there, it was shown that any online MA communication protocol $\cP$ requires $\hcost(\cP) \vcost(\cP) \geq n$. 
Combining this with our reduction of the dense \idx\ problem to the sparse version,
we conclude that any protocol for sparse \idx\ must be costly.

\begin{lemma} 
\label{lemma:denseindex}\cite[Theorem 3.1]{icalp}
Any online MA communication protocol $\cP$ for the $(n, n)$-Sparse \idx\ problem must have $\hcost(\cP) \vcost(\cP) = \Omega(n)$.
\end{lemma}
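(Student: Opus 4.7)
\begin{proofof}[Proof Proposal:]
The plan is to prove the stronger inequality $\hcost(\cP) + \vcost(\cP) \geq n - O(1)$, from which the claim $\hcost(\cP) \cdot \vcost(\cP) = \Omega(n)$ follows via the elementary fact $(\hcost - 1)(\vcost - 1) \geq 0$ for positive integers.

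First I would handle the deterministic variant of the model, in which neither Arthur nor Merlin uses randomness. Write $a(x, m_1) \in \{0,1\}^{\vcost}$ for Alice's message to Bob and $\beta(a, \iota, m_2) \in \{0,1\}$ for Bob's decision. For each $x \in \{0,1\}^n$, fix a first Merlin message $m_1^*(x) \in \{0,1\}^{\hcost_1}$ that witnesses completeness, and define
\[
    \Phi(x) \;:=\; \bigl(m_1^*(x),\; a(x, m_1^*(x))\bigr) \;\in\; \{0,1\}^{\hcost_1 + \vcost}.
\]
The map $\Phi$ must be injective: if $\Phi(x) = \Phi(x')$ for $x \neq x'$, pick a coordinate $\iota^*$ where they differ, say $x_{\iota^*} = 1$ and $x'_{\iota^*} = 0$. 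By completeness on $(x, \iota^*)$, some second Merlin message $m_2$ produces $\beta(\Phi(x), \iota^*, m_2) = 1$; since $\Phi(x') = \Phi(x)$, Bob's function takes the identical value $1$ on $(x', \iota^*, m_2)$, contradicting soundness on $(x', \iota^*)$. Hence $|\mathrm{image}(\Phi)| = 2^n$, giving $\hcost + \vcost \geq n$.

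To extend to the actual online MA model with public coins $r_A$, completeness and soundness only hold in expectation, so the pointwise contradiction above fails. My approach is to first reduce the per-instance error to some $\epsilon \leq 1/(100\, n\, 2^{\hcost})$ by $k = O(\hcost + \log n)$ parallel repetitions with majority voting; this inflates both $\hcost$ and $\vcost$ by a factor $k$. A union bound over all $(\iota, m_1, m_2) \in [n] \times \{0,1\}^{\hcost}$ then shows that, for each fixed $x$, the amplified protocol is pointwise correct on every $(x, \iota)$ for at least a $99/100$ fraction of $r_A$. Averaging over $x$ produces a single fixed $r_A^*$ for which pointwise correctness holds on at least half of all inputs, and the deterministic injectivity argument applied to this subset yields $k \cdot (\hcost + \vcost) \geq n - O(1)$.

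The main obstacle is that the crude overhead $k = O(\hcost + \log n)$ risks weakening the conclusion from $\Omega(n)$ to $\tilde{\Omega}(n)$. To recover the clean bound stated in the lemma, I would amplify the two sides asymmetrically: $O(1)$ repetitions suffice for completeness, and the soundness union bound only needs to be taken within each fibre of Merlin's first message $m_1$ rather than across all $m_1$ simultaneously. With this refinement the amplification factor drops to $O(\log n)$ or less, and the spurious $2^{\hcost}$ inside the union bound disappears. This asymmetric amplification-and-averaging is where essentially all of the technical weight of the proof is concentrated, and it is the step that is most sensitive to the precise formulation of the online MA model adopted in \cite{icalp}.
\end{proofof}
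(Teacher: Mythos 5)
There is a genuine gap here, and it is not a technical detail: the ``stronger inequality'' $\hcost(\cP) + \vcost(\cP) \geq n - O(1)$ that your whole plan rests on is \emph{false} for randomized online MA protocols. The sum-check scheme for \pq\ with $\anncost c_v \geq n$ (row 3 of Table~\ref{tabpointquery}, from \cite{icalp}) yields an online MA protocol for $(n,n)$-Sparse \idx\ with $\hcost = \vcost = O(\sqrt{n}\log n)$, so $\hcost + \vcost$ can be $o(n)$ by a polynomial margin; even your hoped-for refinement losing only an $O(\log n)$ factor would still contradict this upper bound. Your injectivity argument correctly shows that \emph{deterministic} protocols need $\hcost + \vcost \geq n$, but the passage to the randomized model cannot be repaired, because the statement you are passing to is not true. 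Quantitatively, your amplification gives $k\,(\hcost + \vcost) \geq n$ for the blown-up protocol; to union-bound soundness over Bob's possible second Merlin messages $m_2$ you need per-instance error $\ll 2^{-\hcost}$, forcing $k = \Omega(\hcost)$, and the resulting inequality $\hcost^2 + \hcost\,\vcost = \Omega(n)$ does not imply $\hcost\,\vcost = \Omega(n)$ (it is satisfied by $\hcost = \sqrt{n}$, $\vcost = O(1)$). Restricting the union bound to fibres of $m_1$ does not remove this exponential, since soundness must hold against \emph{every} $m_2$, not just one per fibre.

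The paper does not reprove this lemma; it cites \cite[Theorem 3.1]{icalp} and only remarks that the argument tolerates a (randomized) Merlin message to Alice. That argument is structured quite differently from yours and is worth internalizing: one amplifies \emph{only the verifier's part}, repeating Arthur's randomized protocol $O(\hcost)$ times so that $\vcost$ grows to $O(\hcost \cdot \vcost)$ while $\hcost$ is untouched; a union bound over all $2^{O(\hcost)}$ Merlin messages then shows that, with high probability over Arthur's coins, soundness holds simultaneously against every possible help string, so Merlin can be eliminated (Bob accepts iff \emph{some} help string would make him accept). This produces a Merlin-free one-way randomized protocol for \idx\ of communication $O(\hcost \cdot \vcost)$, and the classical $\Omega(n)$ lower bound for one-way randomized \idx\ gives the product bound directly --- the multiplicative form of the conclusion is baked into the argument rather than derived from an additive bound. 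If you want a self-contained proof, this Merlin-elimination route is the one to follow.
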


\begin{remark}
The lower bound of Lemma \ref{lemma:denseindex} was originally proved by Chakrabarti \etal\ \cite{icalp} in the communication 
model in which Merlin cannot send any message to Alice. However, the proof
easily extends to our online MA communication model (where Merlin can send a message to Alice, but that message cannot depend on Bob's input). 
\end{remark}

\begin{theorem}
\label{thm:denseindex}
\label{thm:lb}
Any online MA communication protocol $\cP$ for the $(m, n)$-Sparse \idx\ problem for which $\hcost(\cP) \geq \log n $ must have $\hcost(\cP) \vcost(\cP) = \Omega(m \log(n/m))$.
\end{theorem}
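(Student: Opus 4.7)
The plan is to derive this lower bound by a reduction from \emph{dense} $\idx$ (on a suitably chosen universe) to $(m,n)$-Sparse $\idx$, and then invoke Lemma~\ref{lemma:denseindex}. The key is an encoding that packs $M := m\lfloor \log(n/m)\rfloor$ bits of a dense $\idx$ instance into a single $m$-sparse vector in $\{0,1\}^n$, so that each query on the dense instance corresponds to a single query on the sparse vector, at the price of only $O(\log(n/m))$ extra bits of Merlin-to-Bob help.

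Assume for simplicity that $n/m$ is a power of two, and set $k := \log(n/m)$. Partition $[n]$ into $m$ super-blocks of size $n/m$, and partition Alice's dense input $y \in \{0,1\}^M$ into $m$ chunks $y_1,\dots,y_m$ of $k$ bits each. Interpret each $y_i$ as an integer $r_i \in [n/m]$, and let Alice form the sparse vector $x \in \{0,1\}^n$ that has a single $1$ in super-block $i$, located at position $r_i$. Then $x$ has Hamming weight exactly $m$. Bob's index $j \in [M]$ decomposes uniquely as a pair $(i,\ell)$ with $i \in [m]$ and $\ell \in [k]$, and the target bit $y_j$ is exactly the $\ell$-th bit of $r_i$.

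Given any online MA protocol $\cP$ for $(m,n)$-Sparse $\idx$ with $h := \hcost(\cP)$ and $v := \vcost(\cP)$, I would build an online MA protocol $\cP'$ for dense $\idx$ on $M$ bits as follows. Merlin first forwards his round-one help $\help_1$ for $\cP$ to Alice based on $x$. After seeing $j$ (hence $i$ and $\ell$), Merlin sends Bob a claim $r'_i \in [n/m]$ using $k$ bits together with his round-two help $\help_2$ for $\cP$ on query $\iota'$, where $\iota'$ is the position in super-block $i$ corresponding to $r'_i$. Alice transmits her usual $\cP$ message; Bob simulates $\cP$ to obtain an output bit $b$, and outputs the $\ell$-th bit of $r'_i$ if $b=1$, and $0$ otherwise. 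Completeness is immediate: an honest Merlin sets $r'_i=r_i$, so $x_{\iota'}=1$ and $\cP$ outputs $b=1$ with high probability, whence Bob outputs $y_j$. For soundness, suppose $y_j=0$. Either (i) the cheating Merlin sends $r'_i=r_i$, in which case Bob's output is the $\ell$-th bit of $r_i$, which equals $y_j=0$; or (ii) Merlin sends $r'_i \neq r_i$, in which case $x_{\iota'}=0$ (there is only one $1$-bit per super-block), so by soundness of $\cP$, $b=0$ with high probability and Bob outputs $0$.

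Therefore $\cP'$ is an online MA protocol for dense $\idx$ on $M$ bits with $\hcost(\cP') \le h + k$ and $\vcost(\cP') \le v$. Applying Lemma~\ref{lemma:denseindex} to $M$-bit dense $\idx$ yields $(h+k)\cdot v = \Omega(M) = \Omega(m \log(n/m))$. The hypothesis $h \ge \log n \ge k$ then gives $h + k \le 2h$, whence $h\cdot v = \Omega(m \log(n/m))$, as desired. The main step requiring care is soundness in case (ii): it hinges on the fact that within each super-block $x$ has a \emph{unique} $1$-bit, so any wrong guess $r'_i$ forces the underlying sparse $\idx$ query to have answer $0$, at which point the soundness guarantee of $\cP$ does the work.
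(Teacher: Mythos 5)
Your proposal is correct and follows essentially the same route as the paper's proof: pack $m\log(n/m)$ dense \idx bits into an $m$-sparse vector by encoding each $k$-bit block as a $1$-sparse indicator over a super-block of size $2^k$, have Merlin reveal the block's value to Bob, verify it with one sparse \idx query, and invoke Lemma~\ref{lemma:denseindex}. Your soundness case analysis (the unique $1$ per super-block forcing any dishonest claim to a $0$-query) is exactly the mechanism the paper relies on, just spelled out more explicitly, and your slightly tighter help-cost accounting ($h+k$ versus the paper's $h+\log n$) makes no difference given the hypothesis $\hcost(\cP) \ge \log n$.
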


\begin{proof}
Assume we have an online MA communication
protocol $\cP$ for $(m, n)$-sparse \idx.
We describe how to use this online MA protocol for the sparse \idx\
problem to design one for the dense \idx\ problem on vectors of length $n'=m \log(n/m)$.

Let $k=\log(n/m)$. Given an input $x$ to the dense \idx\ problem, Alice partitions $x$ into $n'/k$ blocks of length $k$, and constructs a $0$-$1$ vector $y$
of Hamming weight $n'/k$ over the universe $\{0,1\}^{(n'/k)\cdot 2^k} = \{0,1\}^n $ 
as follows.
She replaces each block $B_i$  with a 1-sparse vector $v_i \in \{0, 1\}^{2^k}$, where each entry of $v_i$ corresponds to one of the $2^{k}$ possible values of block $B_i$.
That is, if block $B_i$ of $x$ equals the binary representation of the number $j \in [2^k]$, then Alice replaces block $B_i$ with the vector $e_j \in \{0, 1\}^{2^k}$, where $e_j$ denotes the vector with a 1 in coordinate $j$ and 0s elsewhere.

 Alice now has an $n'/k=m$-sparse derived input $y$ over the universe $\{0,1\}^n$.
 Merlin looks at Bob's input to see what is the index $\iota$ of the dense vector $x$ that Bob is interested in. Merlin then tells Bob the index $\ell$ such that 
 $\ell= 2^k(\iota-1)+j$, where $B_i$ is the block that $\iota$ is located in, and block $B_i$ of Alice's input $x$ equals the binary representation of the number $j \in [2^k]$. 
 Notice that Merlin can specify $\ell$ using $\log n$ bits. If Bob is convinced that $y_{\ell}=1$, then Bob can deduce the value of \emph{all} the bits in block $B_i$
 of the original dense vector $x$, and in particular, the value of $x_{\iota}$.
 
 The parties then run the assumed online MA protocol for $(m,n)$-Sparse
 \idx.  
 The total $\hcost$ of this protocol is $\hcost(\cP) + \log n = O(\hcost(\cP))$, and the total $\vcost$ is $\vcost(\cP)$.
 Thus, by Lemma \ref{lemma:denseindex}, $\hcost(\cP) \vcost(\cP) = \Omega(n') = \Omega(m \log (n/m))$ as claimed.
\end{proof}

Theorem \ref{thm:lb} should be contrasted with the following well-known upper bound.

\begin{theorem} \label{thm:ub}
Assume $n < m^m$. Then the one-way randomized communication complexity of the $(m, n)$-Sparse \idx\ Problem is $O(m \log m)$.
\end{theorem}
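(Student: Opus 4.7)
The plan is to give a simple one-way randomized protocol based on hashing that compresses Alice's sparse input down to a small universe. Let $S = \{i \in [n] : x_i = 1\}$, so $|S| \le m$. The idea is that since $|S|$ is small, Alice can afford to ship a fingerprint of the entire set $S$, provided we first reduce the effective universe from $[n]$ to something of size $O(m)$.

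Concretely, Alice picks a hash function $h \colon [n] \to [R]$ uniformly at random from a standard pairwise-independent family, with $R = 3m$. She then sends Bob both a description of $h$ and the image set $h(S) \subseteq [R]$ (as a sorted list of at most $m$ elements). Bob computes $h(\iota)$ and outputs $1$ iff $h(\iota) \in h(S)$. Completeness is perfect: if $x_\iota = 1$, then $\iota \in S$ and so certainly $h(\iota) \in h(S)$. For soundness, suppose $x_\iota = 0$; then Bob errs only when $h(\iota) = h(j)$ for some $j \in S$. By pairwise independence, each collision occurs with probability $1/R$, and a union bound over $|S| \le m$ such $j$ gives an error probability of at most $m/R = 1/3$.

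The communication cost breaks into two parts. The image set $h(S)$ is a subset of $[R]$ of size at most $m$, so it can be transmitted using $O(m \log R) = O(m \log m)$ bits. The description of a pairwise-independent hash from $[n]$ to $[R]$ (e.g.\ of the form $h_{a,b}(x) = ((ax+b) \bmod p) \bmod R$ for a prime $p$ slightly larger than $n$) takes $O(\log n)$ bits. The main step is the observation that under the hypothesis $n < m^m$ one has $\log n < m \log m$, so the hash-function description is absorbed into the $O(m \log m)$ budget, yielding the claimed total.

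The main obstacle, if one can call it that, is bookkeeping the hash description rather than the protocol itself: a naive estimate would give $O(m\log m + \log n)$, which is only $O(m\log m)$ in the stated regime. The hypothesis $n < m^m$ is precisely the threshold at which our compression ``pays for itself''; outside this regime the $\log n$ term can dominate and the upper bound degrades, which is consistent with the matching lower bound in Theorem~\ref{thm:lb}.
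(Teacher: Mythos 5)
Your proposal is correct and follows essentially the same route as the paper: a pairwise-independent universe-reduction hash sent along with the image of Alice's support set, with the hypothesis $n < m^m$ used to absorb the $O(\log n)$ cost of describing $h$ into the $O(m\log m)$ budget. The only (immaterial) difference is that you hash into $[3m]$ for constant error where the paper hashes into $[m^3]$ for error $O(1/m)$; both yield the claimed $O(m\log m)$ bound.
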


\begin{proof}
Alice chooses a hash function $h : [n] \rightarrow [m^3]$ at random from a pairwise independent family and uses $h$ to perform ``universe reduction''. 
That is, she sends $h$ along with the set $S$ of $m$ values $\{h(j): x_j = 1\}$. Notice $h$ can be specified with $O(\log n) = O(m \log m)$ bits, and $S$ can be specified with $O(m \log m)$ bits. Bob outputs 1 if $h(\iota)  \in S$, and 0 otherwise.
The correctness of the protocol follows from the pairwise independence property of $h$: if $x_{\iota}=0$, then with high probability $\iota$ will not collide under $h$ with any $j$ such that $x_j=1$.
The total cost of this protocol is $O(m \log m)$.
\end{proof}

\subsection{Implications of the Lower Bound}

Our lower bound in Theorem \ref{thm:denseindex} has interesting consequences when it is combined with the upper bound in Theorem \ref{thm:ub}.
Consider in particular the $(m, n)$-Sparse \idx\ Problem, where $n=2^m$. Theorem \ref{thm:ub} implies that the one-way randomized communication
complexity of this problem is $O(m \log m)$; that is, without any need
of Merlin, Alice and Bob can solve the problem with $O(m \log m)$ communication.

Meanwhile, Theorem \ref{thm:lb} implies that even if Merlin's message to Bob has length $\Omega(\log n) = \Omega(m)$, 
Alice's message to Bob must have length $\Omega(m \log(n/m) / m) = \Omega(m)$. 
Indeed, Theorem \ref{thm:lb} shows that for any protocol $\cP$, if $\hcost(\cP) \geq \log n=m$, then we must have $\hcost(\cP) \vcost(\cP) =
\Omega(m \log(n/m)) = \Omega(m^2)$. In particular, this means that if $\hcost(\cP) =
m$, $\vcost(\cP)$ must be $\Omega(m)$. This trivially implies that for any protocol $\cP$
with $\hcost(\cP)$ \emph{less} than $m$, $\vcost(\cP)$ must still be $\Omega(m)$; otherwise we could
achieve a protocol with $\hcost(\cP)=m$ and $\vcost(\cP)=o(m)$ simply by running $\cP$ and adding
in extraneous bits to the proof to bring the proof length up to $m$.

Consequently, the online MA communication
complexity of this problem is at least $\Omega(m)$, which is at most a logarithmic factor smaller than the one-way randomized communication complexity.
To our knowledge, this is the first problem that provably exhibits this behavior. Specifically, this rules out smooth tradeoffs between 
annotation size and space usage in any annotated streaming protocol for the $(m, 2^m)$-Sparse \idx\ Problem.

\begin{corollary}
The one-way randomized communication complexity of the $(m, 2^m)$-Sparse \idx\ Problem is $O(m \log m)$.
The online Merlin-Arthur communication complexity is $\Omega(m)$.
\end{corollary}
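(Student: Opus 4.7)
The plan is to derive both statements as direct consequences of the two theorems just proved, namely the upper bound of Theorem~\ref{thm:ub} and the lower bound of Theorem~\ref{thm:lb}. Setting $n = 2^m$ throughout, the first part is essentially immediate: I invoke Theorem~\ref{thm:ub}, noting that the hypothesis $n < m^m$ holds for all $m \geq 2$ since $2^m < m^m$, which yields a one-way randomized protocol of cost $O(m \log m)$.

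For the second part, I would apply Theorem~\ref{thm:lb} with $n = 2^m$, in which case $\log n = m$ and $\log(n/m) = m - \log m = \Omega(m)$. The theorem then says that any online MA protocol $\cP$ for $(m, 2^m)$-Sparse \idx\ satisfying $\hcost(\cP) \geq m$ must have $\hcost(\cP) \cdot \vcost(\cP) = \Omega(m^2)$. Combined with $\hcost(\cP) \geq m$, this forces either $\vcost(\cP) = \Omega(m)$ outright, or $\hcost(\cP) = \Omega(m)$ beyond the threshold; in either case $\hcost(\cP) + \vcost(\cP) = \Omega(m)$.

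The main step that requires a brief argument is handling protocols with small annotation, i.e.\ those with $\hcost(\cP) < m = \log n$, since Theorem~\ref{thm:lb} is stated only for $\hcost(\cP) \geq \log n$. Here I would use the standard padding trick already invoked in the paragraph preceding the corollary: given any protocol $\cP'$ with $\hcost(\cP') < m$ and $\vcost(\cP') = o(m)$, append $m - \hcost(\cP')$ dummy bits to Merlin's message to form $\cP$ with $\hcost(\cP) = m$ and $\vcost(\cP) = \vcost(\cP') = o(m)$, contradicting the bound $\hcost(\cP)\cdot\vcost(\cP) = \Omega(m^2)$ from Theorem~\ref{thm:lb}. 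Hence $\vcost(\cP')$ must be $\Omega(m)$ for every online MA protocol, giving $\MA^\to((m,2^m)\text{-Sparse \idx}) = \Omega(m)$ and completing the corollary. The only mildly delicate point is this padding reduction, but it is routine and the rest of the proof is a direct substitution of $n = 2^m$ into the preceding theorems.
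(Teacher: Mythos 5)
Your proposal is correct and follows essentially the same route as the paper: the upper bound is a direct application of Theorem~\ref{thm:ub} with $n=2^m$, and the lower bound substitutes $n=2^m$ into Theorem~\ref{thm:lb} and handles protocols with $\hcost(\cP) < m$ by the same padding trick the paper uses. (The only nitpick is that $2^m < m^m$ requires $m \geq 3$ rather than $m \geq 2$, which is immaterial to the asymptotic claim.)
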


 
\subsubsection{Other Sparse Problems} 

A number of lower bounds in \cite{icalp} are proved via reductions from \idx\ that preserve stream length up to logarithmic factors.
This holds for \select\ and \hh, as well as for the problem of determining the existence of a triangle in a graph. For all such problems, the lower bound of Theorem~\ref{thm:lb}
implies corresponding new lower bounds for sparse streams, i.e. streams for which $\s = o(n)$.
We omit the details for brevity.

\subsubsection{Separating Online MA and AMA Communication Complexity} 

Another implication of Theorem~\ref{thm:lb} is a polynomial separation between online MA communication complexity and 
online AMA communication complexity. 
Indeed, there is an online AMA protocol of cost $\tilde{O}(\sqrt{m})$
for the $(m, 2^{\sqrt{m}})$-Sparse \idx\ Problem, where the $\tilde{O}$ notation hides factors polylogarithmic in $m$: the first message, which consists 
of public random coins, is used to specify a hash function $h:[n] \rightarrow [m^3]$ from a pairwise independent hash family; this message 
has length $O(\log n) = O(\sqrt{m})$. With high probability,
$h$ is injective on the set
$\{j : x_j = 1\}$. The parties then run the online MA communication protocol of Theorem \ref{thm:pointquery} on the inputs $h(\bx)$ and $h(\iota)$
and output the result. The total cost of this protocol is $\tilde{O}(\sqrt{m})$ as claimed. In Appendix \ref{app:AMA}, we in fact show that up to logarithmic factors in $m$,
this online AMA protocol is optimal.

Meanwhile, the lower bound of Theorem \ref{thm:lb}
implies that the online MA communication complexity of this problem is $\Omega(m^{3/4})$. Indeed, if we have a protocol $\cP$ with $\hcost(\cP)=m^{3/4} > \log n$,  Theorem \ref{thm:lb}
implies that $\hcost(\cP)\vcost(\cP) = \Omega(m \log(n/m)) = \Omega(m^{3/2})$, and hence $\vcost(\cP) > m^{3/4}$.

To our
knowledge, this is the first such separation between online AMA and online MA communication complexity (we remark that polynomial separations between online MA and MAMA communication
complexity were already known, for problems including \idx\ and \disj\ \cite{AW, icalp}).
Indeed, all previous lower bound methods that apply to online MA communication complexity, such as the proof of \cite[Theorem 3.1]{icalp} and the methods of Klauck and Prakash \cite{klauck},
in fact yield equivalent AMA lower bounds. At a high level, the reason is that these methods work via round reduction -- they remove the need for Merlin's message. 
They therefore turn any online MA protocol for a function $\fn$ into
an online ``A'' protocol for $\fn$, which is really just a one-way randomized protocol without a prover, allowing one to invoke a known lower bound on
the one-way randomized communication complexity of $\fn$.
Similarly, they turn an online AMA protocol for $\fn$ into an online AA protocol, which is also just a one-way randomized protocol for $\fn$. 

The reason Theorem \ref{thm:lb} is capable of separating online AMA from MA communication complexity is that the reduction in the proof of Theorem \ref{thm:lb}
turns an online MA protocol for the $(m, n)$-Sparse \idx\ Problem into an online MA protocol for the (dense) \idx\ Problem with related costs. However, 
the natural variant of the reduction applied to an online AMA protocol for the  $(m, n)$-Sparse \idx\ Problem yields an online MAMA protocol for the dense \idx\ Problem, \emph{not} an online AMA protocol (see Appendix \ref{app:AMA} for details).
And the dense \idx\ Problem has an online MAMA protocol that is polynomially more efficient than any online AMA protocol (see e.g. \cite{AW, vldb}).

\section{Prescient \aSchemes\ for Sparse Disjointness and Frequency Moments} \label{sec:disj}

In this section and the next, we describe \aschemes\ for the
$\s$-Disjointness (\kdisj) and Frequency Moment ($F_k$) problems.
These \aschemes\ contain the main ideas of the paper. 

\begin{table}[tb]
\centering
\begin{tabular}{r@{}l|c|c|c}
\multicolumn{2}{c|}{
 Scheme\ Costs} & Completeness & Prescience & Source \\ \hline
 $(\s \log\s)^{2/3},~ (\s \log\s)^{2/3})$&: $\s = \Omega(\log n)$ & Perfect & Prescient & Theorem \ref{thm:pdisjoint}\\
 $(\anncost \log n,~ c_v \log n)$&: $\anncost c_v \geq n$ & Perfect & Online & \cite{icalp}\\
$(\s\log n,~ \log n)$ && Perfect & Online &  \cite{icalp}\\
$(\anncost \log n\, \log_{c_v} \s,~ c_v \log n\, \log_v \s)$&: $\anncost=\s c_v^{-1/2}$	& Imperfect & Online & Theorem \ref{thm:of2}\\
\end{tabular}
\caption{Comparison of our \kdisj\ \aschemes\ to prior work. Ours
  are the first \aschemes\ to achieve annotation length and space usage 
that are both sublinear in $\s$ for $\s \ll \sqrt{n}$, and we strictly improve over the MA communication cost (online or prescient) of prior \aschemes\ whenever $\s=o(n)$.}
\label{tabdisj}
\end{table}

\subsection{Background: Optimal \aSchemes\ for Dense Problems}

We begin with a \ascheme\ achieving optimal tradeoffs between annotation
length and space usage for a broad class of dense problems.  Though this
\ascheme\ follows readily from prior work \cite{icalp, esa}, we describe it
in detail for completeness. This \ascheme is a good example of a {\em sum-check
scheme} as described in Section~\ref{sec:related}, and is based on the
Aaronson--Wigderson MA protocol for $\disj$~\cite{AW}.

\begin{proposition} \label{prop:dense} 
  Let $f^{(1)}, \dots, f^{(\ell)}$ denote the frequency vectors of $\ell$ data
  streams, each over the universe $[n]$.  Let $g$ be an $\ell$-variate
  polynomial of total degree $d$ over the integers. Let $F=\sum_{i=1}^n
  g(f^{(1)}_i, \dots, f^{(\ell)}_i)$, and let $o$ be an a priori upper bound
  on $|F|$.  Then for positive integers $\anncost, c_v$ with $\anncost c_v
  \geq n$, there is an online $(d \anncost (\log n + \log o),\, \ell c_v (\log
  n + \log o))$-\ascheme for computing $F$ in the non-strict turnstile update
  model.
\end{proposition}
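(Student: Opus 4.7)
The plan is to adapt the Aaronson--Wigderson sum-check protocol to the more general setting of a polynomial $g$ applied to $\ell$ frequency vectors. Since $\anncost \cdot c_v \geq n$, I can embed the universe $[n]$ into a rectangular grid $[c_v] \times [\anncost]$, writing each universe element $i$ as a pair $(j,k)$ with $j \in [c_v]$ and $k \in [\anncost]$. Fix a prime field $\mathbb{F}$ of size $\mathrm{poly}(n,o,d)$, so that $\log|\mathbb{F}| = O(\log n + \log o)$ and so that no wraparound occurs when reducing $F$ modulo $|\mathbb{F}|$. For each $t \in [\ell]$ and row $j \in [c_v]$, let $p^{(t)}_j(X) \in \mathbb{F}[X]$ be the unique polynomial of degree below $\anncost$ satisfying $p^{(t)}_j(k) = f^{(t)}_{j,k}$ for all $k \in [\anncost]$. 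Then $s_j(X) := g(p^{(1)}_j(X),\dots,p^{(\ell)}_j(X))$ has degree at most $d(\anncost-1)$, and critically $F = \sum_{j \in [c_v]} \sum_{k \in [\anncost]} s_j(k)$.

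The scheme itself runs as follows. Before seeing the stream, $V$ picks a uniformly random $r \in \mathbb{F}$ using her private coins. For each update $(i,\delta)$ to the $t$th frequency vector, $V$ decodes $i$ as a grid coordinate $(j,k)$ and increments the stored field element $p^{(t)}_j(r)$ by $\delta \cdot L_k(r)$, where $L_k(X) = \prod_{k' \neq k}(X-k')/(k-k')$ is the Lagrange basis polynomial for the node $k$ on $[\anncost]$. This maintains the correct value of $p^{(t)}_j(r)$ throughout, using $O(\ell c_v)$ field elements, i.e., $O(\ell c_v (\log n + \log o))$ bits of storage. After the stream ends, $P$ sends the claimed value of $F$ together with the coefficients of the polynomial $q(X) := \sum_{j \in [c_v]} s_j(X)$, which has degree at most $d(\anncost-1)$ and thus fits in $O(d\anncost(\log n + \log o))$ bits. $V$ then performs two checks: that $\sum_{k \in [\anncost]} q(k) = F$, and that $q(r) = \sum_{j \in [c_v]} g\bigl(p^{(1)}_j(r),\dots,p^{(\ell)}_j(r)\bigr)$, accepting iff both succeed.

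Completeness is immediate from the construction, since an honest $P$ sends exactly $q = \sum_j s_j$ and the correct $F$. For soundness, suppose $P$ sends some $(F',q')$ with $F' \neq F$ or $q' \neq \sum_j s_j$ as formal polynomials: if $q' = \sum_j s_j$ then $\sum_k q'(k) = F \neq F'$ and the first check fails deterministically, while if $q' \neq \sum_j s_j$ then Schwartz--Zippel gives that $q'(r)$ agrees with $\sum_j s_j(r)$ at a uniformly random $r$ with probability at most $d(\anncost-1)/|\mathbb{F}|$, which is $o(1)$ for our choice of field. The scheme is online because the only nontrivial annotation is sent at the end of the stream, and it works in the non-strict turnstile model because the argument never uses non-negativity of the frequencies --- all arithmetic happens in $\mathbb{F}$. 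The one step worth double-checking is the streaming update of $p^{(t)}_j(r)$: since $r$ is fixed at the start, the scalars $L_k(r)$ can be computed from $r$ and $k$ alone, so each update is processed in working memory without any appeal to past stream tokens, and the total space is dominated by the $\ell c_v$ field elements claimed.
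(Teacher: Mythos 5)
Your proposal is correct and follows essentially the same route as the paper's proof: both arrange each frequency vector in a $c_v \times \anncost$ grid, have the verifier maintain the low-degree extension's evaluations at a random point $r$ in a streaming fashion, have the prover send the degree-$d(\anncost-1)$ univariate polynomial obtained by summing $g$ of the row polynomials, and invoke Schwartz--Zippel for soundness. Your univariate ``one polynomial $p^{(t)}_j$ per row'' formulation is just the restriction of the paper's bivariate extension $\tilde{f}^{(t)}(X,Y)$ to fixed $Y=j$, and your explicit Lagrange-basis update rule is the streaming evaluation the paper cites from prior work.
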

\begin{proof}
  We work on $\F_q$, the finite field with $q$ elements, for a suitably large
  prime $q$; the choice $q > 2 d(n+o)^2$ suffices.  $V$ treats each
  $n$-dimensional vector $f^{(j)}$ as a $\anncost \times c_v$ array with
  entries in $\F_q$, using any canonical bijection between $[\anncost]
  \times [c_v]$ and $[n]$, and interpreting integers as elements of $\F_q$ in
  the natural way.  Through interpolation, this defines a unique bivariate
  polynomial $\tilde{f}^{(j)}(X, Y) \in \F_q[X,Y]$ of degree $\anncost-1$ in $X$
  and $c_v-1$ in $Y$, such that for all $x \in [\anncost]$, $y \in [c_v]$,
  $\tilde{f}^{(j)}(x,y) = f^{(j)}(x,y)$. 

  The polynomials $\tilde{f}^{(j)}$ can then be evaluated at locations outside
  $[\anncost] \times [c_v]$, so in the \ascheme\ $V$ picks a random position
  $r \in \F_q$, and evaluates $f^{(j)}(r,y)$ for all $j \in [\ell]$ and $y \in
  [c_v]$; $V$ can do this using $c_v$ words of memory per vector $f^{(j)}$ in
  a streaming manner~\cite[Theorem 4.1]{icalp}.  Let $\tilde{g}$ denote the
  total-degree-$d$ polynomial over $\F_q$ that agrees with $g$ at all inputs
  in $\F_q^\ell$.  $P$ then presents a polynomial $b(X)$ of degree at most
  $d(\anncost - 1)$ that is claimed to be identical to $\sum_{y \in [c_v]}
  \tilde{g}(\tilde{f}^{(1)}(X, y), \dots, \tilde{f}^{(\ell)}(X, y))$. 

  $V$ checks that $b(r) = \sum_{y \in [c_v]} \tilde{g}\left(\tilde{f}^{(1)}(r,
  y), \dots, \tilde{f}^{(\ell)}(r, y)\right)$. If this {\em sum check} passes,
  then $V$ believes $P$'s claim and accepts $\sum_{x \in [\anncost]} b(x)$ as
  the correct answer.  It is evident that this \ascheme satisfies perfect
  completeness. The proof of soundness follows from the Schwartz-Zippel lemma:
  if $P$'s claim is false, then
  \[
    \Pr\bigg[b(r) = \sum_{y \in [c_v]} \tilde{g}
      \left(\tilde{f}^{(1)}(r, y), \ldots, \tilde{f}^{(\ell)}(r, y)\right) 
    \bigg]
    \leq d (\anncost-1)/q \, . \qedhere
  \]
\end{proof} 

\subsection{A Prescient \aScheme\ for Sparse Disjointness}

An important special case of the communication problem \disj is when Alice's
and Bob's input sets are promised to be small, i.e., have size at most $\s \ll
n$. These should be thought of as {\em sparse} instances. The sparsity parameter
$\s$ has typically been denoted by the letter $k$ in the communication complexity literature, and the problem
has typically been referred to as $k$-\disj\ rather than \kdisj; we
use $\s$ rather than $k$ for consistency with our notation in the rest of the paper (where $\s$ denotes the sparsity of a data stream).

Among the original
motivations for studying this variant is its relation to the
clique-vs.-independent-set problem introduced by
Yannakakis~\cite{Yannakakis91} to study linear programming formulations for
combinatorial optimization problems. More recent motivations include
connections to property testing~\cite{BlaisBM12}. A clever protocol of
H{\aa}stad and Wigderson~\cite{HastadW07} gives an optimal $O(\s)$
communication protocol for \kdisj, improving upon the trivial $O(\s\log n)$ 
and the easy $O(\s\log \s)$ bounds.
This protocol requires considerable interaction between Alice and Bob, a
feature that turns out to be necessary. Recent results of Buhrman
\etal~\cite{BuhrmanGSMdW12} and Dasgupta \etal~\cite{DasguptaKS12} give tight
$\Theta(\s\log \s)$ bounds for \kdisj\ in the one-way model. Very recently,
Brody \etal~\cite{BrodyCK12} and Sa\u{g}lam and Tardos~\cite{SaglamT12-priv}
have given tight rounds-vs.-communication tradeoffs for \kdisj.

Here we obtain the first nontrivial bounds for \kdisj\ in the annotated
streams model, and thus also in the online MA communication model.

\begin{definition} \label{def:disj} 
In the \kdisj\ problem, the data stream specifies two multi-sets $S, T \subseteq [n]$, with $\|S\|_0,\|T\|_0\leq \s$, where $\|S\|_0$
denotes the number of distinct items in $S$.
An update of the form $((0, i), \delta)$ is interpreted as an insertion of $\delta$ copies of item $i$ into set $S$, and an update of the form $((1, i), \delta)$
is interpreted as an insertion of $\delta$ copies of item $i$ into $T$.
The goal is to determine whether or
not $S$ and $T$ are disjoint. 
\end{definition}

Notice Definition \ref{def:disj} allows $S$ and $T$ to be multi-sets, but assumes
the strict turnstile update model, where the frequency of each item is non-negative.

\begin{theorem} \label{thm:pdisjoint}
Assume $\s > \log n$. There is a \pscheme{( (\s \log \s)^{2/3}, (\s \log \s)^{2/3})} for \kdisj\ with perfect completeness
in the strict turnstile update model. In particular, the MA-communication complexity of \kdisj\ is $O( (\s \log \s)^{2/3})$.
Any prescient $(\anncost, c_v)$ protocol requires $\anncost c_v = \Omega(\s)$.
\end{theorem}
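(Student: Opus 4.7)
The plan is universe reduction followed by a dense sum-check together with a short ``collision correction'' supplied by the prescient prover. For a parameter $r$ to be optimized (think $r \approx m^{4/3}$), $P$ samples a pairwise independent hash $h : [n] \to [r]$; because he is prescient and knows both $S$ and $T$, he may resample until the set $L^\star := \{(x, y) : x \neq y,\ h(x) = h(y),\ f^{(S)}_x > 0,\ f^{(T)}_y > 0\}$ of cross-collision ordered pairs has size $O((m \log m)^{2/3})$, which succeeds with constant probability since $\E|L^\star| \leq |S|\,|T|/r \leq m^2/r$. $P$ sends the $O(\log n)$-bit description of $h$, the explicit list $L^\star$, and then runs the sum-check \ascheme of Proposition~\ref{prop:dense} with $\ell = 2$ and polynomial $g(a,b) = ab$ on the derived frequency vectors $\tilde f^{(S)}, \tilde f^{(T)} \in \Z^r$ obtained by replacing each stream update $(i, \delta)$ by $(h(i), \delta)$. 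During her pass, $V$ maintains both the state required by that sum-check and the individual frequencies $f^{(S)}_x, f^{(T)}_y$ for the items appearing as coordinates of pairs in $L^\star$.

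Expanding buckets yields the key identity
\[
\sum_{j \in [r]} \tilde f^{(S)}_j\, \tilde f^{(T)}_j \;=\; \sum_x f^{(S)}_x f^{(T)}_x \;+\; \sum_{(x, y) \in L^\star} f^{(S)}_x f^{(T)}_y \, .
\]
In the strict turnstile model every frequency is non-negative, so the first term on the right is $0$ precisely when $S \cap T = \emptyset$. Letting $C := \sum_{(x, y) \in L} f^{(S)}_x f^{(T)}_y$ denote the quantity $V$ reconstructs from the prover-supplied list $L$, the verifier accepts (declaring ``disjoint'') iff the sum-check's claimed value equals $C$, subject to three sanity checks on $L$: every pair satisfies $x \neq y$, every pair satisfies $h(x) = h(y)$, and no pair appears twice.

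Perfect completeness is immediate. For soundness, the guarantee of Proposition~\ref{prop:dense} ensures the claimed derived sum matches the true one except with small probability; the three sanity checks force every pair in $L$ to be a genuine cross-collision pair, and non-negativity then gives $C \leq \sum_{(x, y) \in L^\star} f^{(S)}_x f^{(T)}_y$, which is strictly less than the true derived sum whenever $S \cap T \neq \emptyset$. Hence dishonest provers are caught with high probability. Balancing $O(|L^\star| \log n)$ against the $O(\sqrt{r}\cdot \polylog)$ cost of the sum-check and optimizing $r$ drives both $\hcost$ and $\vcost$ to $O((m \log m)^{2/3})$. The matching lower bound $\anncost \cdot c_v = \Omega(m)$ follows from the standard conversion of prescient \aschemes into MA communication protocols (Section~\ref{sec:models}), combined with the known $\Omega(m)$ MA lower bound for dense $m$-disjointness on a universe of size $m$. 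The main subtlety I anticipate is ruling out prover strategies that try to smuggle the intersection weight $\sum_x f^{(S)}_x f^{(T)}_x$ into $C$ through clever padding of $L$; this is exactly what the three sanity checks --- in particular $x \neq y$ --- prevent once combined with non-negativity in the strict turnstile model.
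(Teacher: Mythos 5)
Your soundness mechanism is correct, and the correction-term identity is a genuinely nice way to see it: writing $D$ for the derived inner product certified by the sum-check, the three sanity checks plus non-negativity give $D - C \geq \sum_x f^{(S)}_x f^{(T)}_x$, with equality only when $L$ is the complete cross-collision list, so $D=C$ really does certify disjointness. The problem is the cost accounting. Your prover must transmit the colliding items explicitly, and your verifier must know their identities to evaluate $h$ on them and to track $f^{(S)}_x, f^{(T)}_y$; each item costs $\Theta(\log n)$ bits, so both costs acquire a term $\Theta\bigl((\s^2/r)\log n\bigr)$. Balancing this against the $O(\sqrt{r}\log r)$ sum-check cost yields $\Theta\bigl((\s\log \s)^{2/3}\cdot(\log n/\log \s)^{1/3}\bigr)$, which matches the claimed $O((\s\log\s)^{2/3})$ only when $n = \poly(\s)$. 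The theorem's hypothesis is merely $\s > \log n$, so $\log n$ may be $\s^{\Omega(1)}$ and your bound is then polynomially worse --- and this superpolynomial-universe regime is exactly the one the paper cares about. The paper's prescient scheme avoids collisions rather than correcting for them: by Fredman--Koml\'os there is a $(2\s)$-perfect hash family of size $2^{O(\s^2/r)}$ (this is where $\s = \Omega(\log n)$ is used), so the prescient prover names a function that is genuinely injective on $S\cup T$ using $O(\s^2/r)$ bits \emph{in total} --- no collision list, no per-item $\log n$, no frequency tracking --- and soundness is then immediate since any true intersection survives any mapping. Your pairwise-independent-hash-plus-exception-list design is essentially the paper's \emph{online} scheme (Theorem~\ref{thm:of2}), whose costs do carry $\log n$ factors for precisely this reason; it does not prove the prescient bound as stated.

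Two smaller points. First, your verifier only ever ``declares disjoint.'' Under the paper's definition a scheme must let the prover establish the answer in both cases: if $V$ outputs ``not disjoint'' whenever $D \neq C$, a cheating prover on a disjoint instance forces a wrong answer by withholding collision pairs, and if she outputs $\fail$ instead, completeness fails on intersecting inputs. The fix is the paper's one-line first case: when $S \cap T \neq \emptyset$ the prescient prover names a common element and $V$ verifies it directly. Second, the resampling step is unnecessary --- a prescient prover can deterministically choose any $h$ achieving at most the expected number of collisions --- though since this only affects completeness it is harmless. Your lower-bound sketch (embed dense \disj\ on a universe of size $\s$ and invoke the known $\Omega(\s)$ MA bound) agrees with the paper's.
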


\begin{proof}
Obviously if $S$ and $T$ are not disjoint, the prescient prover can provide an item $i \in S \cap T$ at the start of the stream
and the verifier can check that $i$ indeed appears in both $S$ and $T$. The total space usage and annotation length is just $O(\log n)$ 
in this case.

Suppose now that $S$ and $T$ are disjoint. We first recall that a
$(\sqrt{n}\log n, \sqrt{n}\log n)$-\ascheme\ for \disj\ follows from 
Proposition \ref{prop:dense}, with $f^{(1)}$ and $f^{(2)}$ set to the indicator vectors of $S$ and $T$ respectively, and $g$ equal to the product function.
We refer to this as the dense \disj \ascheme\ because its cost does not improve if $|S|$ and $|T|$ are both $o(n)$.

Our prescient \ascheme\ for \kdisj\ works as follows. At the start of the stream, the prover describes a hash function
$h:[n]\to[r]$, for some smaller universe $[r]$, with the property
that $h$ is injective on $S \cup T$. 
We will write $h(S)$ to denote the result of applying $h$ to every
member of $S$. 
The parties can now run the 
dense \disj\ \ascheme\ whereby $P$ convinces 
$V$ that $h(S)$ and $h(T)$ are disjoint. Given the existence of an 
injective function $h$,
perfect completeness follows from the fact that if $S$ and $T$ are disjoint, so are
$h(S)$ and $h(T)$, combined with the perfect completeness of the dense \disj\ \ascheme.
Soundness follows from the fact that if $i \in S \cap T$, then $h(i) \in h(S) \cap h(T)$ i.e. if
$S$ and $T$ are not disjoint, then the same holds trivially for $h(S)$ and $h(T)$.

The dense \disj\ \ascheme\ run on $h(S)$ and $h(T)$ requires annotation
length and space usage $O(\sqrt{r} \log r)$. 
We now show that, for a suitable choice of $r$, $P$'s description of $h$
is also limited to $O(\sqrt{r} \log r)$ communication, balancing out the
cost of the rest of the \ascheme.

A family of functions $\mathcal{F} \ceq [r]^{[n]}$ is said to be
$\kappa$-perfect if, for all $S \ceq [n]$ with $|S| \le \kappa$, there
exists a function $h \in\mathcal{F}$ that is injective when restricted to $S$.
Fredman and Koml\'os~\cite{FredmanK84} have shown that for
all $n \geq r \geq \kappa$, there exists a $\kappa$-perfect family
$\mathcal{F}$, with
\[
  |\cF| ~\le~ (1 + o(1)) \left(
    \frac{\kappa \log n}{-\log(1-t(r,\kappa))}\right) \, ,
\]
where
\[
  t(r,\kappa) ~:=~ \prod_{j=1}^{\kappa-1} \left(
    1 - \frac{j}{r}\right) \, .
\]
For $r \ge 2\kappa$, we can use the crude approximation
\[
  -\log(1-t(r,\kappa))
  ~\ge~ t(r,\kappa)
  ~\ge~ \left(1 - \frac{\kappa}{r}\right)^\kappa
  ~\ge~ e^{-2\kappa^2/r}
\]
to obtain the bound $|\cF| = O(\kappa e^{2\kappa^2/r} \log n)$, which implies
\[
  \log|\cF| ~=~ O(\kappa^2/r) \, ,
\]
for $\kappa^2/r = \Omega(\log \kappa)$ and $\kappa = \Omega(\log n)$.

Let us pick a family $\cF$ that is $(2\s)$-perfect.  Once $P$ and $V$
agree upon such a family $\cF$, the prover, upon seeing the input
sets $S$ and $T$, can pick $h \in\cF$ that is injective on $S\cup T$.
Describing $h$ requires $O(\s^2/r)$ bits; $P$ sends this to $V$ before the stream
is seen, and $V$ stores it while observing the stream in order to run
the dense
\disj\ \ascheme\ on $h(S)$ and $h(T)$.  
To balance out this
communication with the $O(\sqrt{r} \log r)$ cost of running the
dense \disj\ \ascheme\ 
on $h(S)$ and $h(T)$, we choose $r$ so that
\[
  \frac{\s^2}{r} ~=~ \Theta(\sqrt{r} \log r) \, .
\]
This is achieved by setting $r = \s^{4/3}/\log^{2/3} \s$. The resulting
upper bound is that both the annotation length and verifier's space usage are $O\left((\s\log \s)^{2/3} \right).$

The lower bound follows from known lower bounds for dense streams \cite{icalp}.
\end{proof}

\subsection{A Prescient \aScheme\ for Frequency Moments }
\label{sec:pf2}

\begin{table}[tb]
\centering
\begin{tabular}{r@{}l|c|c|c}
\multicolumn{2}{c|}{ \aScheme\ Costs} & Completeness & Prescience & Source \\ \hline
 $(k^2 \anncost \log n,~ kc_v\log n)$&: $\anncost c_v \geq n$ & Perfect & Online & \cite{icalp}\\
$(\s \log n,~ \log n)$ && Perfect & Online &  \cite{icalp}\\
 $(k^2\s^{2/3} \log n,~ k\s^{2/3} \log n)$&& Perfect & Prescient & Theorem \ref{thm:pf2}\\
$(k^2 \s\, c_v^{-1/2} \log n\, \log_{c_v} \s,~ k c_v \log n\, \log_{c_v}\s)$&: $c_v > 1$ & Imperfect & Online & Theorem \ref{thm:of2}\\
\end{tabular}
\caption{Comparison of our \fk\ \aschemes\ to prior work. Ours are the first \aschemes\ to achieve annotation length and space usage 
that are both sublinear in $\s$ for $\s \ll \sqrt{n}$, and we strictly improve over the MA communication cost of prior protocols (online or prescient) whenever $\s=o(n)$.}
\label{tabkmoment}
\end{table}

We now present prescient \aschemes\ for the $k$th Frequency
Moment problem, \fk. 
\begin{definition} 
\label{def:fk} In the $F_k$ problem, the data stream $\bx$ consists of a sequence of updates of the form $(i, \delta)$, and the frequency 
of item $i$ is defined to be $f_{i} = \sum_{(j_\ell, \delta_\ell) \in \bx:
  j_\ell = i} \delta_\ell$. 
The goal is to compute $\fk = \sum_{i \in [n]} f_i^k.$
\end{definition}

The idea behind the \ascheme, as in the case of \kdisj, is that $P$ is supposed to specify a ``hash function'' $h$ to reduce the universe size in
a way that does not introduce false collisions. However, for 
\fk\
it is essential that $V$ ensure $h$
is truly injective on the items appearing in the data stream. This is in contrast to 
\kdisj, where a weaker notion than injectiveness was sufficient to guarantee soundness.
The fundamental difference between the two problems is that for \kdisj, collisions only ``hurt the prover's claim'' that the two sets
are disjoint, whereas for \fk\ the prover could try to use collisions to convince the verifier that the answer to the query
is higher or lower than the true answer.


\begin{theorem} \label{thm:pf2}
There is a \pscheme{(k^2\s^{2/3} \log n, k\s^{2/3} \log n)}
for computing \fk\ over a data stream of sparsity $\s$ in the strict turnstile update model.
This \ascheme\ has perfect completeness. Any prescient $(\anncost, c_v)$ protocol requires $\anncost c_v = \Omega(\s)$.
\end{theorem}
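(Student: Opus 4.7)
\begin{proofof}[Proof Plan:]
The plan is to mirror the structure of Theorem \ref{thm:pdisjoint}, combining a Fredman--Koml\'os universe reduction with the dense sum-check scheme of Proposition \ref{prop:dense}. Let $\mathcal{F} \subseteq [r]^{[n]}$ be a $(2\s)$-perfect family with $r = \s^{4/3}/\log^{2/3}\s$; as in the \kdisj\ proof, $\log|\mathcal{F}| = O(\s^{2/3} \log^{2/3}\s)$. The prescient prover, knowing $\mathrm{supp}(f)$ in advance, picks $h \in \mathcal{F}$ that is injective on $\mathrm{supp}(f)$ and transmits $h$ at the start of the stream. The verifier then interprets each update $(i,\delta)$ in $\bx$ as an update $(h(i),\delta)$ to a derived stream over universe $[r]$ with frequency vector $h(f) \in \mathbb{Z}^r$, and runs the dense scheme of Proposition \ref{prop:dense} with $\ell=1$ and $g(z)=z^k$ on this derived stream, using parameters $\anncost = c_v = \sqrt{r}$. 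This contributes $O(k^2 \sqrt{r}\log n) = O(k^2 \s^{2/3}\log n)$ to the annotation length and $O(k \sqrt{r} \log n) = O(k \s^{2/3}\log n)$ to the verifier's space, exactly matching the cost of describing $h$.

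When $h$ is injective on $\mathrm{supp}(f)$, the multiset of non-zero entries of $h(f)$ equals that of $f$, hence $F_k(h(f)) = F_k(f)$. Since Proposition \ref{prop:dense} is perfectly complete, the verifier accepts the correct value of $F_k(f)$ with probability one, giving perfect completeness overall. The final lower bound is inherited from the known lower bound for dense streams cited at the end of Theorem \ref{thm:pdisjoint}: any prescient protocol must have $\anncost \cdot c_v = \Omega(\s)$.

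The main obstacle — and the reason the analysis is genuinely different from \kdisj\ — is soundness. In the strict turnstile model, whenever two elements $i_1 \neq i_2 \in \mathrm{supp}(f)$ collide under $h$, convexity gives $(f_{i_1}+f_{i_2})^k > f_{i_1}^k + f_{i_2}^k$, so $F_k(h(f)) > F_k(f)$. A dishonest prover could therefore pick $h' \in \mathcal{F}$ with deliberate collisions and then honestly execute the sum-check on the derived stream, causing the verifier to accept an inflated value. Thus the verifier must be convinced that $h$ is injective on $\mathrm{supp}(f)$ before accepting the sum-check answer. The plan is to have $P$ supply this certificate using prescience: $P$ commits in the initial annotation to the value $\s = |\mathrm{supp}(f)|$ and then uses the same kind of sum-check machinery as in Proposition \ref{prop:dense} to convince $V$ that the number of non-zero coordinates of $h(f)$ is also $\s$. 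Because $\sum_{j \in [r]} \mathbf{1}[h(f)_j \neq 0] \leq |\mathrm{supp}(f)|$ with equality if and only if $h$ is injective on $\mathrm{supp}(f)$, this check closes the soundness gap. The delicate part is implementing this auxiliary check within the same $\tilde O(\s^{2/3})$ budget: since the non-zero indicator is not a polynomial, the verifier cannot simply invoke Proposition \ref{prop:dense} as a black box, and instead must use the prescient prover to reveal the (hashed) support list in a form that $V$ can fold into the fingerprint of $h(\bx)$ it already maintains. Choosing $r$ so that both $\log|\mathcal{F}|$ and the sum-check cost are $\tilde O(\s^{2/3})$ then yields the claimed bound.
\end{proofof}
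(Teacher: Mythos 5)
Your setup (Fredman--Koml\'os universe reduction, dense sum-check on the mapped-down stream with $r \approx \s^{4/3}$, cost balancing at $\s^{2/3}$) matches the paper, and you correctly isolate the one place where \fk\ genuinely differs from \kdisj: soundness requires $V$ to \emph{verify} that $h$ is injective on the support, since in the strict turnstile model collisions strictly inflate $F_k$. But your mechanism for that verification is where the proposal has a real gap. You propose to certify injectivity by having $P$ commit to $\s = |\mathrm{supp}(f)|$ and then prove that $h(f)$ also has exactly $\s$ non-zero coordinates. This reduces injectivity to two \emph{exact distinct-element counts} ($F_0$ of the original stream and of the derived stream), and you yourself note that the non-zero indicator is not a low-degree polynomial, so Proposition \ref{prop:dense} does not apply. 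Exact $F_0$ is a frequency-based function whose known \aschemes\ (e.g.\ the $\length^{3/4}$-cost corollary in Section 5 of this paper) themselves rely on a verified perfect hash function --- so the reduction is circular --- and the fallback you sketch, having the prescient prover ``reveal the (hashed) support list,'' costs $\Omega(\s \log n)$ annotation and blows the $\tilde O(\s^{2/3})$ budget. As written, the soundness argument does not close.

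The paper's proof fills exactly this hole with a different and essential idea: the \inject\ \ascheme\ (Lemma \ref{lemma:injection}). Rather than counting distinct elements, it checks that every bucket $b$ is \emph{pure} via the identity $v_b^2 = u_b w_b$ versus the strict Cauchy--Schwarz inequality $v_b^2 < u_b w_b$ for impure buckets, where $u_b = \sum_j f_{(j,b)}$, $v_b = \sum_j f_{(j,b)}\, j$, $w_b = \sum_j f_{(j,b)}\, j^2$ (non-negativity of the $f_{(j,b)}$ in the strict turnstile model is what makes the inequality strict). The aggregate test $\sum_b v_b^2 = \sum_b u_b w_b$ involves only low-degree polynomials of streamable vectors, so it \emph{is} a black-box application of Proposition \ref{prop:dense}, costing $O(\anncost \log r)$ annotation and $O(c_v \log r)$ space for any $\anncost c_v \ge r$ --- well within budget at $\anncost = c_v = \s^{2/3}$, $r = \s^{4/3}$. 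Without this (or an equivalent polynomial encoding of injectivity), the theorem's soundness claim is not established.
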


\begin{proof}
The idea is to have the prover specify for the verifier
a perfect hash function $h: [n] \rightarrow [r]$, where $r$ is to be determined later, i.e.
$P$ specifies a hash function $h$ such that for all $x \neq y$ appearing in at least one update in the data stream,
$h(x) \neq h(y)$. The verifier stores the description of $h$, and while observing the stream
 runs the dense $F_k$ \ascheme\ of Proposition \ref{prop:dense} on the derived stream in which each update $(i, \delta)$ is replaced with the update $(h(i), \delta)$.

As discussed above, it is essential that $V$ ensure $h$ is injective
on the set of items that have non-zero frequency, as otherwise $P$ could try to introduce collisions to try to trick the verifier. 
To deal with this, 
we introduce a mechanism by which $V$ can ``detect'' collisions.

\begin{definition}
\label{def:inject}
Define the problem \inject\ as follows.
We observe a stream of tuples $t_i = ( (x_i, b_i), \delta_i)$.
Each $t_i$ indicates that $\delta_i$ copies of item $x_i$ are placed in
bucket $b_i \in [r]$.
We allow $\delta_i$ to be negative, modeling deletions, and
refer to the quantity $f_{(j, b)} = \sum_{i: (x_i, b_i) = (j, b)} \delta_i$
as the \emph{count} of pair $(j, b)$. 
We assume the strict turnstile model, so that 
for all pairs $(j, b)$ we have $f_{(j, b)} \geq 0$.

We say that the stream is an injection if for every two pairs $(j, b)$ and $(j', b)$ with positive counts, it holds that $j=j'$. 
Define the output as 1 if the stream defines an injection, and 0 otherwise.  
\end{definition}

\begin{lemma}
\label{lemma:injection}
For any $\anncost c_v \geq r$, there is an \oscheme{(\anncost \log r,
  c_v\log r)} for determining whether a stream in the strict turnstile
model is an injection.
\end{lemma}
\begin{proof}
Say that bucket $b$ is \emph{pure} if there is at most one $j \in [n]$ such that $f_{(j, b)}> 0$.
The stream defines an injection if and only if every bucket $b$ is pure. 

Notice that a bucket $b$ is pure if and only if the variance of the item identifiers mapping to the bucket with positive count is zero. 
Intuitively, our \ascheme\ will compute the sum of the these variances across all buckets $b$; this sum will be zero if and only if the stream defines
an injection. Details follow.

Define three $r$-dimensional vectors $u, v, w$ as follows:

$$u_b = \sum_{j \in [n]} f_{(j, b)},$$
$$v_b = \sum_{j \in [n]} f_{(j, b)}j,$$
$$w_b = \sum_{j \in [n]} f_{(j, b)} j^2.$$

It is easy to see that if bucket $b$ is pure then $v_b^2 = u_b \cdot w_b$.
Moreover, if bucket $b$ is impure then $v_b^2 < u_b w_b$;
this holds by the Cauchy-Schwarz inequality applied to the $n$-dimensional vectors whose $j$th entries are $\sqrt{f_{(j, b)}}$ and $\sqrt{f_{(j, b)}} \cdot j$ respectively (the strict inequality holds because for an impure bucket $b$, the vector
given by $\sqrt{f_{(j, b)}} \cdot j$ is not a scalar multiple of the vector given by $\sqrt{f_{(j, b)}}$). 
Here, we are exploiting the assumption that $f_{(j, b)} \geq 0$ for all pairs $(i, b)$, as this allows us to conclude that all $\sqrt{f_{(j, b)}}$ values are 
real numbers. 

It follows that $\sum_{b \in [r]} v_b^2 = \sum_{b \in [r]} u_b \cdot w_b$ if and only if the stream defined an injection. Both quantities can be computed 
using the ``dense'' \ascheme\ of Proposition \ref{prop:dense}.
Notice that each update $t_i = ((x_i, b_i), \delta_i)$ contributes independently to each of the vectors $u$, $v$, and $w$, and hence 
it is possible for $V$ to run the \ascheme\ of Proposition  \ref{prop:dense} on these vectors as required. 
This yields an online $(\anncost \log r,c_v \log r)$-\ascheme\ for the injection problem for any $\anncost c_v \geq r$ as claimed.
%
\end{proof}

Returning to our \fk\ \ascheme, $P$ specifies a hash function $h$ claimed to be one-to-one on the set of items
that appear in one or more updates of the stream $\bx$. $V$ verifies that $h$ is injective using the \ascheme\ of Lemma~\ref{lemma:injection}.
 If this claim is true, then $\fk(\mathcal{\bx})=\fk(h(\bx))$, the frequency moment of the mapped-down stream, and $P$ can prove this by running
the \ascheme\ of \cite[Theorem 4.1]{icalp} on the derived stream $h(\bx)$. 

Perfect completeness follows from $P$'s ability to find a perfect hash
function just as in Theorem \ref{thm:pdisjoint}. 
%
Soundness follows from the soundness of the \inject\ \ascheme\ of Lemma \ref{lemma:injection}, in addition to the soundness
property of the $F_k$ \ascheme\ of \cite[Theorem 4.1]{icalp}. 

To analyze the costs, note that by using the hash family of Fredman and Koml\'{o}s \cite{FredmanK84}, the annotation length and space cost due to specifying and storing the hash function $h$ is $O(\s^2\log n/r)$.
The annotation length and space cost of the dense $F_k$ \ascheme\ of Proposition \ref{prop:dense} are $O(k^2 \anncost \log r)$ and $O(kc_v \log r)$ for any $\anncost c_v \geq r$.
The annotation length and space cost of the \inject\ \ascheme\ can be set to $O(\anncost \log r)$ and $O(c_v \log r)$ respectively.
Setting $r=\s^{4/3}$ and $\anncost =  c_v = \s^{2/3}$ yields the desired costs.
\end{proof}

\eat{One point to remark upon is that the \inject\ \ascheme\ assumes that
the associated frequencies of its input stream are non-negative: it
may give erroneous answers otherwise. 
However, we are able to apply our \ascheme\ to input streams which
contain negative values, since the input to \inject\ does {\em not}
contain negative values. 
That is, within the \inject\ \ascheme, we treat deletions of items in
the input stream as insertions of items into the derived stream of
$(x_i, b_i)$ updates. 
This does not affect the frequency moment calculation, because the
\inject\ \ascheme\ is used to check the property that each bucket is
``pure''. 
However,  this approach of treating deletions as insertions within the
\inject\ \ascheme\ is the reason that our costs are proportional to the
stream length $\length$ rather than the stream sparsity $\s$.
}

\section{An Online \aScheme\ for Frequency Moments} \label{sec:of2}

We now give an online version of \fk\ \ascheme\ of Theorem \ref{thm:pf2}. A simple modification of this \ascheme\ yields the \ascheme\
for \kdisj\ with analogous costs as claimed in Row 4 of Table \ref{tabdisj}. 
In addition to avoiding the use of prescience, our online \ascheme\
avoids requiring $V$ to explicitly store the hash function sent by
$P$, allowing us to achieve a much wider range of tradeoffs between
annotation size and space usage relative to Theorems \ref{thm:pdisjoint} and \ref{thm:pf2}.


\begin{theorem} \label{thm:of2}
For any $c_v > 1$, there is an online $(k^2 \s c_v^{-1/2} \log n\, \log_{c_v} \s,~ kc_v \log n\, \log_{c_v} \s)$-\ascheme\ for 
\fk\ in the strict turnstile model
for a stream of sparsity $\s$ over a universe of size $n$. Any online $(\anncost, c_v)$-scheme for this problem with $\anncost \geq \log n$ 
requires $\anncost c_v = \Omega(\s \log(n/\s))$.
\end{theorem}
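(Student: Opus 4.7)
The plan is to convert the prescient scheme of Theorem~\ref{thm:pf2} into an online scheme and simultaneously to generalize it so as to admit a smooth tradeoff between annotation length and verifier space. Three difficulties distinguish the online setting: (i) $P$ cannot use the Fredman--Koml\'{o}s construction to commit to a perfect hash at the start of the stream, since it depends on the support of the input; (ii) once $c_v \ll \s^{2/3}$, $V$ cannot afford to explicitly store the hash function sent by $P$; and (iii) any post-hoc modification of the hash must be certified using only the sketches maintained during the single pass.

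My construction replaces the perfect hash of Theorem~\ref{thm:pf2} by a pairwise-independent hash $h_0 : [n] \to [r]$ with $r := \s\sqrt{c_v}$, committed by $P$ at the start of the stream via a seed of length $O(\log n)$. As each update $(i,\delta)$ arrives, $V$ evaluates $h_0(i)$ on the fly and feeds the derived stream $h_0(\bx)$ into two parallel sub-protocols: the dense $F_k$ scheme of Proposition~\ref{prop:dense} on the mapped-down universe $[r]$, which costs $O(c_v \log r)$ space and $O(k \cdot (r/c_v) \log r)$ annotation, and the \inject\ sub-protocol of Lemma~\ref{lemma:injection} applied to the pairs $((x_t,h_0(x_t)),\delta_t)$. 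With this choice of $r$ the expected number of colliding non-empty items is $O(\s^2/r) = O(\s/\sqrt{c_v})$, and after the stream $P$ sends this small collision set $C$ together with the frequencies $f_i$ for $i \in C$, producing the correction $\sum_{b}\bigl[\sum_{i \in C : h_0(i)=b} f_i^k - \bigl(\sum_{i \in C : h_0(i)=b} f_i\bigr)^k\bigr]$ that transforms $F_k(h_0(\bx))$ into $F_k(\bx)$.

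To tolerate small $c_v$, $V$ does not even store the seed of $h_0$: she keeps only a short fingerprint of it and asks $P$ to re-transmit the seed at the end of the stream for verification against that fingerprint. All bucket-level data needed to audit $P$'s description of $C$ are already implicit in the two sub-protocols: the dense $F_k$ sum-check exposes bucket sums $u_b = \sum_{i : h_0(i)=b} f_i$, while \inject's moment sketches expose $v_b = \sum_{i : h_0(i)=b} f_i \cdot i$ and $w_b = \sum_{i : h_0(i)=b} f_i \cdot i^2$. These three quantities per bucket are enough to pin down every colliding item and its frequency: $P$ cannot omit a collision without violating $v_b^2 = u_b w_b$, nor invent a spurious item without violating the bucket-sum identity. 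The factor $\log_{c_v} \s$ enters because a single invocation of \inject\ certifies purity only up to $O(c_v)$ collisions per bucket; iterating the inject-plus-sum-check combination across $O(\log_{c_v} \s)$ nested bucket scales, each refining the buckets deemed impure by the previous stage, resolves all higher-order collisions at a multiplicative cost of $O(\log_{c_v} \s)$ in both annotation and space.

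The main obstacle will be the soundness of the patching step: any post-stream modification of $h_0$ must be forced to agree with what $V$ actually observed, even though she cannot revisit the stream. This is precisely why both sub-protocols chosen are sum-check schemes over the mapped-down universe $[r]$, so that $V$'s one-pass sketches continue to support polynomial-identity checks on the patched frequencies. Finally, the lower bound $c_a \cdot c_v = \Omega(\s \log(n/\s))$ follows from Theorem~\ref{thm:denseindex} via a standard reduction: an online $F_k$ scheme on the indicator vector of Alice's sparse set can be used to answer Bob's sparse point query by comparing $F_k$ before and after a unit update at the queried index.
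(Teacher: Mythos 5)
Your proposal follows the paper's high-level architecture---a pairwise-independent hash $h:[n]\to[r]$ with $r=\s\sqrt{c_v}$ announced up front, the dense sum-check of Proposition~\ref{prop:dense} run on the mapped-down universe, a post-stream list of colliding items with claimed frequencies, and an iteration contributing the $\log_{c_v}\s$ factor---but the central soundness step is not correct as described. You claim that the bucket moments $u_b=\sum_i f_{(i,b)}$, $v_b=\sum_i f_{(i,b)}i$, $w_b=\sum_i f_{(i,b)}i^2$ ``pin down every colliding item and its frequency,'' so that $P$ cannot misreport the collision set $C$. This fails for two reasons. First, $V$ never has access to the individual bucket moments: the sum-check schemes only let $V$ maintain evaluations of low-degree extensions of $u,v,w$ at a single random point and certify \emph{aggregate} polynomial identities such as $\sum_b v_b^2=\sum_b u_b w_b$; she cannot read off per-bucket data to audit $P$'s list. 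Second, even if she could, three moments do not determine the contents of a bucket containing two or more items (four or more unknowns), so a dishonest $P$ could still misstate the frequencies of colliding items. Verifying the claimed frequencies of the $O(\s c_v^{-1/2})$ items in $C$ is exactly the hard part of the theorem; the paper isolates it as the \multiindex\ problem (Lemma~\ref{lemma:multiindex}) and solves it with a multi-stage protocol in which each stage $j$ uses a \emph{fresh} pairwise-independent hash $h_j$, $P$ declares which exceptions $h_j$ isolates, \subinj\ certifies (restricted to the declared buckets) that those items really are isolated, and \subftwo\ applied to $f-f^*$ restricted to those buckets certifies that the claimed frequencies of the newly isolated items are exact. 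Your proposal gestures at an iteration over ``nested bucket scales,'' but the stated rationale---that one \inject\ invocation ``certifies purity only up to $O(c_v)$ collisions per bucket''---is not the actual mechanism: \inject\ certifies purity exactly; the $\log_{c_v}\s$ stages are needed because each fresh $h_j$ isolates each remaining exception only with probability $1-O(c_v^{-1/2})$, so $O(\log_{c_v}\s)$ rounds are required before the exception list empties.

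Two smaller points. Your suggestion that $V$ store only a fingerprint of the seed of $h$ contradicts your own requirement that $V$ evaluate $h(i)$ on the fly to feed the derived stream into the sum-checks; in any case the seed is $O(\log n)$ bits and storing it is not the bottleneck (what $V$ must avoid storing is the exception lists $L_j$, which $P$ can encode implicitly by tagging each item of $L_0$ with the first stage at which it is isolated). Your reduction for the lower bound---using the $\fk$ scheme to answer a sparse point query by comparing values before and after a unit update---matches the paper's intended ``easy reduction'' from $(\s,n)$-sparse \idx\ to $\fk$ and is fine.
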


Notice that the annotation length is less than $\s \log n$ for any $c_v = \s^{\Omega(1)}$, and therefore
this protocol is not subsumed by the simple ``sparse'' \ascheme\
(second row of Table \ref{tabkmoment}) in which $P$ just
replays the entire stream in a sorted order, and $V$ checks this is done correctly using fingerprints.
Notice also that the product of the space usage and annotation length is
$k^3 \s c_v^{1/2} \log^2 n \log^2_{c_v} \s$, which is in $o(n)$ for many interesting parameter settings.
This improves upon the dense sum-check \ascheme
(first row of Table \ref{tabkmoment}) in such cases.


\subsection{An Overview of the \aScheme}

In order to achieve an online \ascheme,
we examine how to construct perfect hash functions such as those used in the prescient \fk\ \ascheme\ of Theorem \ref{thm:pf2}.
 Let $S$ be the set of $\s$ items with non-zero frequency at the end of the stream: we
want the hash function to be one-to-one on $S$. Choose a hash
function $h$ at random from pairwise independent hash family mapping
$[n]$ to $[r]$, for $r$ to be specified later -- this requires just
$O(\log n)$ bits to specify. We only expect $O(\s^2/r)$ pairs to
collide under $h$, which means that with constant probability there will be
$O(\s^2/r)$ collisions if $h$ is chosen as specified. 
The final hash function $h^*$ is specified by writing
down $h$ (which takes only $O(\log n)$ bits), followed by the items
involved in a collision and some special locations for them. 
The total (expected) bit length to specify this hash function is $O(\s^2 \log (n)/r)$.  

In our online \fk\ \ascheme, $P$ will send such an $h$ at the start of the stream. Notice $h$ does not depend on the stream itself -- it is just a random pairwise independent hash function -- so $P$ is not using prescience.
$P$ also has no incentive not to choose $h$ at random from a pairwise independent hash family, since the only purpose of choosing $h$ in this manner is to minimize the number of collisions
under $h$. If $P$ chooses $h$ in a different way, $P$ simply risks that there are too many collisions under $h$, causing $V$ to reject.

Now while $V$ observes the stream, she runs the online
sum-check scheme for $F_k$ given in Proposition
\ref{prop:dense} on the mapped-down universe of size $r$, using $h$ as the mapping-down function. 
At the end of the stream, $P$ is asked to retroactively specify a hash function $h^*$ that is one-to-one on $S$ as follows.
$P$ provides a list $L_0$ of all items in $S$ that were involved in a collision under $h$, accompanied by their frequencies.
Assuming that these items and their frequencies are honestly specified by $P$, $V$ can compute their contribution to \fk\ and \emph{remove them} from the stream. By design,
$h^*$ is then (claimed to be) 
injective on the remaining items. $V$ can confirm this
tentatively using the \inject\ \ascheme\ of Lemma \ref{lemma:injection}. 

The remainder of the \ascheme\ is devoted to making the correctness a
certainty by ensuring that the items in $L_0$ and their frequencies are as claimed (we stress that while our exposition of the \ascheme\ is modular,
all parts of the \ascheme\ are executed in parallel, with no communication ever occurring from $V$ to $P$).
A naive approach to checking the frequencies of the items in $L_0$ would be to run $|L_0|$ independent \pq\ \aschemes, one for each item in $L$; however there are too many items in $L_0$ for this to
be cost-effective. Instead, we check all of the frequencies as a batch, with a (sub-)\ascheme\ whose cost is roughly equal to that of a single \inject\ query.

This (sub-)\ascheme\ can be understood as proceeding in stages, with
each stage $i$ using a different pairwise independent hash function
$h_i$ to map down the full original input. Say that
an item $j$ is \emph{isolated} by $h_i$ if $j$ is not involved in a
collision under $h_i$ with any other item with non-zero frequency in the original data stream $\bx$.
The goal of stage $i$ is to isolate a large fraction of items which were not isolated by any previous stage.

A key technical insight is that at each stage $i$, it is possible for $V$ to ``ignore'' all items that are not isolated at that stage.
This enables $V$ to check that the frequencies of all items 
that \emph{are} isolated at stage $i$ are as claimed.
We bound the number of stages that are required to isolate all items if $P$ behaves as prescribed -- if $P$ reaches an excessive
number of stages, then $V$ will simply reject. 



\subsection{Details of the \aScheme}

\begin{proofof}[Proof of Theorem \ref{thm:of2}:]
Let $r=\s c_v^{1/2}$. $P$ sends a hash function $h: [n] \rightarrow [r]$ at the start of the stream, claimed to be chosen at random from a pairwise independent hash family. 
While observing the stream, $V$ runs the dense online
sum-check scheme for $F_k$ given in Proposition
\ref{prop:dense} on the mapped-down universe $[r]$. 
Let $S$ be the set of items with non-zero frequency at the end of the stream. After the stream is observed, 
$P$ is asked to provide a list $L_0$ of all items with nonzero frequency that were involved in a collision, followed by a claimed frequency $f_i^*$ for each $i \in L_0$.

Assuming that these items and their frequencies are honestly specified
in $L_0$ by $P$, $V$ can compute their contribution $C_0=\sum_{i \in L_0}
f_i^*$ to \fk\ and then remove them from the stream by processing
updates $U=\{(i, -f_i^*): i \in L_0\}$ within the dense $F_k$
\ascheme. $h$ is injective on the remaining items.
$V$ can confirm this using the \inject\ \ascheme\ of Lemma
\ref{lemma:injection} (conditioned on the assumed correctness of $L_0$). 
Thus the dense $F_k$ \ascheme\ will output $C_1 = \sum_{i \not\in L_0} f_i^k$. Assuming all of $V$'s checks within the dense $F_k$ \ascheme\ pass, $V$ outputs $C_0 + C_1$ as the answer.
  
The remainder of the \ascheme\ is directed towards
determining that the frequency of items in $L_0$ are correctly
reported. We abstract this goal as the following problem.

\begin{definition} Define the $\ell$-\multiindex\ problem as follows. Consider a data stream $\bx \circ L_0$, where $\circ$ denotes concatenation. 
$\bx$ is a usual data stream in the strict turnstile model, while $L_0$ is a list of $\ell$ pairs $(i, f_i^*)$. Let $f$ be the frequency vector of $\bx$. 
The desired output  is 1 if $f_i=f_i^*$ for all $i \in L_0$, and 0 otherwise.
\end{definition}

We defer our solution to the $\ell$-\multiindex\ problem
to Section~\ref{sec:multiindex}. For now, we state our main result
about the problem in the following lemma.

\begin{lemma} \label{lemma:multiindex} For all $c_v>1$, 
$\ell$-\multiindex has an online 
$(\s c_v^{-1/2} \log n\, \log_{c_v} \ell,~ c_v \log n \log_{c_v} \ell)$-scheme in the strict turnstile update model.
\end{lemma}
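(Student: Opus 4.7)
The plan is to solve $\ell$-\multiindex\ in $T = O(\log_{c_v} \ell)$ parallel stages, with each stage amounting to a single dense sum-check of Proposition~\ref{prop:dense} on a polynomial that simultaneously enforces bucket purity and bucket contents via the moment identity underlying Lemma~\ref{lemma:injection}.

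At the start of the stream, for every stage $i \in [T]$, $P$ commits to a pairwise independent hash $h_i : [n] \to [r_i]$ with $r_i = \Theta(\s c_v^{1/2})$; the initial annotation is $O(T \log n)$ and $V$ stores each $h_i$ in $O(\log n)$ bits. Call $j \in L_0$ \emph{isolated at stage $i$} if $j$ is the only item with non-zero frequency in $\bx$ whose hash under $h_i$ equals $h_i(j)$. Once both $\bx$ and $L_0$ have been read, $P$ is required to partition $L_0$ into lists $I_1, \ldots, I_T$ so that every $j \in I_i$ is isolated at stage $i$ and is annotated there with its claimed frequency $F_{h_i(j)}$. For each stage in parallel, while reading $\bx$, $V$ maintains the dense-sum-check state needed to verify the scalar identity
\[
\Phi_i \;=\; \sum_{b \in [r_i]} p_b \cdot \Bigl[(u_b - F_b)^2 + (v_b - F_b J_b)^2 + (w_b - F_b J_b^2)^2\Bigr] \;=\; 0,
\]
where $u_b, v_b, w_b$ are the three moment vectors from Lemma~\ref{lemma:injection} applied to the bucketed stream $((j, h_i(j)), \delta)$, and $p, F, J$ are the sparse vectors induced by $I_i$: $p_b = 1$, $F_b$, $J_b$ on a claimed bucket $b$, and zero elsewhere. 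Because the summand is a polynomial of constant total degree in six vectors, Proposition~\ref{prop:dense} with parameters $(\anncost, c_v) = (\s c_v^{-1/2}, c_v)$ verifies $\Phi_i = 0$ with annotation $O(\s c_v^{-1/2} \log n)$ and verifier space $O(c_v \log n)$ per stage: $V$ evaluates $\tilde u, \tilde v, \tilde w$ along a random line while reading $\bx$, and after the stream it applies a single $O(\log n)$-bit Lagrange update to the already-maintained $\tilde p(\rho, y), \tilde F(\rho, y), \tilde J(\rho, y)$ per element of $I_i$.

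Soundness follows from the Cauchy--Schwarz identity used in the proof of Lemma~\ref{lemma:injection}: in the strict-turnstile model, $v_b^2 \le u_b w_b$ with equality iff bucket $b$ contains at most one item of positive frequency. If all three squared residuals vanish on a claimed bucket, the equality case is forced, so the bucket is pure and its unique resident must be the item $J_b$ with frequency $F_b$, certifying the corresponding claim in $L_0$. Completeness uses pairwise independence: the expected number of elements of $R_{i-1} := L_0 \setminus \bigcup_{i' < i} I_{i'}$ that fail to be isolated under $h_i$ is at most $|R_{i-1}| \cdot \s / r_i = |R_{i-1}|/c_v^{1/2}$, so by Markov's inequality $|R_i|$ shrinks geometrically by roughly $c_v^{1/2}$ per stage, and $T = O(\log_{c_v} \ell)$ stages drive the remaining list to empty; a standard higher-moment or repetition-based amplification suppresses the per-stage failure probability below $1/(3T)$ so that a union bound gives overall success probability $\ge 2/3$.

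The main obstacle is that $V$ has room for neither the $r_i$-dimensional moment vectors $u, v, w$ nor one point-query sub-scheme per element of $L_0$, so the verification must be batched. The resolution is exactly the observation that all of stage~$i$'s claims collapse into a single constant-degree polynomial identity $\Phi_i = 0$, which Proposition~\ref{prop:dense} handles in a single pass at the target costs; this is the analogue, for the multi-point query verification problem, of the $u, v, w$ trick used by Lemma~\ref{lemma:injection} to batch-verify global purity of all buckets. Summing per-stage costs over the $T$ stages (with the $O(|L_0| \log n)$ bits needed to transmit the partition $I_1, \ldots, I_T$ absorbed by these bounds in the regime $\ell = O(\s c_v^{-1/2})$ under which \multiindex\ is invoked by the \fk\ scheme of Theorem~\ref{thm:of2}) yields the claimed online $(\s c_v^{-1/2} \log n \log_{c_v} \ell,\; c_v \log n \log_{c_v} \ell)$-scheme.
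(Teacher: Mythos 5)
Your scheme has the same architecture as the paper's proof: $P$ commits to $O(\log_{c_v}\ell)$ pairwise-independent hash functions before the stream, the unverified items form an exception list that shrinks by a factor of roughly $c_v^{1/2}$ per stage via the same collision-counting/Markov argument, each stage is verified by a constant-degree sum-check (Proposition~\ref{prop:dense}) over moment vectors of the bucketed stream at per-stage cost $(\s c_v^{-1/2}\log n,\, c_v\log n)$, and the partition of $L_0$ into stages is conveyed as per-item labels so $V$ never stores it. The one real difference is how the per-stage check is packaged: the paper runs two separate sub-schemes, \subinj (purity of the claimed buckets, via $\sum_b z_b v_b^2=\sum_b z_b u_b w_b$) and \subftwo on $f-f^*$ (vanishing of the frequency residual in those buckets), whereas you fold everything into the single identity $u_b=F_b$, $v_b=F_bJ_b$, $w_b=F_bJ_b^2$, which certifies in one shot that bucket $b$ is a point mass of weight $F_b$ at item $J_b$. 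Your encoding is arguably cleaner---you do not even need Cauchy--Schwarz, since $w_b-2J_bv_b+J_b^2u_b=\sum_j f_{(j,b)}(j-J_b)^2=0$ already forces purity under non-negative frequencies---and it additionally pins down the item identity, which the paper's version leaves implicit. Both encodings share the same unaddressed edge case: a cheating prover who routes two claimed items $j_1,j_2$ into one bucket at the same stage degenerates the bucket-level check (your $J_b$ becomes $j_1+j_2$; in the paper's version the residuals of $j_1$ and $j_2$ can cancel inside \subftwo), so strictly one should also enforce at most one claimed item per claimed bucket per stage, e.g.\ by appending the degree-two check $\sum_b p_b(p_b-1)=0$ to your identity at no asymptotic cost. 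Your completeness remark about amplifying per-stage success is slightly roundabout---the paper's single application of Markov to the expected number of survivors after $t$ stages suffices---but this does not affect correctness.
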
 

\medskip \noindent
\textbf{Analysis of Costs.} 
Let $S$ be the set of items with non-zero frequency when the stream ends. First, we argue that if $r$ is the size of the mapped-down universe,
and $P$ chooses the hash function $h$ at random from a pairwise independent hash family, then with probability $9/10$, there will be
at most $10\s^2/r$  items in $S$ that collide under $g$. Indeed, by a union bound, the probability any item $i$ with non-zero count is involved in a collision is at most $\s/r$,
and hence by linearity of expectation, the expected number of items involved in a collision is at most $\s^2/r$.

So by Markov's inequality, with probability at least 9/10, the total
number of items involved in a collision will be at most $10\s^2/r=O(\s c_v^{-1/2})$ under the setting $r=\s c_v^{1/2}$.
Conditioned on this event, $P$ can specify the list $L_0$ and the associated frequencies with annotation length $O(\s c_v^{-1/2}\log n)$, and 
$V$ can use the \multiindex\ \ascheme\ of Lemma \ref{lemma:multiindex} with $\ell=O(\s c_v^{-1/2})$ to verify the frequencies of the items in $L_0$ are as claimed.
For any $c_v>1$, Lemma \ref{lemma:multiindex} under this setting of $\ell$ yields an $(\s c_v^{-1} \log n \cdot \log_{c_v} \ell,c_v \log n \cdot \log_{c_v} \ell)$-scheme.

Running all of the sum-check \aschemes\ (i.e., the \inject\ \ascheme\ and the
\fk\ \ascheme\ itself) on the mapped-down universe requires annotation
$O(k^2r c_v^{-1} \log r)$ and space $O(kc_v\log r)$ for $V$; 
in total, this provides an \oscheme{(\s^2\log n/r + k^2r\log n/c_v + k\s c_v^{-1} \log n \cdot \log_{c_v} \s,c_v \log n \cdot \log_{c_v} M)}.

 Since we set $r=\s c_v^{1/2}$, we obtain a \oscheme{(k^2\s c_v^{-1/2}\log_{c_v}(\s), kc_v\log n\log_v(\s))} for
any $c_v > 1$.

The lower bound stated in Theorem \ref{thm:of2} follows from Theorem \ref{thm:lb} and an easy reduction from the $(\s, n)$-sparse \idx\ problem.
\end{proofof}


\subsection{A Scheme for MultiIndex: Proof of Lemma \ref{lemma:multiindex}}
\label{sec:multiindex}
Before presenting an efficient online \ascheme\  for the $\ell$-\multiindex\ Problem, we 
define two ``sub''problems, which apply a function to only a subset 
of the desired input. 

\begin{definition} 
Define the problem \subinj as follows. 
We observe a stream of tuples $t_i = (x_i, b_i, \delta_i)$,  followed by a vector $z \in \{0, 1\}^r$. 
As in the \inject\ problem, each $t_i$ indicates that $\delta_i$ copies of item $x_i$ are placed in
bucket $b_i \in [r]$.

We say that the stream defines a {\em subinjection}
based on $z$ if for every $b$ such that $z_b\geq 1$, for every two pairs $(x, b)$ and $(y, b)$ with positive counts, it holds that $x=y$. 
The \subinj problem is to decide whether the stream defines a
subinjection based on $z$.
\end{definition}

\noindent Notice that the \inject\ problem is a special case of the \subinj\ problem with $z_i=1$ for all $i$.

\begin{lemma} \label{lemma:strict} For any $\anncost c_v \geq r$, there is an online
  $(\anncost \log r,c_v \log r)$-\ascheme\ for 
\subinj\ in the strict turnstile update model. Moreover, for any constant $c > 0$, this scheme can be instantiated to have soundness error $1/r^c$.
\end{lemma}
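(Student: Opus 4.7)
The plan is to adapt the \inject\ \ascheme\ of Lemma~\ref{lemma:injection} almost verbatim, treating the vector $z$ as a fourth $r$-dimensional frequency vector and weighting each bucket's purity check by $z_b$. The key structural observation from that proof carries over unchanged: defining
$u_b=\sum_{j\in[n]} f_{(j,b)}$, $v_b=\sum_{j\in[n]} f_{(j,b)}\,j$, $w_b=\sum_{j\in[n]} f_{(j,b)}\,j^2$ exactly as before, the Cauchy--Schwarz argument shows $u_b w_b - v_b^2 = 0$ when bucket $b$ is pure and $u_b w_b - v_b^2 > 0$ otherwise (using that all frequencies are non-negative in the strict turnstile model). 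Since $z_b\in\{0,1\}$ and each summand is non-negative, the stream defines a subinjection based on $z$ if and only if
\[
  F \;:=\; \sum_{b\in[r]} z_b\bigl(u_b w_b - v_b^2\bigr) \;=\; 0.
\]

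First I would specify how $V$ builds the four frequency vectors in a streaming manner. Each tuple $t_i=(x_i,b_i,\delta_i)$ in the first phase is interpreted as three simultaneous updates, one each to $u$, $v$, $w$ (in coordinate $b_i$, with increments $\delta_i$, $\delta_i x_i$, $\delta_i x_i^2$ respectively), just as in the proof of Lemma~\ref{lemma:injection}. In the second phase, the stream specifying $z$ is simply interpreted as $r$ updates to a fourth $r$-dimensional frequency vector. This fits directly into the framework of Proposition~\ref{prop:dense}, because each stream token still updates each of the four vectors independently, and the streaming evaluation of $\tilde u,\tilde v,\tilde w,\tilde z$ at the random point $(r,y)$ for all $y\in[c_v]$ proceeds exactly as described there.

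Next, I would invoke Proposition~\ref{prop:dense} with $\ell=4$ vectors and the polynomial $g(U,V,W,Z) = Z(UW - V^2)$ of total degree $d=3$, asking $P$ to prove that $F=0$. This immediately yields an online \ascheme\ of the claimed shape $(O(\anncost\log r),\,O(c_v\log r))$ for any $\anncost c_v \geq r$, since $\ell$ and $d$ are both constants. Completeness is perfect, and soundness follows from the Schwartz--Zippel bound of $d(\anncost-1)/q$ applied in Proposition~\ref{prop:dense}: if $P$ ever lies about $F$ or about the polynomial $b(X)$, the final sum-check test passes only with probability at most $3(\anncost-1)/q$.

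Finally, to drive the soundness error down to $1/r^c$ for any desired constant $c>0$, I would pick the underlying prime $q$ used inside the sum-check to satisfy $q \geq r^{c+O(1)}$. This multiplies $\log q$, and hence the per-word cost, by only a constant factor depending on $c$, so the asymptotic bounds in the lemma statement are unaffected. The only mild obstacle is the bookkeeping in verifying that $V$'s streaming evaluation of the four vectors at the random point still runs in $O(c_v)$ words when the prime is enlarged and when the $z$-phase is interleaved as a tail of the stream, but both are immediate consequences of the corresponding arguments already established for Proposition~\ref{prop:dense} and Lemma~\ref{lemma:injection}.
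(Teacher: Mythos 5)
Your proposal is correct and follows essentially the same route as the paper: both reduce subinjection to the vanishing of $\sum_{b\in[r]} z_b\bigl(u_b w_b - v_b^2\bigr)$ via the Cauchy--Schwarz purity characterization from Lemma~\ref{lemma:injection}, and both discharge the computation to Proposition~\ref{prop:dense} over a field of size $\poly(r)$ to get soundness $1/r^c$. The only (immaterial) difference is that you run a single sum-check on the $4$-variate degree-$3$ polynomial $Z(UW-V^2)$, whereas the paper computes $\sum_b z_b v_b^2$ and $\sum_b z_b u_b w_b$ as two separate instances and compares them.
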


\begin{proof}
Define vectors $u$, $c_v$, and $w$ exactly as in the proof of Lemma \ref{lemma:injection}, 
and observe that the stream defines a sub-injection if and only if 
$\sum_{b \in [r]} z_b v_b^2 = \sum_{b \in [r]} z_b u_b w_b$. 
$V$ can compute both quantities using the dense \ascheme\ of Proposition \ref{prop:dense}, with the same asymptotic costs as the \ascheme\
of Lemma \ref{lemma:injection}. The soundness error can be made smaller than $1/r^c$ for any constant $c$ by running the \ascheme\ of Proposition \ref{prop:dense}
over a finite field of size $\poly(r)$, for a sufficiently fast-growing polynomial in $r$.
\end{proof}

We similarly define the problem \subftwo\ over a data universe of size $n$ based on a vector $z \in \{0, 1\}^n$ as $\sum_{i \in [n]} z_i f_i^2$,
the sum of squared frequencies of items indicated by $z$.
This too is a low-degree polynomial function of the input values, and
so Proposition \ref{prop:dense} implies \subftwo\ can be computed by an online 
$(\anncost \log r,c_v \log r)$-scheme in the general turnstile update model for any $\anncost,
c_v$ such that $\anncost c_v \geq r$ (and the soundness error in this protocol can be made smaller than $1/r^c$ for any desired constant $c$).

\medskip \noindent
\textbf{Online scheme for $\ell$-\multiindex.} The \ascheme\ can be thought of as proceeding in $t$ stages ($t$ will be specified later),
although these stages merely serve to partition the annotation: there
is no communication from $V$ to $P$ during these stages. 
Each stage $j$ makes use of a corresponding hash function
$h_j : [n] \rightarrow [r]$ for $r=\s c_v^{1/2}$. 
The $t$ hash functions are 
provided by $P$ at the start of the stream, so that $V$ has access
to them throughout the stream. 
Each $h_j$ is claimed to be chosen at random from a pairwise
independent hash family:  if they are, 
then there are unlikely to be too many collisions, 
so $P$ has no incentive not to choose $h_j$ at random.
Let $f$ denote the vector of frequencies defined by the input
stream, 
and let $f^{(0)}$ denote the vector satisfying $f^{(0)}_i=f_i$ for $i \in L_0$, and $f^{(0)}_i=0$ for $i \not\in L_0$.

Stage $j$ begins with a list $L_{j-1}$ of items. We will refer to these items as ``exceptions''.
$P$ provides a new list $L_j \subseteq L_{j-1}$ of items which
remain exceptions in stage $j$; $P$ implicitly claims that no items in $L_{j-1} \setminus L_j$ collide with
some other input items under hash function $h_j$. 
Let $z^{(j)}$ denote the indicator vector of the list of buckets
corresponding to $L_{j-1} \setminus L_j$, i.e. $z^{(j)}_{h_j(i)}=1$ if
$i \in L_{j-1} \setminus L_j$, and $z^{(j)}$ entries are $0$ otherwise. 
To check that no items in $L_{j-1} \setminus L_j$ collide under $h_j$, $V$ will use the  \subinj\ \ascheme\ 
 based on the indicator vector $z^{(j)}$ over the full original input $f$ as mapped by the hash function $h_j$. 
 Note that since the original input stream is in the strict turnstile update model, so is the stream on which the 
 \subinj\ \ascheme\ is run (as the \subinj\ \ascheme\ is simply run on the original input stream as mapped by the hash function $h_j$,
 based on the vector $z^{(j)}$).
 Note also that $L_{j-1}$ and $L_j$ are provided explicitly, so $V$ can
compute $z^{(j)}$ easily.\footnote{For example, $V$ can add one to the
  corresponding entry of $z^{(j)}$ for each item that is marked as an
  exception.  This will cause $z^{(j)}$ to count the number of exceptions
  in each bucket, rather than indicate them, but this does not affect
  the correctness.} 
  
Having established that the items in $L_{j-1} \setminus L_j$ are no longer
exceptions, $V$ also wants to ensure that the frequencies of these items were
reported correctly in $L_0$. 
To do so, $V$ 
run the 
\subftwo\ \ascheme\ over the vector $f - f^*$ as mapped by
$h_j$ to $r$ buckets, based on the $z^{(j)}$ indicator vector. 
The result is zero if and only if $f_i = f^{(j)}_i$ for all $i$
where $z^{(j)}_i=1$.


The stages continue until $L_j = \emptyset$, and there are no more
exceptions. 
Provided all \aschemes\ conclude correctly, and the number of stages to
reach $L_j = \emptyset$ is at most $t$, $V$ can accept the result, and
output 1 for the answer to the \multiindex\ decision problem. 

Lastly, note that $V$ does not need to explicitly store any of the
lists $L_j$. In fact, $P$ can implicitly specify all of the lists $L_j$ while playing the list $L_0$:
for each item $i \in L_0$, he provides a number $j$, thereby implicitly claiming that 
$i \in L_{j'}$ for $j' \leq j$, and $i \not\in L_{j'}$ for $j' > j$.



\medskip 
\noindent \textbf{Analysis of costs.}
If $h_j$ is chosen at random from a pairwise independent hash family,
the probability an item $i$ in $L_{j-1}$ is involved in a collision with the original stream $f$ under $h_j$
is  $O(\s/r)=O(c_v^{-1/2})$. 
Consider the probability that any item $i$ survives as an exception to
stage $t$. 
The probability of this is  $O(c_v^{-t/2})$, and summed over all $\ell$ items,
the expected number is 
$O(\ell c_v^{-t/2})$. 
Invoking Markov's inequality, with constant probability it suffices to
set $t = O(\log_{c_v} \ell )$ to ensure that we need at most $t$ stages  before no more exceptions need to be reported.

In stage $j$, the \subinj\ and \subftwo\ schemes cost
$(\s c_v^{-1/2} \log n,c_v \log n)$. 
 Summing over the $t$ stages, we achieve for any $c_v > 1$ an 
$(\s c_v^{-1/2} \log(n) \cdot \log_{c_v}(\s),c_v \log(n) \cdot \log_{c_v}(\s))$-\ascheme\ as claimed in the statement of Lemma \ref{lemma:multiindex}.
 
 \medskip
 \noindent \textbf{Formal Proof of Soundness.}
The soundness error of the protocol can be bounded by the probability any invocation of the \subinj\ \ascheme\ or the \subftwo\ \ascheme\ 
returns an incorrect answer. The soundness errors of both the \subinj\ \ascheme\ and the \subftwo\ \ascheme\ can be made smaller than $\frac{1}{r^c}$ for any constant $c>0$,
and therefore a union bound over all $t = O(\log_{c_v} \ell )$ invocations of each protocol implies that with high probability, no invocation of either \ascheme\ returns an incorrect answer. 

\subsection{Implications of the Online Scheme for Frequency Moments}

Our online scheme for \fk\ in Theorem \ref{thm:of2} has a number of important consequences.

\medskip \noindent \textbf{Inner Product and Hamming Distance.} Chakrabarti \etal\ \cite{icalp} 
point out that computing inner products and Hamming
Distance can be directly reduced to (exact) computation of the second
Frequency Moment $F_2$,
and so Theorems \ref{thm:pf2} and \ref{thm:of2} immediately yield \aschemes\ for these problems of identical cost. 

\medskip \noindent \textbf{An improved scheme for $\phi$-\hh.} 
We can use Lemma \ref{lemma:multiindex} to yield an online scheme for the $\phi$-\hh\ problem. 

\begin{corollary}
\label{cor:hh2}
For all $\anncost,c_v$ such that $\anncost c_v \geq \s \log n$, there is an \oscheme{(\anncost \log n \cdot \log_{c_v}(\s)+ \phi^{-1} \log n,$
$c_v \log n \log_{c_v}(\s))} for solving $\phi$-\hh\ in the strict turnstile update model. 
\end{corollary}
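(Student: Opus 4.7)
The plan is to combine the reduction from $\phi$-\hh\ to frequency verification that was already used in Corollary~\ref{cor:hh} with the improved, batch-style \multiindex\ \ascheme\ of Lemma~\ref{lemma:multiindex}, instead of naively running $\phi^{-1}$ independent \pq\ protocols in parallel.

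First I would invoke the reduction from \cite[Theorem 6.1]{icalp} used in Corollary~\ref{cor:hh}: by expanding each update $(i,\delta)$ into updates over the $O(\log n)$ dyadic ranges containing $i$, we obtain a derived stream of sparsity $\s' = O(\s \log n)$, and the $\phi$-\hh\ problem reduces to demonstrating the correct frequencies of a set $L$ of $O(\phi^{-1})$ items in this derived stream. The verifier simulates updates to the derived stream on the fly; she does not need to store it, only to feed the derived updates into the \multiindex\ sub-protocol.

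Next, after the stream, $P$ transmits the list $L$ together with the claimed frequency of each of its $O(\phi^{-1})$ items. This costs $O(\phi^{-1} \log n)$ bits of annotation and is the source of the additive $\phi^{-1} \log n$ term in the hcost. The verifier then invokes a single instance of the $\ell$-\multiindex\ \ascheme\ of Lemma~\ref{lemma:multiindex} on the derived stream, with $\ell = O(\phi^{-1})$, to confirm all the claimed frequencies in one batch. If this check accepts, $V$ outputs the subset of $L$ corresponding to items whose claimed frequency is at least $\phi N$; otherwise she outputs $\fail$. Completeness and soundness follow directly: completeness from that of the underlying reduction together with Lemma~\ref{lemma:multiindex}, and soundness because any deviation by $P$ in either the contents of $L$ or the claimed frequencies translates into at least one incorrect entry of $L$, which the \multiindex\ scheme rejects (up to its error probability).

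For the cost analysis, Lemma~\ref{lemma:multiindex} applied to the derived stream (of sparsity $\s' = O(\s \log n)$) with $\ell = \phi^{-1} \le \s$ yields an $(\anncost \log n \log_{c_v} \s,\, c_v \log n \log_{c_v} \s)$-\ascheme\ whenever $\anncost c_v \geq \s \log n$ (the $\log_{c_v} \phi^{-1}$ factor is absorbed into $\log_{c_v} \s$). Adding the $O(\phi^{-1} \log n)$ bits needed to transmit $L$ and its frequencies gives exactly the stated hcost, while the vcost is unchanged. The main thing to be careful about is that the sparsity of the derived stream is $\s \log n$ rather than $\s$; this is why the parameter constraint in the statement is $\anncost c_v \geq \s \log n$ rather than $\anncost c_v \geq \s$, and it is straightforward to track since the verifier's per-stage hashing and \subinj/\subftwo\ invocations only depend on this effective sparsity through the requirement that the hash range exceed it.
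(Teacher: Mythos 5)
Your proposal is correct and follows essentially the same route as the paper: the reduction from \cite[Theorem 6.1]{icalp} to verifying the frequencies of $O(\phi^{-1})$ items in a derived stream of sparsity $O(\s \log n)$, followed by a single batch invocation of the \multiindex\ \ascheme\ of Lemma~\ref{lemma:multiindex}. The paper's own proof is just a terser version of your argument, so no further comparison is needed.
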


Corollary \ref{cor:hh2} follows from a similar analysis to Corollary \ref{cor:hh}. \cite[Theorem 6.1]{icalp} describes how to reduce $\phi$-\hh\ to demonstrating
the frequencies of $O(\phi^{-1})$ items in a derived stream. Moreover, the derived stream has sparsity $O(\s \log n)$
if the original stream has sparsity $\s$. 
We use the \multiindex\ \ascheme\ of Lemma \ref{lemma:multiindex} to verify these claimed frequencies.

\medskip \noindent \textbf{Frequency-based functions.}
Chakrabarti \etal\ \cite[Theorem 4.5]{icalp} also explain how to extend the sum-check \ascheme\ of Proposition \ref{prop:dense} to efficiently
compute arbitrary \emph{frequency-based functions}, which are functions of the form $F(\mathbf{x}) = \sum_{i \in [n]} g(f_i(\bx))$ for an arbitrary $g: (-[\length] \cup [\length]) \rightarrow \mathbb{Z}$. 
A similar but more involved extension applies in our setting, by replacing the dense \fk\ \ascheme\ implied by Proposition \ref{prop:dense} with the dense frequency-based functions \ascheme\ of \cite[Theorem 4.5]{icalp}.
We spell out the details below, restricting ourselves to the prescient case for brevity; an online \ascheme\ with essentially identical costs follows by using 
the ideas underlying Theorem \ref{thm:of2}.

\begin{corollary}
Let  $\fn(\mathbf{x}) = \sum_{i \in [n]} g(f_i(\bx))$ be a frequency-based function. Then there is a prescient $(\length^{3/4}\log n,$
$\length^{3/4}\log n)$-scheme
for computing $\fn(\mathbf{x})$ in the strict unit-update turnstile model. This scheme satisfies perfect completeness.
\end{corollary}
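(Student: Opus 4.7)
The plan is to follow the proof of Theorem~\ref{thm:pf2} verbatim, swapping in the dense frequency-based-functions \ascheme\ of Chakrabarti \etal~\cite[Theorem~4.5]{icalp} in place of the dense $F_k$ \ascheme\ from Proposition~\ref{prop:dense}. At the start of the stream, $P$ sends a hash function $h:[n]\to[r]$ drawn from a $(2\s)$-perfect Fredman--Koml\'os family, claimed to be injective on the support of $\bx$; $V$ stores the $O(\s^2 \log n / r)$-bit description of $h$. As the stream arrives, $V$ runs in parallel (i) the \inject\ \ascheme\ of Lemma~\ref{lemma:injection} to verify that $h$ is injective on the set of items with positive frequency, and (ii) the dense FBF \ascheme\ on the derived stream $h(\bx)$ over the reduced universe $[r]$. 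Because $h$ is injective on the support $S$, we have $\sum_{i\in[r]} g(f_i(h(\bx))) = \sum_{i\in S} g(f_i(\bx)) + (r-\s) g(0)$, so a single additive correction of $(n-r)g(0)$ converts the FBF \ascheme's output into $\fn(\bx)$.

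Perfect completeness follows as in Theorem~\ref{thm:pf2}, since an honest $P$ can always pick a genuinely injective $h$ and both sub-\aschemes\ then accept with probability~$1$. Soundness combines the soundness of the \inject\ \ascheme\ with that of the dense FBF \ascheme\ on $h(\bx)$: if $h$ is not injective on the support, Lemma~\ref{lemma:injection} rejects with high probability, and conditional on injectivity any deceptive claim about $\fn(\bx)$ is caught by the dense FBF sum-check. The remaining task is to optimize~$r$. Treating the cost of the dense FBF \ascheme\ of~\cite[Theorem~4.5]{icalp} on the reduced universe $[r]$ with a stream of length $\length$ in its balanced three-dimensional sum-check form as $\tilde{O}((r\length)^{1/3})$ on each side, and noting that the hash-description cost is $O(\s^2/r)$, we equate the two dominant contributions $\s^2/r = (r\length)^{1/3}$, which yields $r = \Theta(\s^{3/2}/\length^{1/4})$ and total cost $\tilde{O}(\s^{1/2}\length^{1/4})$. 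In the strict unit-update model we have $\s\leq \length$, giving the stated $\tilde{O}(\length^{3/4})$ bound; the \inject\ \ascheme's cost is absorbed by choosing its internal $\anncost$ and $c_v$ parameters at the same scale.

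\textbf{Main obstacle.} The subtle point is that a general $g$ has no low-degree representation in the frequencies themselves, so the dense FBF \ascheme\ works instead with the indicator matrix $A_{iv}=\mathbf{1}[f_i=v]$; the unit-update restriction is needed precisely so that each stream update flips exactly two entries of $A$, letting $V$ streamingly maintain the low-degree extension of $A$ at a random point. One must also check that this maintenance can be driven on $h(\bx)$ by applying $h$ on the fly without re-reading the stream, and that the \inject\ verification and the FBF sum-check can proceed concurrently on the same streaming interface -- neither issue is genuinely new beyond~\cite{icalp}, but both must be verified to close the proof, and the shift from a two-dimensional to a three-dimensional sum-check layout (needed to bring the balanced cost down to a cube root rather than a square root of $r\length$) is what replaces the $\s^{2/3}$ exponent in Theorem~\ref{thm:pf2} by the $\length^{3/4}$ exponent claimed here.
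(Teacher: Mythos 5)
Your overall skeleton---prescient perfect hashing into $[r]$, verification of injectivity via the \inject\ scheme of Lemma~\ref{lemma:injection}, running a dense frequency-based-functions scheme on the mapped-down stream, and an additive $g(0)$ correction---matches the paper's proof. But there is a genuine gap in the cost analysis, and it sits exactly where you placed your ``main obstacle.'' You treat the dense FBF scheme of \cite[Theorem~4.5]{icalp} as a black box with a balanced cost of $\tilde{O}((r\length)^{1/3})$ obtained from a ``three-dimensional sum-check layout.'' No such bound is available: in the single-round annotation/MA setting, a sum-check over the indicator matrix $A_{iv}=\mathbf{1}[f_i=v]$ (which has $r\cdot F_{\max}$ entries, with $F_{\max}$ as large as $\length$) forces $\hcost\cdot\vcost\gtrsim r\,F_{\max}$, so the balanced cost is $\sqrt{r\length}$, not $(r\length)^{1/3}$; cube-root tradeoffs in this literature arise only from MAMA-style protocols with a second round of prover messages. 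With the correct square-root bound your optimization collapses: taking the worst case $\s=\Theta(\length)$ of the unit-update model, equating $\s^2/r$ with $\sqrt{r\length}$ forces $r=\Theta(\length)$ and a total cost of $\Theta(\length)$, which is trivial.

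The missing idea---which is precisely how both \cite[Theorem~4.5]{icalp} and the paper's proof get the $3/4$ exponent---is a heavy-hitter decomposition that caps $F_{\max}$ \emph{before} the sum-check is applied. The paper runs the $\phi$-\hh\ scheme of Corollary~\ref{cor:hh2} to extract and exactly verify the frequencies of the $O(\length^{3/4})$ items whose frequency exceeds $\length^{1/4}$; $V$ computes their contribution $\operatorname{cont}(H)=\sum_{i\in H}g(f_i)$ directly and subtracts them from the stream, leaving a residual stream $\mathbf{z}$ with $F_{\max}(\mathbf{z})\le\length^{1/4}$. Only then is the polynomial-agreement scheme (\cite[Theorem~4.6]{esa}) run on $h(\mathbf{z})$ over $[r]=[\length^{5/4}]$; its cost is $(F_{\max}(\mathbf{z})\,\anncost\log n,\;c_v\log n)$ for $\anncost c_v\ge r$, and it is the factor $F_{\max}(\mathbf{z})\le\length^{1/4}$---not a change of sum-check dimension---that brings both sides down to $\length^{3/4}\log n$. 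The final answer is assembled as $\operatorname{cont}(H)+\fn(h(\mathbf{z}))-|H|g(0)$. Without this decomposition step your argument does not close.
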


\begin{proof}
We use a natural modification of the frequency-based functions
\ascheme\ of \cite[Theorem 4.5]{icalp}. 
$P$ specifies a hash function $h$ at the start of the stream mapping the
universe $[n]$ into $[\length^{5/4}]$; $P$ chooses $h$ 
to be injective on the set of items that have non-zero frequency at the end of the stream. Using the perfect hash functions of Fredman and Koml\'{o}s \cite{FredmanK84}, 
$h$ can be represented with $O(\length^2/r\log n) = O(\length^{3/4}\log n)$ bits.  
$V$ stores $h$ explicitly. After the stream is observed, $P$ and $V$ run the $\phi$-\hh\ \ascheme\ of Corollary~\ref{cor:hh2},
with $\phi=\length^{-1/4}$. Using the fact that $\sum_i f_i < \length$, by setting the parameters of Corollary~\ref{cor:hh2} appropriately 
we can ensure that this part of the \ascheme\ requires annotation length $O(\length^{3/4} \log n)$ and has space cost $O(\length^{3/4} \log n)$.
This scheme also allows $V$ to determine the exact frequencies of the items in $H$,
allowing $V$ to compute $\text{cont}(H) := \sum_{i \in H} g(f_i(\bx))$, which
gives the contribution of the items in $H$ to the output $F(\mathbf{x})$. 
Moreover, whenever $V$ learns the frequency $f_i$ of an item in $i \in H$, $V$ treats this as a deletion of $f_i$ occurrences of item $i$, thereby obtaining a derived stream $\mathbf{z}$ in which all frequencies have absolute value at most $\length^{1/4}$. 

$P$ and $V$ now run the \emph{polynomial-agreement} \ascheme\ that was first presented in \cite[Theorem 4.6]{esa} on the ``mapped-down'' input $h(\mathbf{z})$ over the universe $[\length^{5/4}]$. 
For any $\anncost c_v \geq r$, the polynomial agreement \ascheme\ can achieve cost $(F_{\text{max}}(\mathbf{z}) \anncost \log n, c_v \log n)$, where $F_{\text{max}}(\mathbf{z})$ denotes $\max_i |f_i(\mathbf{z})|$,
the largest frequency in absolute value of any item.
 Setting $c_v=\length^{3/4}$ and $\anncost =\length^{1/4}$, we obtain
a \pscheme{(\length^{3/4} \log n, \length^{3/4} \log n)} as claimed.
$V$ computes the final answer as $\fn(\bx) = \text{cont}(H) + \fn(h(\mathbf{z})) - |H|g(0)$.

The final issue is that $V$ needs to verify that $h$ is actually
injective over the items that appear in $\mathbf{x}$. 
$V$ can accomplish this using
the \inject\ \ascheme\ of Lemma \ref{lemma:injection}. 
This does not affect the asymptotic costs of our \ascheme, as the
\inject\ \ascheme\ can support annotation cost $\anncost \log r$ and space
cost $c_v \log r$ for any $\anncost c_v = \Omega(\length^{5/4})$.  
\end{proof}

Finally, we provide one additional corollary, which describes a protocol that will be useful in the next section when building graph \aschemes.  

\begin{theorem} \label{thm:subset}
 Let $X,Y \subseteq [n]$ be sets with $|X| \leq |Y| \leq \s$. Then given a stream in the strict turnstile update model with elements of $X$ and $Y$ arbitrarily interleaved, 
 there is an \oscheme{(\s c_v^{-1/2} \cdot \log(n) \cdot  \log_{c_v}(\s), c_v \cdot \log(n) \cdot \log_{c_v}(\s))} for determining whether $X \subseteq Y$ for any $c_v>1$.
 \end{theorem}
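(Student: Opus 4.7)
The plan is to reduce the containment test $X \subseteq Y$ to the exact computation of the inner product $\langle f^X, f^Y \rangle = |X \cap Y|$, where $f^X, f^Y \in \{0,1\}^n$ are the indicator vectors of $X$ and $Y$ at the end of the stream. Since $X \subseteq Y$ iff $|X \cap Y| = |X|$, and $V$ can track $|X| = \sum_i f^X_i$ in $O(\log n)$ space directly from the stream, it suffices to have $P$ convince $V$ of the value of $|X \cap Y|$. This is a degree-two polynomial in the two frequency vectors, and so on a dense universe it would already be handled by Proposition~\ref{prop:dense} with $\ell = 2$ and $g(a, b) = ab$; our job is to bring the cost down from $\sqrt{n}$ to $\s c_v^{-1/2}$.

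To make the cost sublinear in $\s$, I would adapt the online universe-reduction machinery from the proof of Theorem~\ref{thm:of2}. At the start of the stream, $P$ sends a pairwise-independent hash function $h : [n] \to [r]$ with $r := \s c_v^{1/2}$; this costs $O(\log n)$ bits. While observing the stream, $V$ runs the dense online sum-check scheme of Proposition~\ref{prop:dense} on the mapped-down vectors $\tilde{f}^X, \tilde{f}^Y \in \Z^r$ defined by $\tilde{f}^Z_b = \sum_{i : h(i) = b} f^Z_i$, eventually obtaining a claimed value of $\sum_{b \in [r]} \tilde{f}^X_b \tilde{f}^Y_b$. This mapped inner product equals $|X \cap Y|$ plus the correction $\sum_{i \ne j,\, h(i) = h(j)} f^X_i f^Y_j$, which is supported only on items involved in some $h$-collision. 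After the stream, $P$ sends a list $L_0$ of all items of $X \cup Y$ that lie in such a collision, each tagged with its claimed membership in $X$ and/or $Y$; by pairwise independence of $h$ and $|X \cup Y| \le 2\s$, with constant probability $|L_0| = O(\s^2 / r) = O(\s c_v^{-1/2})$. From $L_0$ alone, $V$ computes the correction term explicitly, extracts $|X \cap Y|$, and compares it against $|X|$.

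The main obstacle, just as in the proof of Theorem~\ref{thm:of2}, is certifying that $L_0$ is honest: that no colliding item is omitted and that each listed item's $X/Y$ membership is faithfully reported. I would enforce the first requirement by running the \subinj\ scheme of Lemma~\ref{lemma:strict} on the $h$-mapped stream with indicator vector $z$ supported on the complement of $h(L_0)$, guaranteeing that no collisions occur outside $L_0$; and the second requirement by running the \multiindex\ scheme of Lemma~\ref{lemma:multiindex} in parallel on the $f^X$ and $f^Y$ streams with $\ell = |L_0|$. All of these sub-protocols run in parallel with no feedback from $V$ to $P$, and they collectively contribute only the $\log_{c_v}(\s)$ factor inherited from \multiindex. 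Summing the costs --- the $O(\log n)$ specification of $h$, the $O(\s c_v^{-1/2} \log n)$ cost of $L_0$, the $O(r c_v^{-1} \log r) = O(\s c_v^{-1/2} \log n)$ annotation and $O(c_v \log r)$ space for the dense sum-check on $[r]$, and the $O(\s c_v^{-1/2} \log n \log_{c_v} \s)$ annotation and $O(c_v \log n \log_{c_v} \s)$ space for \multiindex\ --- and substituting $r = \s c_v^{1/2}$, yields the claimed $\bigl(\s c_v^{-1/2} \log n \log_{c_v} \s,\ c_v \log n \log_{c_v} \s\bigr)$ online scheme.
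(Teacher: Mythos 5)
Your overall strategy is sound and is essentially the one the paper uses, just unrolled: the paper's proof of Theorem~\ref{thm:subset} disposes of the case $X \not\subseteq Y$ by having $P$ exhibit a witness $x \in X \setminus Y$ certified by two point queries via Theorem~\ref{thm:pointquery}, and handles $X \subseteq Y$ by citing the reduction of subset testing to frequency moments from \cite{icalp} and invoking Theorem~\ref{thm:of2} as a black box. Your reduction to the exact inner product $\langle f^X, f^Y\rangle = |X\cap Y|$, followed by a re-derivation of the universe-reduction machinery, is that same computation made explicit, and it has the pleasant feature of certifying both outcomes uniformly rather than splitting into cases.

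There is, however, a genuine soundness gap in your certification that $L_0$ is exhaustive. Running \subinj\ with $z$ supported on the complement of $h(L_0)$ only certifies that every bucket \emph{outside} $h(L_0)$ is pure. It does not certify that $L_0$ lists every item involved in a collision: a dishonest $P$ can report one item $i$ of a colliding pair (with its membership stated faithfully, so \multiindex\ passes) while omitting its partner $j$ with $h(j)=h(i)$; the bucket $h(i)$ lies in $h(L_0)$ and is therefore exempt from your purity check. The mapped-down inner product then contains the cross terms $f^X_i f^Y_j + f^X_j f^Y_i$, but your correction term, computed only over pairs inside $L_0$, misses them, so $V$ extracts a wrong value of $|X\cap Y|$ --- precisely the kind of additive error $P$ needs to fake $|X\cap Y|=|X|$ when $X \not\subseteq Y$. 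The fix is the one used in the proof of Theorem~\ref{thm:of2}: once \multiindex\ has certified the frequencies and memberships of the $L_0$ items, have $V$ \emph{subtract} them from both derived streams and run the full \inject\ scheme of Lemma~\ref{lemma:injection} on the residual stream. Writing $g^X, g^Y$ for the residual vectors, one has $\langle f^X, f^Y\rangle = \langle g^X, g^Y\rangle + \sum_{i \in L_0} f^X_i f^Y_i$; the first term is computed correctly by the mapped-down sum-check because the residual stream is certified to be an injection (and it then no longer matters whether $L_0$ was exhaustive), and the second term is computed exactly by $V$ from the verified list. With that modification your cost accounting goes through unchanged and yields the bounds claimed in the theorem.
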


 \begin{proof} If $X \not\subseteq Y$, $P$ can specify an $x \in X \setminus Y$ and prove that $x$ is indeed in $X$ and not $Y$ with two point queries using the \ascheme\
 of Theorem \ref{thm:pointquery}. For the other case, 
 Chakrabarti \etal\ show how to directly reduce the case $X \subseteq Y$ to computation of frequency moments~\cite{icalp}. The claimed costs follow from Theorem \ref{thm:of2}.
 \end{proof}
   
 \begin{table}[tb]
\centering
\begin{tabular}{r@{}l|c|c|c}
\multicolumn{2}{c|}{\aScheme\ Costs} &  Completeness & Online/Prescient & Source \\ \hline
$(|X|\log n,\, \log n)$ && Perfect &  Prescient &  \cite{icalp}\\
 $(\anncost \log n,\, c_v \log n)$&: $\anncost c_v \geq n$ & Perfect & Online & \cite{icalp}\\
$(\s \log n,\, \log n)$ && Perfect & Online &  \cite{icalp}\\
 $( \s c_v^{-1/2}\log_{c_v}(\s) \log n,\, c_v \log n \log_{c_v} \s )$ &: $c_v >1$	&Imperfect & Online & Theorem \ref{thm:subset}\\
\end{tabular}
\caption{Comparison of our {\sc Subset} \ascheme\ to prior work. Ours is the first online \ascheme\ to achieve annotation length and space usage 
that are both sublinear in $\s$ for $\s \ll \sqrt{n}$, and strictly improves over the online MA communication cost of prior protocols whenever $\s=o(n)$.}
\label{tabsubset}
\end{table}

Table~\ref{tabsubset} provides a comparison of schemes for the {\sc Subset}
problem in the dense and sparse cases.

\section{Graph Problems} \label{sec:graphs}

We now describe some applications of the techniques developed above to graph problems.
The main purpose of this section is to demonstrate that the techniques developed within the \fk\ and \kdisj\
\aschemes\ are broadly applicable to a range of settings.

We begin with several non-trivial graph \aschemes\ that are direct consequences of the Subset \ascheme\ of Theorem \ref{thm:subset}.
Recall that our definition of a scheme for a function $\fn$ requires a convincing proof of the value of $\fn(\bx)$
\emph{for all values $\fn(\bx)$}. This is stricter than the traditional definition of interactive proofs for decision problems,
which just require that if $\fn(\bx)=1$ then there is some prover that will cause the verifier to accept with high probability,
and if $\fn(\bx)=0$ there is no such prover. Here, we consider a relaxed definition
of schemes that is in the spirit of the traditional definition. We require only that a scheme $\mathcal{A}=(\help, V)$ satisfy:

\begin{enumerate}
  \item For all $\bx$ s.t. $\fn(\bx)=1$, we have $\Pr_{r_P, r_V}[\out{V}{\bx^{\rhelp},r_V} \neq 1] \le 1/3$.
  \item For all $\bx$ s.t. $\fn(\bx)=0$, $\help'=(\help_1', \help_2',
  \ldots, \help_\length') \in (\b^*)^\length $, we have
    $\Pr_{r_V}[\out{V}{\bx^{\help'},r_V} = 1] \le 1/3$.
\end{enumerate}

\begin{theorem} \label{thm:relaxed}
Under the above relaxed definition of a scheme, each of the problems \textsc{perfect-matching},
\textsc{connectivity}, and \textsc{non-bipartiteness} 
has an $(n \log n + \s c_v^{-1/2} \log n\, \log_{c_v}\s,~ c_v \log n\, \log_{c_v}\s)$-scheme
on graphs with $n$ vertices and $\s$ edges for all $c_v >1$. 
All three \aschemes\ work in the strict turnstile update model and improve over prior work if $c_v = \omega(\log^2 \s)$ and $c_v=o(\s)$. 
\end{theorem}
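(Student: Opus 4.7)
The plan is to invoke the \textsc{Subset} \ascheme of Theorem \ref{thm:subset} as a black-box checker for short witnesses of each problem. Under the relaxed scheme definition stated just above the theorem, the prover only has to convince the verifier in YES instances; for NO instances, soundness follows because no witness of the required form exists, so either the \textsc{Subset} check or the accompanying structural check must fail for every prover strategy. In each of the three cases the witness is a set of $O(n)$ edges (or vertex-labels) sent as annotation at the end of the stream, contributing the $n\log n$ summand to the hcost, and the single \textsc{Subset} invocation on $X =$ witness edges, $Y = E(G)$ contributes the $\s c_v^{-1/2}\log n\,\log_{c_v}\s$ summand and the entire vcost.

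For \textsc{perfect-matching}, the witness is a claimed matching $M$ of $n/2$ edges. The verifier uses \textsc{Subset} to check $M\subseteq E(G)$, and in parallel maintains an $O(\log n)$-bit fingerprint of the multiset of endpoints of the edges of $M$, comparing it to the fingerprint of the multiset $[n]$ (each vertex appearing exactly once). For \textsc{non-bipartiteness}, the witness is an odd cycle $C = (v_1,v_2,\dots,v_{2k+1},v_1)$, transmitted as a sequence of edges in traversal order. \textsc{Subset} verifies every edge of $C$ lies in $E(G)$; the verifier scans $C$ online with $O(\log n)$ state to check that consecutive edges share an endpoint, that the cycle closes up at $v_1$, and that its length is odd.

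For \textsc{connectivity}, the witness is a rooted spanning tree, encoded as, for each non-root vertex $v$, a parent $\pi(v)$ and a claimed BFS distance $d(v)$ to a chosen root $r$ (with $d(r)=0$). The verifier uses \textsc{Subset} to check that every claimed parent edge $\{v,\pi(v)\}$ lies in $E(G)$. It remains to confirm that the labels are self-consistent, i.e.\ that every vertex appears exactly once and $d(v) = d(\pi(v))+1$ for each non-root $v$; this rules out cycles in $\pi$, since any cycle forces contradictory distances, and therefore together with $n-1$ parent edges certifies a spanning tree. To carry this out with only $c_v\log n\,\log_{c_v}\s$ space, the prover additionally transmits the value $d(\pi(v))$ next to each triple $(v,\pi(v),d(v))$, and the verifier uses a batch multiset-equality check (a mild variant of the machinery in Lemma \ref{lemma:multiindex}) to verify that the prover-supplied $d(\pi(v))$ values really do coincide with the $d$-labels on the corresponding vertices.

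The main obstacle is this last spanning-tree consistency test: the verifier cannot afford to store the full distance vector, so the check that each announced $d(\pi(v))$ matches the true $d$-label of vertex $\pi(v)$ must be reduced to fingerprint/\textsc{Subset}-style batch checks over $O(n)$ triples, in a way that fits inside the $c_v\log n\,\log_{c_v}\s$ space budget. Once this reduction is executed, summing the costs of the one \textsc{Subset} invocation, the consistency check, and the $O(n\log n)$-bit witness transmission gives exactly the hcost $n\log n + \s c_v^{-1/2}\log n\,\log_{c_v}\s$ and vcost $c_v\log n\,\log_{c_v}\s$ claimed, uniformly in $c_v>1$, and in the strict turnstile update model (the stream of edge insertions/deletions is processed only via fingerprints and via \textsc{Subset}, both of which support this model).
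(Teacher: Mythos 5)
Your proposal is correct and follows essentially the same route as the paper: in each case the prover sends an $O(n\log n)$-bit witness (matching, spanning tree, odd cycle), membership of the witness edges in $E$ is checked with a single invocation of the \textsc{Subset} scheme of Theorem~\ref{thm:subset}, and the remaining structural conditions are verified with $O(\log n)$-space fingerprint or counting checks. The only difference is that where the paper delegates the ``$T$ is spanning'' check to prior work (\cite[Theorem 7.7]{icalp}), you sketch the parent-pointer/distance-label consistency argument directly, which is exactly the mechanism that citation supplies.
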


\begin{proof}
In the case of perfect matching, the prover can prove a perfect matching exists by 
sending a matching $\mathcal{M}$, which requires $n \log n$ bits of annotation. In order to prove $\mathcal{M}$ is a valid perfect matching,
$P$ needs to prove that every node appears in exactly one edge of $\mathcal{M}$, and that $\mathcal{M} \subseteq E$, where $E$ is the set
of edges appearing in the stream. $V$ can check the first condition by comparing a fingerprint of the nodes in $\mathcal{M}$
to a fingerprint of the set $\{1, \dots, n\}$. $V$ can check that $\mathcal{M} \subseteq E$ using Theorem \ref{thm:subset}.

In the case of connectivity, the prover demonstrates the graph is connected by specifying a spanning tree $T$. $V$ needs to check $T$ is spanning, which can be done
as in \cite[Theorem 7.7]{icalp}, and needs to check that $T \subseteq E$, which can be done using Theorem \ref{thm:subset}.

In the case of non-bipartiteness, $P$ demonstrates an odd cycle $C$. $V$ needs to check $C$ is a cycle, $C$ has an odd number of edges,
and that $C \subseteq E$. The first condition can be checked by requiring $P$ to play the edges of $C$ in the natural order. The second condition
can be checked by counting. The third condition can be checked using Theorem \ref{thm:subset}.
\end{proof}

\medskip
\noindent \textbf{Counting Triangles.}
Returning to our strict definition of a scheme, we give an online \ascheme\ for counting the number of triangles in a graph. 

\begin{table}[tb]
\centering
\begin{tabular}{r@{}l|c|c|c|}
\multicolumn{2}{c|}{\aScheme\ Costs} & Completeness & Online/Prescient & Source \\ \hline
 $(\anncost \log n,c_v \log n)$&: $\anncost c_v \geq n^3$ & Perfect & Online & \cite{icalp}\\
$(n^2\log n, \log n)$ && Perfect & Online &  \cite{icalp}\\
$(\anncost \log^2 n,c_v \log^2 n)$&: $\anncost=\s n c_v^{-1/2}$	& Imperfect & Online & Theorem \ref{thm:triangle}\\
\end{tabular}
\caption{Comparison of prior work to our \ascheme\ for counting the number of triangles in a graph with $n$ nodes and $\s$ edges.
For concreteness, notice that by setting $c_v=n$, Theorem \ref{thm:triangle} achieves a $(\s n^{1/2} \log^2 n, n \log^2 n)$-\ascheme, which improves over prior work
as long as $\s \ll n^{3/2}$.}
\label{tabtriangle}
\end{table}

\begin{theorem} \label{thm:triangle} For any $c_v > 1$, there is an \oscheme{(\anncost \log n \log \s,\, c_v \log n \log \s)}, with imperfect completeness, for counting
the number of triangles in a graph on $n$ nodes and $\s$ edges, where
$\anncost=\s n c_v^{-1/2}$. 
The \ascheme\ is valid in the strict turnstile update model.
\end{theorem}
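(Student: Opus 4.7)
The plan is to extend the universe-reduction strategy of Theorem \ref{thm:of2} to triangle counting, plugging the dense triangle-counting scheme of \cite{icalp} in as the base protocol. That dense scheme expresses $6T=\sum_{u,v,w}A_{uv}A_{vw}A_{wu}$ as a low-degree polynomial sum and, via Proposition \ref{prop:dense}, achieves cost $\anncost_1\cdot c_v=\Omega(N^3)$ over a vertex universe of size $N$. I would shrink $N$ from $n$ down to $r=\Theta\!\bigl((\s n c_v^{1/2})^{1/3}\bigr)$, which makes $r^3/c_v=\Theta(\s n c_v^{-1/2})$ and matches the target annotation length.

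At the start of the stream, the prover sends a pairwise-independent hash $h:[n]\to[r]$, describable in $O(\log n)$ bits. The verifier interprets each edge update $(uv,\delta)$ as an update $(h(u)h(v),\delta)$ in a derived graph on $[r]$, and runs the dense triangle-counting scheme on this derived graph in parallel with observing the original stream. After the stream ends, the prover retroactively declares (a) the set $L\subseteq V'$ of active vertices involved in a collision under $h$, which has expected size $O(\s^2/r)$ by pairwise independence, (b) the list of true triangles touching $L$, and (c) the final triangle count. The verifier corrects the derived triangle count by subtracting spurious triangles created by $L$-collisions and adding back the triangles truly touching $L$, using the declared information.

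Three online sub-schemes run in parallel with the dense triangle scheme to certify (a)--(c). The Injection scheme of Lemma \ref{lemma:injection}, with the retroactive-correction trick of Theorem \ref{thm:of2} applied to the stream of vertex/hash-image pairs restricted to $V'\setminus L$, certifies that $h$ is injective on $V'\setminus L$, so that triangles avoiding $L$ are exactly preserved under $h$. The Subset scheme of Theorem \ref{thm:subset} certifies that every triangle in the prover's declared list actually lies in the true edge set. A batched \multiindex\ query (Lemma \ref{lemma:multiindex}) then verifies the prover's consistency claims about $e(N(v))$ for each $v\in L$, ensuring that no triangle through $L$ is missed.

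The principal obstacle will be bounding the cost of step (c). The total number of candidate wedges through $L$ is $\sum_{v\in L}d_v^2\leq 2\s n$, since $\sum_v d_v=2\s$ and $d_v\leq n$, so a naive per-wedge listing would exceed budget. The key is that Lemma \ref{lemma:multiindex} answers $\ell$ batched queries at cost polylogarithmic in $\ell$; by grouping the wedges through $L$ into $O(c_v^{1/2})$ blocks and invoking one \multiindex\ call per block, the total verification cost for (c) stays at $O(\s n c_v^{-1/2})$ in annotation and $O(c_v)$ in space, up to $\log n\cdot\log\s$ factors. Combining this with the $O((\s^2/r)\log n)$ cost of describing $L$ and the $O(r^3/c_v)=O(\s n c_v^{-1/2})$ cost of the derived dense scheme, the choice $r=\Theta((\s n c_v^{1/2})^{1/3})$ balances all contributions and yields the claimed \oscheme{(\s n c_v^{-1/2}\log n\log \s,\;c_v\log n\log\s)} with imperfect completeness in the strict turnstile model.
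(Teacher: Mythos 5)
Your proposal takes a genuinely different route from the paper, and the route has a real gap. The paper's own proof is a two-line reduction: by \cite[Theorem 7.4]{icalp}, the number of triangles is a fixed linear combination of the first three frequency moments of a \emph{derived} stream (one token per (edge, third-vertex) pair), and that derived stream has sparsity at most $\s(n-2)$. Applying Theorem \ref{thm:of2} to this derived stream with $k\le 3$ immediately gives annotation $O(\s n c_v^{-1/2}\log n\,\log_{c_v}(\s n))$ and space $O(c_v\log n\,\log_{c_v}(\s n))$; all of the universe-reduction and collision-repair machinery is inherited from the \fk\ scheme as a black box rather than re-derived for the triangle structure. By contrast, you hash the \emph{vertex} set down to $[r]$ and try to repair the damage to the triangle count directly, which forces you to reason about collisions at the level of wedges and triangles rather than item frequencies.

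The gap is your step (c), the accounting for triangles that touch the collision set $L$. First, Lemma \ref{lemma:multiindex} certifies claimed \emph{frequencies of individual items} of a stream; it does not certify a quantity like $e(N(v))$, the number of edges spanned by the neighborhood of a collided vertex, which is what is needed to count the triangles through $v\in L$. Grouping the wedges into $O(c_v^{1/2})$ blocks does not turn \multiindex\ into such a primitive; verifying $e(N(v))$ is itself an inner-product/moment computation on a substream that your outline never sets up. Second, if instead the prover explicitly lists the triangles touching $L$ as in your item (b), a single vertex can lie on $\Theta(\s)$ triangles, so this list can have length $\Omega(\s^{3/2})$ in the worst case, which exceeds the budget $\s n c_v^{-1/2}$ whenever $\s c_v > n^2$ (e.g.\ $c_v=n$ and $\s>n$). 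Third, subtracting the \emph{spurious} triangles created by collisions requires the verifier to distinguish, within each collided bucket, genuine derived edges from collision artifacts; this in turn requires information about all original edges incident to $L$, a set whose size is again not controlled by your budget. The balancing computation $r=\Theta((\s n c_v^{1/2})^{1/3})$ for the dense scheme on the reduced universe is fine on its own, but without a workable collision-repair step the scheme as described does not achieve the claimed costs.
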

\begin{proof}
Chakrabarti \etal\ \cite[Theorem 7.4]{icalp} show how to reduce counting the number of triangles in a graph to computing the first three frequency moments of a derived stream.
The derived stream has sparsity $\s(n-2)$. Using the online scheme of Theorem \ref{thm:of2} to compute the relevant frequency moments of the derived stream yields the claimed bounds.
\end{proof}

 The \ascheme\ of Theorem \ref{thm:triangle} should be compared to the \scheme{(n^2, \log n)} from \cite[Theorem 7.2]{icalp} based on matrix multiplication, referenced in Row 2 of Table \ref{tabtriangle}
 and the \scheme{(h, v)} for any $\anncost c_v \geq n^3$ from \cite[Theorem
   7.3]{icalp}, referenced in Row 1 of Table \ref{tabtriangle}. 
To compare 
 to the former, notice that Theorem \ref{thm:triangle} yields a \scheme{(\anncost \log^2 n,c_v \log^2 n)} with $ \anncost < n^2$ as long as $\s < n\sqrt{c_v}$. 
 To compare to the latter, note that 
 in our new \ascheme,
 $\anncost c_v = \s n c_v^{1/2}$, which is less than $n^3$ as long as $c_v^{1/2} < \frac{n^2}{\s}$.
In particular, if we set $c_v=n$, then Theorem \ref{thm:triangle}
improves over both old \aschemes\ as long as $\s < n^{3/2}$.

Unfortunately, Theorem \ref{thm:triangle} does not yield a non-trivial
MA-protocol for showing no triangle exists.  
Indeed, equalizing annotation length and space usage in our new protocol occurs by
setting both quantities to $(\s n)^{2/3}$. 
But $\Omega\left(( \s n)^{2/3}\right) < \s$ only when $\s > n^2$, which is to say that the MA communication complexity of this protocol
is always larger than $\s$, a cost that can be achieved by the trivial MA protocol where Merlin is ignored and Alice
just sends her whole input to Bob. 
That is, the interest in the new protocol is that it can lower the
space usage of $V$ to less than $\s$ without drastically blowing up the message length of $P$ to $n^2$ as in the matrix-multiplication
based protocol from \cite{icalp}.


\section{Non-strict Turnstile Update Model} \label{sec:genturnstile}

All schemes in Sections \ref{sec:disj} and \ref{sec:of2} work in the strict turnstile update model. 
The reason these schemes require this update model is that they use
the \inject\ and \subinj\ schemes of Lemmata \ref{lemma:injection} and \ref{lemma:strict} as sub-routines, and these sub-routines assume the strict turnstile update model. 

In this section, we consider two ways to circumvent this issue. 
To focus the discussion, we concentrate on the online \fk\ protocol of Theorem \ref{thm:of2}. 

\subsection{An Online Scheme}
One simple method for handling streams in the non-strict turnstile update model is the following. 
We use the scheme of Theorem \ref{thm:of2}, but within the \subinj\ sub-routine, we treat deletions of items in
the input stream as {\em insertions} of items into the derived stream of
$(x_i, b_i, \delta_i)$ updates. 
This ensures that the \inject\ and \subinj\ schemes 
correctly output 1 if the derived stream is a subinjection (and the
remainder of the scheme computes the correct answer on the original stream). 
However it increases the expected number of collisions
under the universe-reduction mappings $h_i$, from $\s\cdot
|L_{i-1}|/r$ to $\footprint \cdot |L_{i-1}|/r$.
The result is that we achieve the same costs as Theorem \ref{thm:of2}, except the costs depend on to the
stream footprint $\footprint$ rather than the stream sparsity
$\s$ (see Section~\ref{sec:notation}).

\begin{corollary}  \label{cor:footprint}
For any $c_v > 1$, there is a $(k^2 \footprint c_v^{-1/2} \cdot
\log(n) \cdot \log_{c_v}(\footprint), k c_v \cdot \log(n) \cdot
\log_{c_v}(\footprint))$ online \ascheme\ for \fk\ in the non-strict turnstile update model
over a stream with footprint $\footprint$ over a universe of size $n$.
\end{corollary}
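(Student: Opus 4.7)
The plan is to show that the online $F_k$ scheme of Theorem~\ref{thm:of2} carries over essentially unchanged once we swap in a slightly modified \inject/\subinj\ sub-routine that is robust to deletions in the parent stream. The outer structure -- having $P$ send a pairwise independent hash $h:[n]\to[r]$, running the dense sum-check scheme of Proposition~\ref{prop:dense} on the mapped-down universe, then producing an exception list $L_0$ and invoking \multiindex\ to validate the reported frequencies -- does not itself require non-negativity: the sum-check scheme of Proposition~\ref{prop:dense} is already stated in the non-strict turnstile model, and the correctness of the $L_0$-based answer only uses linearity of the contribution. So the only thing to rescue is the sub-routine that certifies that $h$ is injective on the set of items that ever matter, and similarly for the hash functions $h_1,\dots,h_t$ used inside \multiindex.

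The key move is to invoke the sub-routines not on the stream of tuples $((x_i,b_i),\delta_i)$ but on $((x_i,b_i),|\delta_i|)$, i.e., treat each deletion of $x_i$ as an insertion into the auxiliary bucketed stream. This derived stream is insert-only and hence strict-turnstile, so Lemmas~\ref{lemma:injection} and~\ref{lemma:strict} apply verbatim and give the claimed soundness and completeness with the same asymptotic costs as before. What the sub-routine now certifies is stronger than what we need: it checks that for every bucket $b$, all \emph{ever-touched} items hashing to $b$ are identical, rather than just those with non-zero final frequency. Injectivity on the footprint implies injectivity on the items with non-zero final frequency, so this is still sufficient for the outer $F_k$ scheme to report the correct answer via its $L_0$ correction.

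The cost analysis then proceeds just as in Theorem~\ref{thm:of2}, but with $\s$ replaced by $\footprint$ throughout the collision accounting. For a random pairwise independent $h:[n]\to[r]$, the expected number of footprint items involved in a collision with some other footprint item is $O(\footprint^2/r)$, not $O(\s^2/r)$, since the derived stream we certify depends on all touched items. Likewise, in each stage $j$ of the \multiindex\ sub-routine, the probability that a surviving exception $i\in L_{j-1}$ collides under $h_j$ with \emph{any} item that ever appeared is $O(\footprint/r)$ instead of $O(\s/r)$. Setting $r=\footprint c_v^{1/2}$ then balances the size of $L_0$ against the communication for specifying $h$, and the number of \multiindex\ stages needed for the exception list to empty becomes $t=O(\log_{c_v}\footprint)$ by the same Markov argument.

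The only step that needs any care is verifying that the switch to $|\delta_i|$ inside the sub-routine does not compromise the semantics of the \emph{outer} scheme. This is not really an obstacle so much as a bookkeeping check: the outer dense $F_k$ sum-check, the dense sub-polynomial $\sum_b z_b v_b^2$ and $\sum_b z_b u_b w_b$ inside \subinj, and the $\subftwo$ instance all operate on their own polynomial encodings and are unaffected by our sub-routine's reinterpretation of the $\delta_i$'s. Once this is observed, assembling the costs gives an $(k^2\footprint c_v^{-1/2}\log n\,\log_{c_v}\footprint,\ kc_v\log n\,\log_{c_v}\footprint)$ online scheme as claimed.
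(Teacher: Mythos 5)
Your proposal is correct and follows essentially the same route as the paper: the paper likewise keeps the scheme of Theorem~\ref{thm:of2} intact and simply treats deletions as insertions into the derived bucketed stream inside the \inject/\subinj\ sub-routines, observing that this makes those sub-streams strict-turnstile while only inflating the collision accounting from $\s$ to $\footprint$. Your write-up is in fact somewhat more explicit than the paper's own one-paragraph justification (e.g., in noting that certifying injectivity on the footprint is stronger than, and hence implies, injectivity on the items with non-zero final frequency).
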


\subsection{An Online AMA \aScheme}
\label{sec:amastream}
In this section, we describe an AMA scheme for the \inject\
problem that works in the non-strict turnstile stream
update model i.e., the input may define a frequency vector where some
elements end with negative frequency. 
The \ascheme\ for \inject\ of Lemma \ref{lemma:injection} breaks down here, since there may be
some cases where the checks performed by the protocol indicate that
a bucket is pure, when this is not the case: cancellations of item
weights in the bucket may give the appearance of purity. 
To address this, we use public randomness, thereby yielding an AMA scheme. 
In essence, the verifier asks the prover to demonstrate the purity of
each  of the $r$ buckets via fingerprints of the bucket contents. 
However, if we allow the prover to choose the fingerprint function, $P$
could pick a function which leads to false conclusions. 
Instead, $V$ chooses the fingerprinting function using public randomness. 
The players then execute a new \inject\ protocol using
the data remapped under the fingerprint function, which is intended to
convince $V$ of the purity of the buckets. 
This then allows us to construct protocols with costs that depend on the stream sparsity $\s$
rather than the footprint $\footprint$ as in Corollary \ref{cor:footprint}.

In detail, the new AMA \ascheme\  proceeds as follows. 
Consider the \inject\ problem as defined in
Definition~\ref{def:inject}, but generalized to allow items with
arbitrary integer counts. 
Consider again a bucket $b$, and for $1 \leq j \leq \log n$
define $b^{j=\ell}$ to be the frequency vector of the subset of stream updates $(x_k, b, \delta_k)$ placing items into bucket $b$, subject
to the restriction that the $j$'th bit of $x_k$ is equal to $\ell$.
We observe the following property: if bucket 
$b$ is pure, then one of $b^{j=0}$ and $b^{j=1}$ must be the zero
vector $\mathbf{0}$, for each $j$.
Moreover, if $b$ is not pure, then there exists a $j$ such that both 
$b^{j=0}$ and $b^{j=1}$ are not  the zero vector. 

A natural way to compactly test whether these vectors are equal to
zero (probabilistically) is to use fingerprinting (discussed in Section~\ref{sec:notation}).
The verifier $V$ could do this unaided for a single bucket, but we wish
to run this test in parallel for $r$ buckets. At a high level, we achieve this as follows.
Given a stream of updates $(x_k, b, \delta_k)$, we define two vectors $z$ and $o$ of length $r\log n$, 
such that each coordinate of $z$ and $o$ corresponds to a (bucket, coordinate) pair $(b, j) \in [r] \times [\log n]$.
In more detail, we will define $z$ and $o$ such that for each bucket $b$ and coordinate $j \in [\log n]$, 
the $(b, j)$th entry of $z$ is a
fingerprint of the vector $b^{j=0}$, and the $(b, j)$th entry of $o$ is a fingerprint of the vector $b^{j=1}$. 

We choose the fingerprinting functions to satisfy two properties.
\begin{enumerate}
\item The fingerprint of the all-zeros vector $\mathbf{0}$ is always 0. This ensures
that if all buckets are pure, then the inner product of $z$ and $o$ is 0, as $z_{b, j} \cdot o_{b, j}$ is 0 for all pairs $(b, j) \in [r] \times [\log n]$.

\item If there is an impure bucket, then the inner product of $z$ and
  $o$ will be non-zero with high probability over the choice of
  fingerprint functions.  
\end{enumerate}

Therefore, in order to determine whether the stream defines an injection, it suffices to compute $\sum_{(b, j) \in [r] \times [\log n]} z_{b, j} \cdot o_{b, j}$, 
which can be computed using Proposition \ref{prop:dense} with annotation length $\anncost \log n$ and space cost $c_v \log n$ for any $\anncost \cdot c_v \geq r\log n$. 

The idea allowing us to achieve the second property is as follows. If bucket $b$ is impure, then there is at least one coordinate $j \in [\log n]$ such that $b^{j=0}$ and $b^{j=1}$ are both not equal to the all-zeros vector $\mathbf{0}$.
By basic properties of fingerprints,  this ensures that both $z_{b,
  j}$ and $o_{b,j}$ are non-zero with high probability over the choice
of fingerprint functions. 
Moreover, we choose the fingerprinting functions
in such a way that non-zero terms in the sum $\sum_{(b, j) \in [r] \times [\log n]} z_{b, j} \cdot o_{b,j}$ are unlikely to ``cancel out'' to zero.

Consequently, we can state an analog of Lemma~\ref{lemma:injection}.

\begin{lemma}
\label{lemma:geninjection}
For any $\anncost c_v \geq r \log n$, there is an \oscheme{(\anncost \log n,
  c_v\log n)} for determining whether a stream in the non-strict turnstile
model is an injection.
\end{lemma}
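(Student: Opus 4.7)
The plan is to instantiate the sketch given in the excerpt by specifying the fingerprint functions $z_{b,j}, o_{b,j}$ explicitly and reducing the injection test to deciding whether a single scalar $T$, computable via the dense sum-check scheme of Proposition~\ref{prop:dense}, equals zero. Two properties are required: (i) the fingerprint of the all-zeros vector is always zero, so pure buckets contribute nothing to $T$; and (ii) an impure bucket cannot be ``hidden'' via cancellation against other terms. Property~(i) comes for free from standard polynomial fingerprints, so the real work lies in arranging~(ii), where the public randomness is crucial. Unlike the strict-turnstile protocol of Lemma~\ref{lemma:injection}, which relied on Cauchy--Schwarz and hence on non-negativity of frequencies, the fingerprinting approach is linear and does not care about signs.

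Concretely, I would work over $\F_q$ for $q = \poly(rn)$ and, using public random bits, draw pairwise-independent $\alpha_{b,j} \in \F_q$ for each $(b,j) \in [r]\times[\log n]$ together with one independent uniform $\beta \in \F_q$; pairwise independence lets me generate all $r\log n$ of the $\alpha_{b,j}$'s from $O(\log n)$ seed bits, so the total public randomness is $O(\log n)$ bits. Define
\[
z_{b,j} := \sum_{k\,:\, b_k = b,\, (x_k)_j = 0} \delta_k\, \alpha_{b,j}^{x_k}, \qquad o_{b,j} := \sum_{k\,:\, b_k = b,\, (x_k)_j = 1} \delta_k\, \alpha_{b,j}^{x_k},
\]
and, for any fixed injection $\mathrm{idx} : [r]\times[\log n] \to [r\log n]$, set $\tilde z_{b,j} := \beta^{\mathrm{idx}(b,j)} z_{b,j}$ and $T := \sum_{(b,j)} \tilde z_{b,j}\, o_{b,j}$. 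Each stream update modifies exactly one coordinate of $\tilde z$ or of $o$ by a field element computable from $(x_k, b_k, \delta_k)$, so $\tilde z$ and $o$ can be treated as the frequency vectors of two derived streams over a universe of size $r\log n$; Proposition~\ref{prop:dense} with $\ell = 2$, total degree $d = 2$, and $g(x,y) = xy$ then computes $T$ using $O(\anncost \log n)$ annotation and $O(c_v \log n)$ space whenever $\anncost c_v \geq r\log n$. The verifier accepts iff $T = 0$.

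Perfect completeness is immediate: if the stream is an injection then for every $(b,j)$ at least one of $b^{j=0}, b^{j=1}$ is identically zero, so the corresponding fingerprint is $0$ for every $\alpha_{b,j}$ and every summand of $T$ vanishes. For soundness, suppose the stream is not an injection and fix any impure pair $(b^*, j^*)$; the polynomial $z_{b^*,j^*}$ in $\alpha_{b^*,j^*}$ has coefficients exactly equal to the entries of $b^{j^*=0}$, and similarly for $o_{b^*,j^*}$ and $b^{j^*=1}$, so both are non-zero univariate polynomials of degree $<n$. By Schwartz-Zippel and the uniform marginal of $\alpha_{b^*,j^*}$ under pairwise independence, the coefficient of $\beta^{\mathrm{idx}(b^*,j^*)}$ in $T$, viewed as a polynomial in $\beta$, is non-zero with probability at least $1 - 2n/q$; a second Schwartz-Zippel application, now in $\beta$ over a polynomial of degree $< r\log n$, yields $T \neq 0$ with probability at least $1 - r\log n/q$. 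Union-bounding against the sum-check soundness of Proposition~\ref{prop:dense} gives constant overall soundness for $q$ polynomially large. The main obstacle I anticipate is precisely this non-cancellation step: naively summing $\sum_{(b,j)} z_{b,j} o_{b,j}$ would let a cheating stream arrange cross-bucket cancellations so that an impure stream still produces $T = 0$, and the outer $\beta^{\mathrm{idx}(b,j)}$-weighting sidesteps this by converting the sum into a polynomial in $\beta$ whose coefficients are the individual per-pair products, to which Schwartz-Zippel applies.
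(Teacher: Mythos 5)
Your proof is correct and follows the same blueprint as the paper's: fingerprint the vectors $b^{j=0}$ and $b^{j=1}$ for every pair $(b,j)$, weight the fingerprints positionally so that distinct pairs cannot cancel against one another, and compute the resulting inner product of two derived vectors over the universe $[r\log n]$ via Proposition~\ref{prop:dense}. Where you differ is in the algebraic instantiation and the non-cancellation analysis. The paper draws just two uniform field elements $\alpha,\beta$, uses $\alpha$ for every zero-side fingerprint and $\beta$ for every one-side fingerprint, and achieves positional separation by the exponent offsets $\alpha^{n(b\log n + j)}\beta^{n(b\log n+j)}$; soundness is then a single bivariate Schwartz--Zippel application, using the fact that the exponent pair of each monomial uniquely decodes $(b,j,\ell,\ell')$ so the coefficient $f_\ell(b)f_{\ell'}(b)$ of a witness monomial survives. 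You instead use a pairwise-independent family to give each pair $(b,j)$ its own $\alpha_{b,j}$ (shared between the two sides of that pair, which is harmless since a product of two non-zero univariate polynomials over a field is non-zero) and a fresh $\beta$ for positional separation, and you argue in two univariate Schwartz--Zippel stages. Both fit in $O(\log n)$ public random bits; your two-stage analysis even gives a slightly sharper error bound of $O((n + r\log n)/q)$ versus the paper's $O(n^2 r\log n/q)$, though both are negligible for $q=\poly(n)$. One small slip: a single stream update $(x_k,b_k,\delta_k)$ does not modify ``exactly one coordinate of $\tilde z$ or of $o$''; it modifies $\log n$ coordinates, one for each bit position $j$ (equivalently, it expands into $\log n$ derived updates, each touching one coordinate, exactly as the paper notes). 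This does not affect the stated costs, since Proposition~\ref{prop:dense}'s bounds depend on the derived universe size $r\log n$ and not on the derived stream length, but the claim as written should be corrected.
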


\begin{proof}
Let $\mathbb{F}_q$ be a finite field of size $q=\poly(n)$, where 
the subsequent analysis determines the required magnitude of $q$. 
$V$ uses public randomness to choose two field elements $\alpha$, and $\beta$ uniformly at random from $\mathbb{F}_q$.
For each bucket $b \in [r]$, and each coordinate $j \in [\log n]$, we define two ``fingerprinting'' functions $g_{b, j, \alpha}$ and $g_{b, j, \beta}$ mapping an $n$-dimensional frequency vector $\mathbb{F}$ as follows:
$$g_{b, j, \alpha}(\bx) = \alpha^{n(b \cdot \log n + j)} \sum_{\ell \in [n]} \bx_{\ell} \alpha^{\ell},$$ and $$g_{b, j, \beta}(\bx) = \beta^{n(b \cdot \log n + j)} \sum_{\ell \in [n]} \bx_{\ell} \beta^{\ell},$$

\noindent
where each entry $\mathbf{x}_\ell$ of $\bx$ is treated as an element of $\mathbb{F}$ in the natural manner.

We now (conceptually) construct two vectors $z$ and $o$ of dimension $r \log n$,
where for each $(b, j) \in [r] \times [\log n]$,
$z_{b, j} = g_{b, j, \alpha}(b^{j=0})$ and 
$o_{b, j} = g_{b, j}(b_i^{j=1})$. 
That is, the $(b, j)$th entry of $z$ equals the fingerprint of the
frequency vector of items mapping to bucket $b$ with a 0 in the $j$th
bit of their binary representation.
Observe that $g_{b, j, \alpha}(\mathbf{0})=g_{b, j, \beta}(\mathbf{0})=0$ for all $(b, j) \in [r] \times [\log n]$, as required by Property 1 above.

We now show that Property 2 holds, i.e.  if there is an impure bucket, then the inner product of $z$ and $o$ will be non-zero with high probability over the choice of $\alpha$ and $\beta$. In
the following, for an item $\ell \in [n]$ and bucket $b \in [r]$, we let $f_{\ell}(b)$ denote the frequency with which item $\ell$ is mapped to bucket $b$, and we let $\ell_j$ denote the $j$'th bit in the binary representation of $\ell$.
We can write the inner product of $z$ and $o$ as

\begin{align*} & \sum_{(b, j) \in [r] \times [\log n]} g_{b, j, \alpha}(b^{j=0})  g_{b, j, \beta}(b^{j=1})\\
&  =  \sum_{(b, j) \in [r] \times [\log n]} \alpha^{n(b \cdot \log n + j)} \beta^{n(b \cdot \log n + j)}\left(\sum_{\ell \in [n], \ell_j = 0} f_{\ell}(b) \alpha^{\ell} \right) \left(\sum_{\ell \in [n], \ell_j = 1} f_{\ell}(b) \beta^{\ell} \right)\\
 & =  \sum_{(b, j) \in [r] \times [\log n]} \alpha^{n(b \cdot \log n + j)} \beta^{n(b \cdot \log n + j)} \sum_{ (\ell, \ell'): \ell_j=0, \ell'_j=1} f_{\ell}(b) f_{\ell'}(b) \alpha^{\ell} \beta^{\ell'}\end{align*}

We therefore see that the inner product of $z$ and $o$ is a polynomial in $\alpha$ and $\beta$ of total degree $n^2r\log n$ in each variable. 
Moreover, the coefficient of the term $\alpha^{n(b \cdot \log n + j) + \ell} \beta^{n(b \cdot \log n + j) + \ell'}$ is precisely $f_{\ell}(b) \cdot f_{\ell'}(b)$ if $\ell_j=0$ and $\ell'_j=1$, and is 0 otherwise.

Recall that if bucket $b$ is not pure, then there is at least one coordinate $j \in [\log n]$, and items $\ell, \ell' \in [n]$ with $\ell_j=0$ and $\ell'_j=1$, such that $f_{\ell}(b)\neq 0$ and $f_{\ell'}(b) \neq 0$.
The above analysis implies that $z \cdot o$ is a \emph{non-zero} polynomial in $\alpha$ and $\beta$, as the coefficient of $\alpha^{n(b \cdot \log n + j) + \ell}\beta^{n(b \cdot \log n + j) + \ell'}$ is non-zero.
Hence, by the Schwartz-Zippel lemma, the probability over a random choice of $\alpha$ and $\beta$ that $z \cdot o = 0$ is at most $n^2r\log n/q$.
Setting $q$ to be polynomial in $n$, there is only negligible probability (over the choice of $\alpha$ and $\beta$) that $z \cdot o$ is zero if the stream is not an injection.

Finally, notice that the verifier can apply the scheme of Proposition \ref{prop:dense} to compute $\sum_{(b, j) \in [r] \times [\log n]} z_{b, j} \cdot o_{b, j}$, as each stream update
$(x_k, b, \delta_k)$ can be treated as $\log n$ updates to the vectors $z$ and $o$. For example,
if the $j$th bit of $x_k$ is 0, then update $(x_k, b, \delta_k)$ causes $z_{b, j}$ to be incremented by $\delta_k \cdot \alpha^{n(b \cdot \log n + j)+x_k}$.
\end{proof}


\medskip
\noindent \textbf{Applications.} We can apply this online \ascheme\ to compute Frequency Moments (and Inner
Product, Hamming Distance, Heavy Hitters etc.) over sparse data in the
non-strict turnstile update model.  
The costs of the resulting online AMA scheme are similar
to the costs of the online schemes for the same problems developed in previous sections.
The only difference is that we have scaled $\s$ up by a $\log n$
factor, to account for the fact that within the new AMA sub-scheme for \inject, we must run the dense protocol
of Proposition \ref{prop:dense} on vectors $z$ and $o$ of length $r
\log n$, rather than on vectors of length $r$ as in prior sections,
and substitute the bounds from Lemma~\ref{lemma:geninjection}. 
For example, the analog of Theorem~\ref{thm:of2} is that 
for any $c_v > 1$, there is a $(k^2 \s c_v^{-1/2} \cdot \log^2(n)
\cdot \log_{c_v}(\s), kc_v \cdot \log(n) \cdot \log_{c_v}(\s))$ online
AMA \ascheme\ for \fk\ in the non-strict turnstile model.

\section{Conclusion} \label{sec:conclusion}

We have presented a number of protocols in the annotated data streaming model that
for the first time allows both the annotation length and the space usage of the verifier to be sublinear in the stream sparsity,
rather than just the size of the data universe. Our protocols substantially improve on the applicability of prior work in natural settings where 
data streams are defined over very large universes, such as IP packet flows and sparse graph data. 

A number of interesting questions remain for future work. The biggest open question is to determine the precise dependence on the stream sparsity
in problems such as \kdisj\ and frequency moments. When setting the annotation length and the space usage of the verifier to be equal,
our protocols have cost roughly $\s^{2/3}$, where $\s$ is the sparsity of the data stream. The best known lower bound is roughly $\s^{1/2}$. 
We conjecture that our upper bound is tight up to logarithmic factors, but proving any Merlin-Arthur communication lower bound larger than 
$\s^{1/2}$ will require new lower bound techniques in communication complexity. 
Another interesting open question is to give improved protocols for multiplying an $n \times n$ matrix $A$ by a vector $x$, when $A$ is sparse
(i.e., has $o(n^2)$ non-zero entries), 
but $x$ may be dense.
Achieving this would yield improved protocols for
proving disconnectedness, bipartiteness, or the non-existence of a perfect matching in a bipartite graph.
Currently we do not know of any protocols for these problems that leverage graph sparsity in any way.





\appendix

\section{An Online AMA Lower Bound for $(m, 2^{\sqrt{m}})$-Sparse \idx} \label{app:AMA}

We prove that the online $\tilde{O}(\sqrt{m})$ protocol for the $(m, 2^{\sqrt{m}})$-Sparse \idx\ problem is essentially optimal. 
Our lower bound follows from a natural variant of the reduction in Theorem \ref{thm:lb}. That is, we turn an online AMA protocol for the  $(m, 2^{\sqrt{m}})$-Sparse \idx\ Problem into an online 
MAMA protocol for the dense \idx\ Problem.
We then invoke a lower bound on the online MAMA communication complexity of \idx\ Problem due to Klauck and Prakash \cite{klauck}.\footnote{
Like the lower bound of Lemma \ref{lemma:denseindex}, the lower bound of Klauck and Prakash was originally proved in the communication 
model in which Merlin cannot send any message to Alice. However, the proof
easily extends to our online MA communication model (where Merlin can send a message to Alice, but that message cannot depend on Bob's input).} 

\begin{theorem} The online AMA protocol complexity of the $(m, 2^{\sqrt{m}})$-Sparse \idx\ problem is $\tilde{\Omega}(\sqrt{m})$. \end{theorem}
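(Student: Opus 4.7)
The plan is to lift the reduction of Theorem~\ref{thm:lb} to the AMA setting, at the cost of only being able to reduce to the weaker online MAMA model for dense \idx, and then invoke the corresponding online MAMA lower bound for dense \idx\ due to Klauck and Prakash~\cite{klauck}.

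Given any online AMA protocol $\cP'$ for $(m, 2^{\sqrt{m}})$-Sparse \idx\ with $\hcost(\cP') = \anncost$ and $\vcost(\cP') = c_v$, I would construct an online MAMA protocol $\cP$ for dense \idx\ on inputs of length $n' = m \log(n/m) = \Theta(m^{3/2})$ (using $n = 2^{\sqrt{m}}$) via exactly the dense-to-sparse transformation from the proof of Theorem~\ref{thm:lb}: Alice chops her length-$n'$ dense input $x$ into $m$ blocks of length $k = \log(n/m) = \Theta(\sqrt{m})$ and replaces each block by its $1$-hot indicator vector in $\{0,1\}^{2^k}$, producing a sparse input $y$ of Hamming weight $m$ over a universe of size $n = 2^{\sqrt{m}}$. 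Merlin, who sees both $x$ and Bob's dense index $\iota$, computes and transmits to Bob the matching sparse index $\ell \in [n]$, costing an extra $O(\log n) = O(\sqrt{m})$ bits. The parties then simulate $\cP'$ on $(y,\ell)$.

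The key structural observation is that, because the public-coin phase of $\cP'$ must now be interleaved with the new Merlin-to-Bob message and the original Merlin messages of $\cP'$, the induced communication pattern for dense \idx\ is an M--A--M--A pattern; that is, $\cP$ genuinely lives in the online MAMA model rather than online AMA. We then have $\hcost(\cP) \le \anncost + O(\sqrt{m})$ and $\vcost(\cP) = c_v$. Applying the Klauck--Prakash online MAMA lower bound for dense \idx, which (suitably adapted) forces $\hcost(\cP) \cdot \vcost(\cP) = \tilde\Omega(n') = \tilde\Omega(m^{3/2})$, we get $(\anncost + O(\sqrt{m})) \cdot c_v = \tilde\Omega(m^{3/2})$. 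A short case analysis finishes: if $\anncost \le \sqrt{m}$ then $c_v = \tilde\Omega(m)$, and otherwise $\anncost > \sqrt{m}$ directly; in both cases $\anncost + c_v = \tilde\Omega(\sqrt{m})$, as required.

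The main obstacle is the round-structure bookkeeping: one must verify carefully that $\cP$ really sits inside online MAMA (and not in some strictly more powerful model, where the Klauck--Prakash bound would not apply), and identify the precise form of their lower bound yielding $\hcost \cdot \vcost = \tilde\Omega(n')$ for dense \idx\ in online MAMA. Once this structural accounting is done, the remaining cost arithmetic and case analysis are routine.
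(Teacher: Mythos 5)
Your reduction is exactly the one the paper uses in Appendix~\ref{app:AMA}: block-encode the dense input, have Merlin send Bob the translated index $\ell$ for $O(\sqrt{m})$ extra bits, observe that the public-coin phase of the simulated protocol turns the whole thing into an online MAMA protocol for dense \idx\ on $n'=\Theta(m^{3/2})$ bits, and invoke Klauck--Prakash. The round-structure bookkeeping you flag as the main obstacle is handled the same way in the paper and is fine.

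There is, however, a genuine error in the form of the Klauck--Prakash bound you invoke. A product bound $\hcost(\cP)\cdot\vcost(\cP)=\tilde\Omega(n')$ cannot hold for online \emph{MAMA} protocols for dense \idx: the known balanced MAMA protocol (Aaronson--Wigderson style) achieves $\hcost=\vcost=\tilde{O}((n')^{1/3})$, whose product is $\tilde{O}((n')^{2/3})=o(n')$. The product form is the right statement for online \emph{MA} (Lemma~\ref{lemma:denseindex}), not for MAMA, and the extra ``A'' round is precisely what makes \idx\ polynomially easier. Your case analysis inherits the damage: from your (false) premise you deduce that $\anncost\le\sqrt{m}$ forces $c_v=\tilde\Omega(m)$, which contradicts the paper's own $\tilde{O}(\sqrt{m})$ online AMA upper bound for $(m,2^{\sqrt{m}})$-Sparse \idx\ from Section~\ref{sec:pointquery}. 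The correct ingredient (Lemma~7 of \cite{klauck}) gives online MAMA complexity $\Omega((n')^{1/3})=\Omega(m^{1/2})$ for dense \idx, i.e., a tradeoff of the shape $\hcost\cdot\vcost^2=\tilde\Omega(n')$ rather than $\hcost\cdot\vcost=\tilde\Omega(n')$. With that substitution your argument closes: if $\anncost\ge\sqrt{m}$ you are done outright, and if $\anncost=O(\sqrt{m})$ then $\vcost^2=\tilde\Omega(m^{3/2}/\sqrt{m})$, so $c_v=\tilde\Omega(\sqrt{m})$, matching the paper's conclusion.
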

\begin{proof}
Let $n = 2^{\sqrt{m}}$.
Assume we have an online AMA communication
protocol $\cP$ for $(m, n)$-sparse \idx\ with  $\hcost(\cP)=\Omega(\sqrt{m})$.
We describe how to use this protocol for the sparse \idx\
problem to design one for the dense \idx\ problem on vectors of length $n'=m \log(n/m) = \Omega\left(m^{3/2}\right)$.

Let $k=\log(n/m)$. As in the proof of Theorem \ref{thm:lb}, given an input $x$ to the dense \idx\ problem, Alice partitions $x$ into $n'/k$ blocks of length $k$, and constructs a vector $y$
of Hamming weight $n'/k$ over a universe of size $(n'/k) \cdot 2^k$ as follows.
She replaces each block $B_i$  with a 1-sparse vector $v_i \in \{0, 1\}^{2^k}$, where each entry of $v_i$ corresponds to one of the $2^{k}$ possible values of block $B_i$.
That is, if block $B_i$ of $x$ equals the binary representation of the number $j \in [2^k]$, then Alice replaces block $B_i$ with the vector $e_j \in \{0, 1\}^{2^k}$, where $e_j$ denotes the vector with a 1 in coordinate $j$ and 0s elsewhere.

Thus, Alice now has an $n'/k=m$-sparse derived input $y$ over a universe of size $(n'/k) \cdot 2^k=n$.
 Merlin looks at Bob's input to see what is the index $\iota$ of the dense vector $x$ that Bob is interested in. Merlin then tells Bob the index $\ell$ such that 
 $\ell= 2^k(\iota-1)+j$, where $B_i$ is the block that $\iota$ is located in, and block $B_i$ of Alice's input $x$ equals the binary representation of the number $j \in [2^k]$.
 Notice $\ell$ can be specified with $\log n = O(\sqrt{m})$ bits.

Alice and Bob's now use the assumed AMA-protocol for sparse disjointness to establish whether $y_{\ell}=1$.
If they are convinced of this, then Bob can deduce the value of \emph{all} the bits in block $B_i$
 of the original dense vector $x$, and in particular, the value of $x_{\iota}$.
  
  This yields an MAMA protocol for the dense \idx\ problem on $n'=\Omega(m^{3/2})$ bits. A lower bound of  Klauck and Prakash \cite[Lemma 7]{klauck} 
  implies that the online MAMA complexity of this problem is $\Omega((n')^{1/3}) = \Omega(m^{1/2})$.
   Notice also that the
  total $\hcost$ of our MAMA protocol is $O(\sqrt{m} + \hcost(\cP)) = O(\hcost(\cP))$, while the $\vcost$ is $O(\vcost(\cP))$. 
  Thus, if $\hcost(\cP)=\Omega(\sqrt{m})$, it must be the case that $\vcost$ is $\Omega(\sqrt{m})$ as well. 
  This trivially implies that for any protocol $\cP$ with $\hcost$ \emph{less} than $\sqrt{m}$, $\vcost(\cP)$ must be $\Omega(\sqrt{m})$. 
We conclude the online AMA communication complexity of the problem is $\Omega(m^{1/2})$. This completes the proof.

\end{proof}

\end{document}